\documentclass[11pt]{article}
\usepackage[margin=1in]{geometry}
\usepackage{amsmath}
\usepackage{amssymb}
\usepackage{amsthm}
\usepackage{amsfonts}
\usepackage{comment}
\usepackage{mathtools}
\usepackage{bm}
\usepackage{mathrsfs}
\usepackage{graphicx}
\usepackage{placeins}
\usepackage{xcolor}
\usepackage{enumitem}
\usepackage{booktabs}
\usepackage{multirow}
\usepackage{makecell}
\usepackage{float}
\usepackage{authblk}
\usepackage{hyperref}
\hypersetup{
    colorlinks=true,
    linkcolor=blue,
    citecolor=blue,
    urlcolor=blue
}
\newtheorem{theorem}{Theorem}[section]
\newtheorem{proposition}[theorem]{Proposition}
\newtheorem{lemma}[theorem]{Lemma}
\newtheorem{corollary}[theorem]{Corollary}
\theoremstyle{definition}
\newtheorem{definition}[theorem]{Definition}

\theoremstyle{remark}
\newtheorem{remark}[theorem]{Remark}

\title{
Beyond the Skew-Stickiness Ratio: Transport Geometry of Spot-Driven Variance Surface Dynamics
}
\author[1]{Charlie Che\thanks{\texttt{charlie.che@jpmchase.com}}}
\author[2]{Pradeepta Das\thanks{\texttt{pradeepta.das@jpmchase.com}}}
\affil[1]{Quantitative Trading \& Research, JPMorganChase, New York, NY 10017, USA}
\affil[2]{Equity Derivatives Group, JPMorganChase, New York, NY 10017, USA}

\date{}
\begin{document}
\maketitle
\begin{abstract}
We develop a geometric theory of arbitrage-free dynamics for implied
variance surfaces. Total implied variance $w(k,T)=\sigma^2(k,T)T$ is taken as
the primitive state variable, since static arbitrage is fundamentally a
property of the variance surface rather than of implied volatility itself. We
formulate smile dynamics as transport flows acting on the admissible class
$\mathcal{A}$ of static-arbitrage-free surfaces: spot movements generate
infinitesimal transport vector fields, and admissible smile dynamics are
characterized as flows generated by vector fields tangent to $\mathcal{A}$.
Within this framework we introduce a hierarchy of transport laws governing
the evolution of successive spatial derivatives of $w$. The classical
skew-stickiness ratio (SSR) appears as the zeroth-order transport coefficient;
higher-order coefficients govern the dynamics of ATM skew, curvature, and
higher smile derivatives. We further extend the theory from the ATM expansion
point to arbitrary log-moneyness via a \emph{local jet transport corollary},
establishing that the transport velocity field $v(k)$ is nonparametrically
identified by the function $k_0\mapsto\hat{v}_0(k_0)$ estimated at each
expansion center. A natural convex-parabolic class of higher-order transport operators is identified and motivated; a complete characterization of admissibility-preserving higher-order operators remains open.
Under explicit Lipschitz and positivity conditions on $v$ (conditions
(V1)--(V3) and (W)), the transport flow is proved to preserve butterfly
and calendar arbitrage-freeness locally, with an explicit admissibility
radius. Classical smile regimes including sticky strike, sticky delta, and
SSR arise as special choices of the transport field, while local volatility
and rough volatility models induce transport laws or short-maturity
asymptotics representable within the same framework.

On the empirical side, we apply the sequential forward-substitution estimator
to five years of SPX implied-volatility surface data across seven tenors
from 1M to 24M. Three findings emerge. First, SPX exhibits significant
super-skew behavior, with the scalar SSR coefficient $\hat\beta$ declining
monotonically from short to long maturities and explaining most of
ATM-level surface dynamics. Second, self-similar transport is rejected at
all tenors: the joint test of $H_0:\eta=\psi=0$ rejects at every tenor
(strongly from 2M onward), and the skew-transport coefficient $\hat\eta$
changes sign between 6M and 9M, reflecting a qualitative shift in surface
dynamics across the term structure. Third, the transport velocity
$\hat{v}_0(k_0)$ varies significantly with moneyness: strike-constancy
is rejected at the intermediate tenors (2M--12M) but not at 1M or 24M,
and the profile shape evolves with maturity: U-shaped at shorter tenors
and monotonically decreasing at the longest tenors
(see Section~\ref{sec:spx-empirical} for full results). An out-of-sample
validation confirms that the full $(\beta,\eta,\psi)$ model outperforms
SSR on curvature dynamics by $17$--$21\%$ at medium tenors.
\end{abstract}

\section{Introduction}
The dynamics of option implied volatility surfaces under movements of the
underlying remain one of the central problems in quantitative finance.
Classical descriptions of smile dynamics (sticky strike, sticky delta,
local volatility, stochastic volatility, and the skew-stickiness ratio
(SSR) of Bergomi~\cite{Bergomi2004,Bergomi2005,Bergomi2009,BergomiBook}) are each formulated as
an individual rule rather than as instances of a common mathematical
structure.
General frameworks for dynamic arbitrage-free surfaces do exist:
Sch\"onbucher~\cite{Schonbucher1999} introduces a market model for the
full stochastic implied-volatility surface with no-arbitrage drift
conditions, providing the BGM analogue for volatility surfaces;
Carmona and Nadtochiy~\cite{CarmonaNadtochiy2009} characterize admissible
drifts for call-price surface processes; Schweizer and
Wissel~\cite{SchweizerWissel2008} establish absence-of-arbitrage conditions
for implied-volatility term-structure models; Carmona and
Nadtochiy~\cite{CarmonaNadtochiy2011} also develop a tangent-model framework
for dynamic smile calibration; and Durrleman~\cite{Durrleman2010} derives
explicit PDEs for how the implied-volatility surface moves under spot
displacements within a local-volatility model, in effect determining the
transport velocity $v(k)$ as a model-specific output of Dupire's formula.
What none of these frameworks provides is a \emph{transport-geometric
organization} of the full family of classical stickiness regimes: a single
mathematical object, the transport velocity field $v(k)$, whose
particular choices recover every classical model, whose Taylor jets
$v(0),\partial_k v(0),\partial_{kk}v(0),\ldots$ extend the SSR to a
systematic hierarchy of smile dynamics, and whose profile $v(k)$ is
nonparametrically identified from market data without model assumptions.
The purpose of this paper is to develop that organization, built around
three contributions.

\emph{1.~Geometric transport formulation.}
Static arbitrage is fundamentally a property of the total implied variance
$w(k,T)=\sigma^2(k,T)T$: butterfly arbitrage corresponds to positivity of
the risk-neutral density, while calendar arbitrage is expressed through
monotonicity of $w$ in maturity. The natural state space for smile dynamics
is therefore the admissible set $\mathcal{A}$ of static-arbitrage-free
variance surfaces. We formulate smile dynamics as infinitesimal transport
flows acting on $\mathcal{A}$: spot movements generate vector fields of the
form $\partial_u w = v(k,u,T)\,\partial_k w$, and the entire family of
classical smile regimes emerges as consequences of particular choices of the
velocity function $v(k)$. Sticky delta is $v\equiv 0$; sticky strike is
$v\equiv 1$; the SSR of Bergomi~\cite{Bergomi2004,Bergomi2009,BergomiBook} is $v\equiv\beta$ constant; local
volatility and rough Bergomi are further instances. Within this language,
the SSR coefficient $\beta$ is simply $v(0)$, the value of the velocity
field at the money, and the jet hierarchy $\beta = v(0)$,
$\eta=\partial_k v(0)$, $\psi=\partial_{kk}v(0)$ are the leading Taylor
coefficients of the same geometric object. These are not separate
contributions: they are what the framework automatically produces.

\emph{2.~Arbitrage-free transport theorem.}
The central mathematical question is which velocity fields $v(k)$ keep the
induced flow inside $\mathcal{A}^\circ$. Under explicit Lipschitz and
positivity conditions on $v$ (conditions (V1)--(V3)) and a boundedness
condition on the initial surface (condition (W)), the transport flow
preserves butterfly and calendar arbitrage-freeness locally for
sufficiently small spot perturbations. The conditions are verifiable and
the proof is constructive, so the theorem can be applied directly to any
parametric velocity field without a post-hoc arbitrage check. Global
arbitrage-preservation is not claimed; extending the local result to all
forward times is left for future work.

\emph{3.~Nonparametric identification of the velocity field.}
The geometric framework requires a way to read $v(k)$ from market data.
The \emph{local jet transport corollary} establishes that the transport
equation generates, at every expansion center $k_0$, an independent closed
hierarchy for the local variance jets $\{\partial_k^n w(k_0)\}$, governed
by the local velocity derivatives $\{v(k_0),\partial_k v(k_0),
\partial_{kk}v(k_0),\ldots\}$. Unlike the classical SSR, which is a scalar
at ATM only, the hierarchy holds at every $k_0$: the full profile $v(k)$
is nonparametrically identified from a time series of arbitrage-free
surfaces by scanning $k_0$ and reading off the pointwise ratio
$\partial_u w / \partial_k w$, the transport analogue of Dupire's
inversion formula. The ATM coefficients $\beta, \eta$ are the leading identifiable terms of
this nonparametric object; the empirical finding that $\hat\eta\neq 0$
(self-similar transport rejected) is a property of the data, not an
assumption of the theory. This identification procedure is validated
and applied on SPX implied volatility data spanning July 2021 to July 2026.

The remainder of the paper is organized as follows.
Sections~\ref{sec:transport-representation}--\ref{sec:self-similar-transport}
establish the geometric framework and derive the exact Bergomi ride formula
as the constant-velocity special case.
Section~\ref{sec:general-transport} develops the general velocity field, the
jet hierarchy, and the local jet transport corollary with its identification
proposition.
Section~\ref{sec:arb-preservation} proves the local arbitrage-free transport
theorem.
Section~\ref{sec:classical-special-cases} identifies classical smile-dynamics
models as special cases.
Section~\ref{sec:spx-empirical} presents empirical evidence from SPX using
the sequential estimator, and Section~\ref{sec:mc-validation} validates the
estimator on synthetic data calibrated from those empirical results.
Section~\ref{sec:extensions} collects extensions and directions for future work.

\FloatBarrier
\section{The Geometry of Arbitrage-Free Variance Surfaces}
We identify the natural mathematical object upon which smile dynamics should
act.
Classical descriptions of smile dynamics are formulated directly in terms of
implied volatility. While this is natural from a market perspective, it is not
the most fundamental mathematical representation. Static arbitrage conditions
are not naturally expressed in implied volatility but rather in total implied
variance. Consequently, we shall regard total variance as the primitive state
variable and formulate smile dynamics as transport of variance surfaces.
\subsection{Total Variance as the Fundamental State Variable}
Let $k=\log(K/F)$ denote log-moneyness and let $\sigma(k,T)$ be the Black implied volatility.
Rather than working directly with implied volatility, we define the total
implied variance
\[
w(k,T)
:=
\sigma^2(k,T)T.
\]
Throughout this paper we regard
\[
w:\mathbb R\times\mathbb R_+
\longrightarrow
\mathbb R_+
\]
as the fundamental state variable.
There are several reasons for this choice.
First, total variance possesses a natural scaling in maturity and admits
considerably simpler asymptotic representations than implied volatility.
Second, essentially every practical arbitrage-free parameterization of implied
volatility, including SVI, eSSVI, and related constructions, is naturally written
in terms of total variance rather than implied volatility itself.
Finally, and most importantly, the absence of static arbitrage is fundamentally
a property of the variance surface.
It is therefore mathematically more natural to formulate dynamics directly for
the object upon which arbitrage constraints are imposed.
\subsection{The Admissible Class of Variance Surfaces}
Not every sufficiently smooth function $w(k,T)$ corresponds to option prices.
Instead, admissible variance surfaces satisfy the familiar static arbitrage
constraints.
These include
\begin{itemize}
\item
positivity of the induced risk-neutral density
(butterfly arbitrage),
\item
monotonicity with respect to maturity
(calendar arbitrage),
\item
appropriate asymptotic growth conditions ensuring finite option prices.
\end{itemize}
Rather than writing these constraints explicitly at this stage, we denote by $\mathcal{A}$ the collection of all variance surfaces satisfying the required static arbitrage conditions.
Accordingly,
\[
\mathcal{A}
=
\{
w(k,T):
w
\text{ is free of static arbitrage}
\}.
\]
The precise characterization of these conditions will be recalled in
Section~\ref{subsec:admissibility-explicit}.
Our objective throughout the remainder of the paper is not merely to describe motions of variance surfaces, but to characterize dynamics which remain entirely within the admissible class $\mathcal{A}$.
\subsection{Infinitesimal Perturbations}
Consider a one-parameter family of variance surfaces
\[
w_\varepsilon(k,T)
=
w(k,T)
+
\varepsilon h(k,T)
+
o(\varepsilon).
\]
The function $h(k,T)$ represents an infinitesimal perturbation of the original variance surface.
Not every perturbation is admissible.
Indeed, a perturbation may immediately violate butterfly or calendar
arbitrage.
This observation motivates the following definition.
\begin{definition}
An infinitesimal perturbation $h$ is said to be admissible at
$w\in\mathcal{A}$ if there exists $\varepsilon_0>0$ such that
$w+\varepsilon h\in\mathcal{A}$ for every $|\varepsilon|<\varepsilon_0$.
\end{definition}
Thus, admissible perturbations preserve static arbitrage to first order.
\subsection{Explicit Form of the Admissibility Conditions}
\label{subsec:admissibility-explicit}
For the purposes of the dynamic theory we shall need explicit analytic forms
of the static no-arbitrage conditions.

\begin{definition}[Strictly admissible variance surface]
\label{def:admissible-explicit}
A function $w\in C^{2,1}(\mathbb R\times\mathbb R_+)$ is \emph{strictly admissible}
if
\begin{align}
    w(k,T) &> 0 \quad\text{for all }(k,T), \tag{2.1}\label{eq:adm-positivity}\\
    \partial_T w(k,T) &> 0 \quad\text{for all }(k,T)\quad\text{(no calendar arbitrage)},
        \tag{2.2}\label{eq:adm-calendar}\\
    g[w](k,T) &> 0 \quad\text{for all }(k,T)\quad\text{(no butterfly arbitrage)},
        \tag{2.3}\label{eq:adm-butterfly}
\end{align}
where the Gatheral density functional is
\begin{equation}
    g[w](k,T)
    :=
    \left(1-\frac{k\,\partial_k w}{2w}\right)^{\!2}
    -\frac{(\partial_k w)^2}{4}\!\left(\frac{1}{w}+\frac{1}{4}\right)
    +\frac{\partial_{kk} w}{2}.
    \tag{2.4}\label{eq:g-functional}
\end{equation}
The interior of the admissible class is $\mathcal{A}^\circ = \{w\in C^{2,1}:
\eqref{eq:adm-positivity},\eqref{eq:adm-calendar},\eqref{eq:adm-butterfly}\text{ hold strictly}\}$.
\end{definition}

\begin{remark}
The condition $g[w]>0$ is equivalent to positivity of the risk-neutral density
induced by the call-price surface $C(k,T)=\mathrm{BS}(k,w(k,T))$, which
by the Breeden--Litzenberger formula~\cite{Gatheral2006} is equivalent
to $C_{KK}\ge 0$.
See Gatheral~\cite{Gatheral2006} and Roper~\cite{Roper2010} for the
classical derivation.
\end{remark}

\begin{remark}[Wing growth conditions]
\label{rem:wing-conditions}
Definition~\ref{def:admissible-explicit} is a characterization of
interior admissibility conditions on any compact $k$-domain.
A complete characterization of arbitrage-free variance surfaces on
all of $\mathbb{R}$ additionally requires asymptotic growth conditions
on $w(k,T)$ as $|k|\to\infty$ to ensure finite moments of the
risk-neutral distribution; these are characterized by
Lee~\cite{Lee2004}, who shows $\limsup_{|k|\to\infty}w(k,T)/|k|\le 2$
is necessary for finite call prices, with finer tail conditions
governing higher moments.
Throughout this paper the transport dynamics are established on
compact moneyness intervals; we do not impose or analyze wing-growth conditions.
\end{remark}
\subsection{Toward a Geometric Viewpoint}
The collection of admissible perturbations at a given variance surface behaves
analogously to the tangent space of a smooth manifold.
Whether the admissible class $\mathcal{A}$ is globally a smooth manifold, a manifold with boundary, or a convex
admissible cone is a subtle functional-analytic question which we shall not
need to resolve.
Instead, the only structure required for the developments of this paper is the
existence of admissible infinitesimal perturbations.
This motivates the following terminology.
\begin{definition}
An admissible transport vector field is an assignment
\[
V:
\mathcal{A}
\rightarrow
T\mathcal{A}
\]
which associates to every admissible variance surface an admissible
infinitesimal perturbation.
\end{definition}

The geometric language in this paper carries precise mathematical content.
The transport velocity $v(k)$ is a tangent vector field on $\mathcal{A}^\circ$:
at each admissible surface $w$, it specifies the infinitesimal direction of
surface motion under a spot displacement. The transport equation
$\partial_u w = v\,\partial_k w$ is the integral flow of this field;
the jet hierarchy is its Taylor expansion at the money; and the
arbitrage-free transport theorem (Section~\ref{sec:arb-preservation})
characterizes which tangent vector fields generate flows that remain in
$\mathcal{A}^\circ$. It is this structure, transport velocity as tangent
vector field, admissibility as a constraint on the field, jet coefficients
as its Taylor data, that unifies the SSR, the higher-order transport
hierarchy, and the admissibility conditions within a single geometric framework.

The full functional-analytic foundation of this geometric picture ---
equipping $\mathcal{A}^\circ$ with a Banach or Fr\'echet manifold structure
and verifying that the transport operators are smooth sections of the tangent
bundle --- is the natural rigorous completion of this program.
The precedent is the Filipovi\'c--Teichmann geometry of yield-curve
spaces~\cite{FilipovicTeichmann2004}, which carries out exactly this program
for the HJM setting; the variance-surface setting differs in that the
admissibility constraints are nonlinear (through the Gatheral density
$g[w]>0$), whereas yield-curve admissibility is a linear cone condition.
This extension is discussed further in Section~\ref{sec:extensions}.

The principal objective of this paper is to characterize transport vector
fields generated by movements of the underlying and to study the flows induced
by such vector fields.
The classical skew-stickiness ratio will emerge as the first non-trivial
transport law within this geometric framework.

\FloatBarrier
\section{Transport Representation of Smile Dynamics}
\label{sec:transport-representation}
We now describe what it means for a static-arbitrage-free total variance
surface to be transported by a movement of the underlying.
The basic distinction is between two effects. The first is a purely mechanical
coordinate effect: when the forward changes, a fixed strike corresponds to a
different log-moneyness. The second is an intrinsic deformation of the surface
itself. Classical notions such as sticky strike, sticky delta and skew
stickiness can all be understood as different prescriptions for this
decomposition.
Throughout this section we write $k=\log(K/F)$,
$w(k,T;F)=T\sigma^2(F e^k,F,T)$,
where the dependence on $F$ records how the total variance surface changes
as the forward level changes. We suppress \(T\) when the maturity is fixed.
\subsection{Fixed and Floating Coordinates}
Let \(F\) denote the current forward. There are two natural ways to observe the
surface.
The floating-coordinate representation keeps log-moneyness \(k=\log(K/F)\)
fixed. This is the natural coordinate for studying the shape of the smile
relative to the current forward. The fixed-strike representation keeps \(K\)
fixed. Since \(k=\log K-\log F\), differentiation at fixed strike and
differentiation at fixed log-moneyness are related by
\[
    \left.\partial_{\log F}\right|_{K}
    =
    \left.\partial_{\log F}\right|_{k}
    -
    \partial_k .
    \tag{3.1}
\]
This elementary identity is the source of the floating-versus-fixed
decomposition that appears throughout smile
dynamics~\cite{Bergomi2004,Bergomi2009,BergomiBook}.
For any sufficiently smooth surface \(w\), define the floating variation
\[
    D^{\mathrm{fl}} w
    :=
    \left.\partial_{\log F} w(k,T;F)\right|_{k},
\]
and the fixed-strike variation
\[
    D^{\mathrm{fx}} w
    :=
    \left.\partial_{\log F} w(k,T;F)\right|_{K}.
\]
Equation \((3.1)\) gives the fundamental decomposition
\[
    D^{\mathrm{fx}} w
    =
    D^{\mathrm{fl}} w-\partial_k w.
    \tag{3.2}
\]
Equivalently,
\[
    D^{\mathrm{fl}} w
    =
    D^{\mathrm{fx}} w+\partial_k w.
    \tag{3.3}
\]
The term \(\partial_k w\) is the mechanical slide: it is the change obtained
solely by reading a different point of the same smile as the forward moves.
\subsection{Intrinsic Transport and Mechanical Slide}
A smile dynamics is not merely a coordinate change. It specifies how the
surface itself moves as the underlying moves. We therefore distinguish between
mechanical transport and intrinsic deformation.
Let $u=\log F$. An infinitesimal dynamics of the variance surface is a
derivation $Dw=\partial_u w$.
The floating representation identifies \(D w\) with the intrinsic change of the
surface at fixed log-moneyness, while the fixed-strike representation subtracts
the mechanical slide. Thus the same infinitesimal dynamics admits the two
coordinate descriptions
\[
    D^{\mathrm{fl}}w=Dw, \qquad
    D^{\mathrm{fx}}w=Dw-\partial_k w.
    \tag{3.4}
\]
This identity is the variance-surface analogue of the classical relation
between floating-ATM and fixed-strike volatility moves.
The main point is that a model of smile dynamics is a prescription for the
operator \(D\). Sticky delta corresponds to \(D=0\); sticky strike corresponds
to \(D=\partial_k\); skew stickiness corresponds to an intermediate transport
operator. More generally, a transport law is a rule assigning to each
admissible variance surface \(w\) an infinitesimal deformation \(D_w\).
\subsection{Transport Operators}
We now formalize this observation.
\begin{definition}[Transport operator]
Let $\mathcal{A}$ denote the admissible class of arbitrage-free total variance
surfaces. A transport operator is a rule
\[
    \mathcal{D}:\mathcal{A}\to\Gamma(T\mathcal{A}),
    \qquad
    w\mapsto \mathcal{D}_w,
\]
where \(\mathcal{D}_w\) is an admissible infinitesimal perturbation of \(w\).
The induced transport flow is the evolution
\[
    \frac{d}{du}w_u=\mathcal D_{w_u},
    \qquad u=\log F.
    \tag{3.5}
\]
\end{definition}
At this level of generality, \(\mathcal{D}_w\) may contain both intrinsic smile
deformation and coordinate transport. The classical stickiness rules correspond
to simple choices of \(\mathcal D\). At a fixed maturity,
\[
    \begin{array}{lll}
    \text{sticky delta:}  & \mathcal{D}_w=0,             & D^{\mathrm{fx}}w=-\partial_k w, \\[3pt]
    \text{sticky strike:} & \mathcal{D}_w=\partial_k w,  & D^{\mathrm{fx}}w=0.
    \end{array}
    \tag{3.6}
\]
Thus sticky delta means the floating smile is frozen, while sticky strike means
the fixed-strike surface is frozen.

\begin{definition}[Order of a transport operator]
\label{def:order-transport}
A transport operator $\mathcal{D}:\mathcal{A}\to\Gamma(T\mathcal{A})$ is of
\emph{order $N$} if at each $w\in\mathcal{A}^\circ$ it takes the form
\begin{equation}
    \mathcal{D}_w
    =
    \sum_{j=1}^{N} a_j(k,u,T;w)\,\partial_k^j w,
    \quad a_N\not\equiv 0.
    \tag{3.7}\label{eq:order-N-op}
\end{equation}
The first-order case $N=1$ with $a_1=v$ is the transport velocity field developed in the following two sections.
\end{definition}
The simplest non-trivial transport operator is the first-order self-similar transport law. Rather than introducing it axiomatically, we derive it in the next section together with its characteristic solution, ride formula, and connection to the classical skew-stickiness ratio

\FloatBarrier
\section{Self-Similar Transport and the Skew-Stickiness Ratio}
\label{sec:self-similar-transport}
We now show that the classical skew-stickiness ratio is not an independent
ansatz but the first non-trivial instance of a transport law. The self-similar
surface, the Bergomi ride formula~\cite{Bergomi2004,Bergomi2009,BergomiBook},
and the floating-versus-fixed
decomposition~\cite{Bergomi2004,Bergomi2005,Bergomi2009,BergomiBook}
all follow from a single first-order transport equation.
Throughout this section we fix a maturity \(T\) and suppress it from the
notation when no confusion can arise. Let \(u=\log F\) and
\(k=\log(K/F)\). The total variance surface is written as $w(k,u):=T\sigma^2(F e^k,F,T)$.
The floating coordinate is \(k\), while the fixed-strike coordinate is
\(\log K=k+u\).
\subsection{Self-Similar Transport}
\begin{definition}[Self-similar transport]
Let \(\beta\in\mathbb R\). We say that a variance surface satisfies
self-similar transport with coefficient \(\beta\) if its floating-coordinate
evolution is governed by
\[
    \partial_u w(k,u)=\beta\,\partial_k w(k,u).
    \tag{4.1}
\]
The parameter \(\beta\) is called the first-order transport coefficient. In
implied-volatility coordinates it corresponds to the classical
skew-stickiness ratio.
\end{definition}
The cases \(\beta=0\) and \(\beta=1\) recover sticky delta and sticky strike,
respectively. Indeed, \(\beta=0\) freezes the surface in floating
log-moneyness, while \(\beta=1\), combined with the identity
\(\partial_u|_K=\partial_u|_k-\partial_k\), freezes the surface at fixed strike.
\subsection{The Self-Similar Transport Theorem}
\begin{theorem}[Self-similar transport theorem]
\label{thm:self-similar-transport}
Let \(w\in C^1(\mathbb R\times I)\), where \(I\subset\mathbb R\) is an interval
of forward log-levels. The following are equivalent.
\begin{enumerate}[label=(\roman*)]
\item
\(w\) satisfies the transport equation
\[
    \partial_u w=\beta\,\partial_k w.
    \tag{4.2}
\]
\item
There exists a function \(\Phi\) such that
\[
    w(k,u)=\Phi(k+\beta u).
    \tag{4.3}
\]
\item
In strike-forward coordinates, there exists a function \(\Psi\) such that
\[
    w(K,F)=\Psi\!\left(\log K+(\beta-1)\log F\right).
    \tag{4.4}
\]
Equivalently, writing \(m_\beta:=KF^{\beta-1}\), there exists a function
\(\widetilde \Psi\) such that
\[
    w(K,F)=\widetilde\Psi(m_\beta).
    \tag{4.5}
\]
\end{enumerate}
Consequently, if the forward moves from \(F\) to \(F'\), the transported
surface satisfies the exact ride formula
\[
    w_{\mathrm{new}}(K)
    =
    w_{\mathrm{old}}
    \!\left(
        K\left(\frac{F}{F'}\right)^{1-\beta}
    \right).
    \tag{4.6}
\]
The same formula holds for implied volatility:
\[
    \sigma_{\mathrm{new}}(K,T)
    =
    \sigma_{\mathrm{old}}
    \!\left(
        K\left(\frac{F}{F'}\right)^{1-\beta},T
    \right).
    \tag{4.7}
\]
\end{theorem}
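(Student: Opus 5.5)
The plan is to treat (4.2) as a constant-coefficient linear first-order PDE and solve it by the method of characteristics, after a linear change of variables. I will introduce $\xi = k+\beta u$ and $\tau=u$, and set $\widetilde w(\xi,\tau):=w(\xi-\beta\tau,\tau)$. This map is a $C^1$ diffeomorphism of $\mathbb R\times I$ onto itself: it is linear, invertible, and preserves the slab. By the chain rule,
\[
\partial_\tau \widetilde w(\xi,\tau)=\bigl(\partial_u w-\beta\,\partial_k w\bigr)(\xi-\beta\tau,\tau).
\]
So (4.2) holds everywhere if and only if $\partial_\tau\widetilde w\equiv 0$ on $\mathbb R\times I$.

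\textbf{Step 1: (i)$\Leftrightarrow$(ii).} For each fixed $\xi$, the $\tau$-section of $\mathbb R\times I$ is the whole interval $I$, which is connected. Hence $\partial_\tau\widetilde w\equiv0$ holds if and only if $\widetilde w(\xi,\tau)=\Phi(\xi)$ for some function $\Phi$. Concretely, fix any $u_0\in I$ and take $\Phi(\xi):=w(\xi-\beta u_0,u_0)$; then $\Phi\in C^1$. Undoing the substitution gives $w(k,u)=\Phi(k+\beta u)$. Conversely, if $w=\Phi(k+\beta u)$ with $w\in C^1$, then $\Phi$ is $C^1$, because it is a restriction of $w$ along a line. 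Direct differentiation then gives $\partial_u w=\beta\Phi'=\beta\partial_k w$.

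The only point I expect to need care is the connectedness of the sections. This is exactly where the hypothesis that $I$ is an interval is used; the domain in $k$ is all of $\mathbb R$, so no characteristic leaves the domain. I regard this as the main (mild) obstacle. The rest is bookkeeping.

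\textbf{Step 2: (ii)$\Leftrightarrow$(iii).} Substitute $k=\log K-\log F$ and $u=\log F$. Then
\[
k+\beta u=\log K+(\beta-1)\log F=\log m_\beta,
\]
so we may take $\Psi=\Phi$ and $\widetilde\Psi=\Phi\circ\log$ on $(0,\infty)$. The change of coordinates $(K,F)\mapsto(k,u)$ is a bijection, so the equivalence is immediate in both directions.

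\textbf{Step 3: ride formula.} Fix $K$ and move the forward from $F$ to $F'$. By (4.5) we have $w_{\mathrm{new}}(K)=\widetilde\Psi(KF'^{\beta-1})$. We want a strike $K^\ast$ with $K^\ast F^{\beta-1}=KF'^{\beta-1}$, and solving gives $K^\ast=K(F/F')^{1-\beta}$. Therefore $w_{\mathrm{new}}(K)=w_{\mathrm{old}}(K^\ast)$, which is (4.6).

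For (4.7), note that the maturity $T$ is held fixed and $\sigma=\sqrt{w/T}$ with $w>0$. The map $w\mapsto\sqrt{w/T}$ is a fixed function applied pointwise, so the same identity transfers to implied volatility. As a sanity check, the cases $\beta=0$ and $\beta=1$ should reproduce sticky delta (the smile rides with $K/F$) and sticky strike ($K^\ast=K$), respectively.
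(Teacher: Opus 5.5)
Your proposal is correct and follows essentially the same route as the paper: both identify the lines $k+\beta u=\text{const}$ as the characteristics, obtain (iii) via $k+\beta u=\log K+(\beta-1)\log F=\log m_\beta$, and derive the ride formula by solving $K^\ast F^{\beta-1}=K(F')^{\beta-1}$. Your straightening substitution $\widetilde w(\xi,\tau)=w(\xi-\beta\tau,\tau)$ is slightly cleaner than the paper's inference of $k'(u)=-\beta$ from $(\beta+k'(u))\,\partial_k w=0$, and it makes explicit where the hypothesis that $I$ is an interval (connected) is used.
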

\begin{proof}
Assume first that \(w\) satisfies \((4.2)\). Consider the characteristic curves
\((k(u),u)\) along which \(w\) is constant. We require
\[
    \frac{d}{du}w(k(u),u)
    =
    \partial_u w+k'(u)\partial_k w=0.
\]
Using \((4.2)\), this becomes $(\beta+k'(u))\partial_k w=0$,
so the characteristics satisfy $k'(u)=-\beta$. Hence
\(k(u)+\beta u\) is constant along characteristics, and therefore
\(w(k,u)=\Phi(k+\beta u)\) for some function \(\Phi\). This proves
\((i)\Rightarrow(ii)\).
Conversely, if \(w(k,u)=\Phi(k+\beta u)\), then
\[
    \partial_u w=\beta\Phi'(k+\beta u),
    \qquad
    \partial_k w=\Phi'(k+\beta u),
\]
and hence \(\partial_u w=\beta\partial_k w\). Thus \((i)\) and \((ii)\) are
equivalent.
Since \(k=\log K-\log F\) and \(u=\log F\), the invariant coordinate in
\((4.3)\) can be written as
\[
    k+\beta u
    =
    \log K+(\beta-1)\log F.
\]
This gives \((4.4)\). Exponentiating the invariant coordinate gives
\(m_\beta=KF^{\beta-1}\), which yields \((4.5)\). Thus \((ii)\), \((iii)\),
and \((4.5)\) are equivalent.
It remains to derive the ride formula. Let the old forward be \(F\) and the
new forward be \(F'\). The self-similar representation gives
\[
    w_{\mathrm{old}}(K')
    =
    \widetilde\Psi(K'F^{\beta-1}),
    \qquad
    w_{\mathrm{new}}(K)
    =
    \widetilde\Psi(K(F')^{\beta-1}).
\]
Choose \(K'\) so that \(K'F^{\beta-1}=K(F')^{\beta-1}\). Then
\[
    K'=K\left(\frac{F'}{F}\right)^{\beta-1}
      =K\left(\frac{F}{F'}\right)^{1-\beta}.
\]
Substitution gives \((4.6)\). Since \(w=T\sigma^2\) and the maturity is fixed,
the same transport identity holds for \(\sigma\), proving \((4.7)\).
\end{proof}

\begin{remark}[Attribution of the ride formula]
Equations \((4.6)\)--\((4.7)\) recover the forward-smile ride formula
derived by Balland~\cite{Balland2002} and Bergomi~\cite{Bergomi2004,Bergomi2009,BergomiBook}
via different arguments. The present derivation makes the characteristic
structure explicit: \((4.6)\)--\((4.7)\) are the unique solution of the
first-order PDE \((4.2)\), and the invariant moneyness coordinate
$m_\beta := KF^{\beta-1}$ is precisely the characteristic variable.
This derivation reverses the usual logic: the ride formula is not imposed
as an external smile parametrization but emerges as the characteristic
solution of the constant-velocity transport equation.
\end{remark}

\subsection{Floating and Fixed-Strike Forms}
The theorem immediately implies the fixed-strike version of the transport law.
Since
\[
    \left.\partial_u\right|_K
    =
    \left.\partial_u\right|_k-\partial_k,
\]
the self-similar transport equation \(\partial_u w=\beta\partial_k w\) is
equivalent to
\[
    \left.\partial_u w\right|_K
    =
    (\beta-1)\partial_k w.
    \tag{4.8}
\]
Thus \(\beta\) is the floating-coordinate transport coefficient, while
\(\beta-1\) is the corresponding fixed-strike transport coefficient. The shift
by one is not a modeling assumption; it is the mechanical slide induced by
the change of coordinates.
At the money, define the variance skew $S_w(F,T):=\partial_k w(0,\log F,T)$.
Then the floating and fixed-strike ATM variance moves are
\[
    \Delta w_{\mathrm{float}}
    =
    \beta S_w\,\Delta\log F+o(\Delta\log F),
    \qquad
    \Delta w_{\mathrm{fixed}}
    =
    (\beta-1)S_w\,\Delta\log F+o(\Delta\log F).
    \tag{4.9}
\]
Their difference,
\[
    \Delta w_{\mathrm{float}}-\Delta w_{\mathrm{fixed}}
    =
    S_w\,\Delta\log F+o(\Delta\log F),
    \tag{4.10}
\]
is the mechanical slide along the old variance smile.
\subsection{Connection with the Classical SSR}
The classical SSR is usually expressed in implied-volatility units. Let
\[
    \sigma(k,u,T):=\sigma(F e^k,F,T),
    \qquad
    S_\sigma:=\partial_k\sigma(0,u,T).
\]
Since \(w=T\sigma^2\), one has
\[
    \partial_k w=2T\sigma\,\partial_k\sigma,
    \qquad
    \partial_u w=2T\sigma\,\partial_u\sigma.
\]
Therefore, wherever $\sigma>0$, the variance transport equation
\(\partial_u w=\beta\partial_k w\) is equivalent to
\[
    \partial_u\sigma=\beta\,\partial_k\sigma.
    \tag{4.11}
\]
The equivalence $\partial_u w = \beta\,\partial_k w \Leftrightarrow
\partial_u\sigma = \beta\,\partial_k\sigma$ follows from $w=T\sigma^2$ and
$\sigma>0$; consequently $\beta$ is the same numerical quantity whether
computed from $w$-jets or $\sigma$-jets.
At the money this gives
\[
    \Delta \sigma_{\mathrm{ATM}}
    =
    \beta S_\sigma\,\Delta\log F+o(\Delta\log F),
    \tag{4.12}
\]
which is precisely the classical floating-ATM SSR relation.
Equivalently, at fixed strike,
\[
    \left.\partial_u\sigma\right|_K
    =
    (\beta-1)\partial_k\sigma.
    \tag{4.13}
\]
In strike coordinates this becomes
\[
    \left.\partial_F\sigma(K,F,T)\right|_K
    =
    (\beta-1)\frac{K}{F}\,\partial_K\sigma(K,F,T),
    \tag{4.14}
\]
recovering the standard fixed-strike riding formula.
\subsection{Covariation and Regression Characterizations}
Let \(F_t\) be the forward process and let
\(\sigma^*_t=\sigma(F_t,F_t,T)\) denote the floating at-the-money implied
volatility. Under the self-similar transport law,
\[
    d\sigma^*_t
    =
    \beta S_{\sigma,t}\,d\log F_t
\]
to first order. Hence, in continuous time,
\[
    \beta
    =
    \frac{1}{S_{\sigma,t}}
    \frac{d\langle \sigma^*,\log F\rangle_t}
         {d\langle \log F\rangle_t},
    \tag{4.15}
\]
whenever \(S_{\sigma,t}\neq0\). In discrete time, the corresponding population
least-squares slope in the regression
\[
    \Delta\sigma^*
    =
    \beta\,S_\sigma\,\Delta\log F+\varepsilon
    \tag{4.16}
\]
is the same quantity. Thus the geometric transport coefficient, the
fixed-strike riding coefficient, the quadratic-covariation projection and the
empirical regression slope are four representations of the same number.

\begin{figure}[!htbp]
\centering
\includegraphics[width=\textwidth]{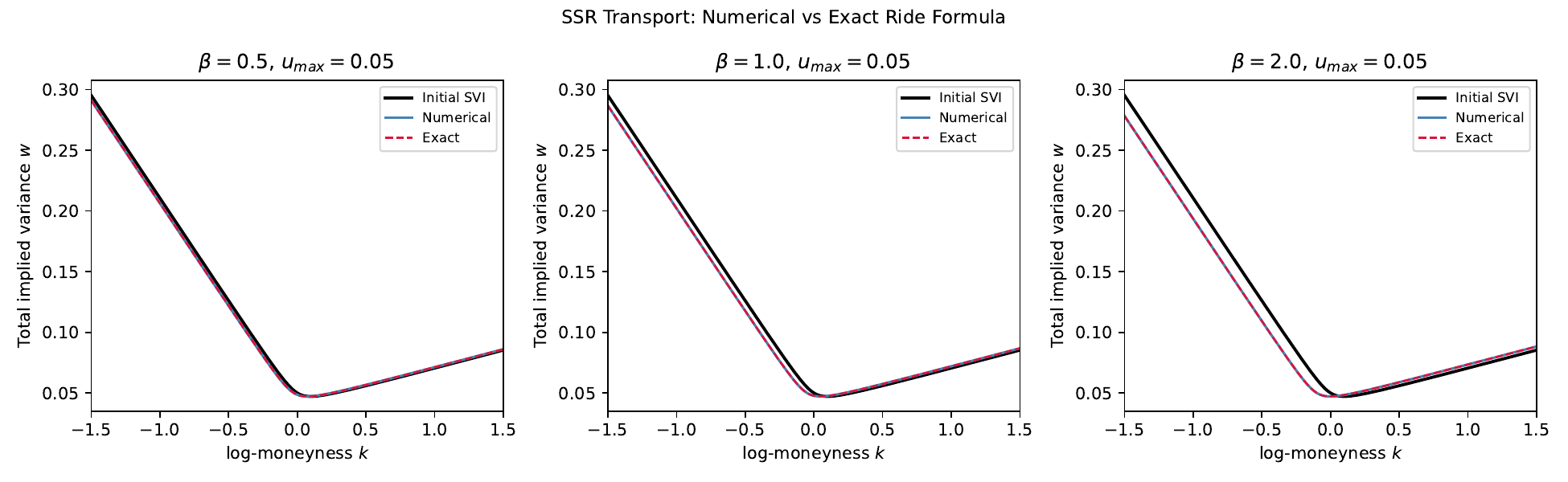}
\caption{Self-similar SSR transport applied to an SVI variance surface.
Each panel shows the initial surface (black) and the transported surface
(blue) after a $5\%$ forward move, with the exact ride formula
$w_0(k+\beta u)$ overlaid (red dashed).  Left to right: $\beta=0.5$
(between sticky-delta and sticky-strike), $\beta=1.0$ (sticky-strike),
$\beta=2.0$ (short-maturity local-vol limit).  The numerical and exact
surfaces are indistinguishable at this scale.}
\label{fig:transported-smiles}
\end{figure}

\FloatBarrier
\section{General Transport Fields and the Smile Jet Hierarchy}
\label{sec:general-transport}
The self-similar SSR equation $\partial_u w=\beta\,\partial_k w$
($u=\log F$) is too restrictive to describe general smile dynamics. It transports the entire
smile with a constant velocity in log-moneyness, and therefore forces the same
coefficient \(\beta\) to govern the evolution of level, skew, curvature and all
higher derivatives. Empirically, however, the ATM volatility level, the ATM
skew, and the curvature of the smile need not be transported at the same rate.
The natural generalization is to replace the constant transport coefficient
\(\beta\) by a transport velocity field.
\subsection{Transport Velocity Fields}
Let \(w=w(k,u,T)\) denote the total implied variance surface in floating
log-moneyness \(k=\log(K/F)\), forward log-level \(u=\log F\), and maturity
\(T\). A general transport dynamics is an equation of the form
\[
    \partial_u w(k,u,T)
    =
    v(k,u,T;w)\,\partial_k w(k,u,T),
    \tag{5.1}
\]
where \(v\) is a transport velocity field. The dependence of \(v\) on \(w\)
allows the transport speed to depend on the current variance surface itself.
The classical SSR model is recovered as the constant-velocity special case
\[
    v(k,u,T;w)\equiv \beta.
    \tag{5.2}
\]
Thus the SSR is not a separate model. It is the first exactly solvable member
of the general transport theory.
At fixed strike, since
\[
    \left.\partial_u\right|_K
    =
    \left.\partial_u\right|_k-\partial_k,
\]
equation \((5.1)\) is equivalent to
\[
    \left.\partial_u w\right|_K
    =
    \bigl(v(k,u,T;w)-1\bigr)\partial_k w.
    \tag{5.3}
\]
The shift by one remains purely mechanical. It is independent of the choice of
transport field.
\subsection{Characteristic Representation}
For a given transport velocity field \(v\), the corresponding characteristic
curves solve
\[
    \frac{d k_s}{ds}=-v(k_s,s,T;w),
    \qquad k_u=k.
    \tag{5.4}
\]
Along such curves,
\[
    \frac{d}{ds}w(k_s,s,T)
    =
    \partial_u w(k_s,s,T)+\dot k_s\,\partial_k w(k_s,s,T)=0.
\]
Thus the variance surface is constant along the characteristic flow.
Let \(\chi_{u'\to u}\) denote the map that sends a terminal log-moneyness at
time \(u'\) back to its originating log-moneyness at time \(u\). Then the
transported variance surface satisfies the generalized ride formula
\[
    w(k,u',T)=w\bigl(\chi_{u'\to u}(k),u,T\bigr).
    \tag{5.5}
\]
In fixed-strike notation this becomes
\[
    w_{\mathrm{new}}(K,T)
    =
    w_{\mathrm{old}}\!\left(K_{\mathrm{eff}}(K;F,F'),T\right),
    \tag{5.6}
\]
where the effective strike is determined by the characteristic flow.
When \(v\equiv\beta\), the characteristics are straight lines, $k_s+\beta s=\text{constant}$,
and \((5.5)\) reduces to the classical SSR ride formula
\[
    w_{\mathrm{new}}(K,T)
    =
    w_{\mathrm{old}}
    \!\left(
        K\left(\frac{F}{F'}\right)^{1-\beta},T
    \right).
    \tag{5.7}
\]
\subsection{The Smile Jet at the Money}
The advantage of the velocity-field formulation is that it naturally produces
a hierarchy of ATM smile dynamics. Define the ATM variance jet by
\[
    a_n(u,T)
    :=
    \partial_k^n w(0,u,T),
    \qquad n=0,1,2,\ldots .
    \tag{5.8}
\]
Thus \(a_0\) is the ATM total variance, \(a_1\) is the ATM variance skew,
\(a_2\) is the ATM variance curvature, and so on.
Assume that \(v\) is $C^\infty$ jointly in $(k,w)$ near $(0,w(0,u,T))$.
Define the composite
\[
    \tilde v(k;u,T) := v\bigl(k,u,T;\,w(k,u,T)\bigr),
\]
and write its ATM jet (total $k$-derivative) as
\begin{equation}
    v_j(u,T)
    :=
    \partial_k^j\tilde v(0;u,T),
    \qquad j=0,1,2,\ldots .
    \tag{5.9}\label{eq:jet_vel}
\end{equation}
When $v_w\equiv 0$ (i.e.\ $v=v(k,u,T)$), $\tilde v=v$ and $v_j$ reduces to
the ordinary $k$-jet $\partial_k^j v(0,u,T)$.
We define the named transport coefficients
\[
    \beta:=v_0,\qquad \eta:=v_1,\qquad \psi:=v_2.
    \tag{5.10}\label{eq:named-coeffs}
\]
Thus \(\beta\) is the classical level-transport (SSR) coefficient, \(\eta\) is
the skew-transport coefficient, and \(\psi\) is the curvature-transport
coefficient.
\subsection{Jet Transport Theorem}
\begin{theorem}[Jet transport hierarchy]
\label{thm:jet-transport}
Let $v=v(k,u,T;w)$ be $C^{N+1}$ jointly in $(k,w)$ in a neighbourhood of
$(0,w(0,u,T))$, and let $w$ be $C^{N+1,1}$ jointly in $(k,u)$ near $k=0$.
Suppose $w$ satisfies $\partial_u w = v\,\partial_k w$. With $\tilde v$,
$a_n$, and $v_j$ as defined in~\eqref{eq:jet_vel}, for each $0\le n\le N$,
\[
    \dot a_n
    =
    \sum_{j=0}^{n}
    \binom{n}{j}
    v_j\,a_{n-j+1},
    \tag{5.11}
\]
where $\dot a_n = \partial_u a_n$.
\end{theorem}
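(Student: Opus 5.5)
The plan is to differentiate the transport equation $n$ times in $k$, evaluate at $k=0$, and apply the Leibniz rule. The starting point is the composite velocity $\tilde v(k;u,T)=v(k,u,T;w(k,u,T))$. Along any solution, the equation $\partial_u w=v\,\partial_k w$ reads $\partial_u w(k,u,T)=\tilde v(k;u,T)\,\partial_k w(k,u,T)$ as an identity between functions of $k$ near $0$. The point of working with $\tilde v$ rather than $v$ is that all chain-rule contributions from the $w$-dependence of $v$ are absorbed into the total derivatives $v_j=\partial_k^j\tilde v(0)$. The hierarchy is then purely a product-rule statement.

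\textbf{Regularity.} Since $w$ is $C^{N+1}$ in $k$ near $0$ and $v$ is $C^{N+1}$ jointly in $(k,w)$ near $(0,w(0,u,T))$, the composite $\tilde v$ is $C^{N+1}$ in $k$ on a small neighbourhood of $k=0$. By continuity, $w(k,u,T)$ stays in the neighbourhood where $v$ is smooth. The right-hand side $\tilde v\,\partial_k w$ is then $C^{N}$ in $k$, so for $n\le N$ we may apply $\partial_k^n$ to both sides.

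\textbf{Main computation.} The Leibniz rule gives
\[
\partial_k^n\bigl(\tilde v\,\partial_k w\bigr)=\sum_{j=0}^n\binom{n}{j}\,\partial_k^j\tilde v\;\partial_k^{n-j+1}w .
\]
At $k=0$ this equals $\sum_j\binom{n}{j}v_j\,a_{n-j+1}$. Note that the index $n-j+1\le N+1$, which is admissible by the $C^{N+1}$ hypothesis on $w$.

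\textbf{Main obstacle: interchanging derivatives on the left.} We need
\[
\partial_k^n\partial_u w(0,u,T)=\partial_u\partial_k^n w(0,u,T)=\dot a_n .
\]
This is where the hypothesis $w\in C^{N+1,1}$ in $(k,u)$ is used. I would read it as saying that the mixed derivatives $\partial_u\partial_k^m w$ exist and are continuous for $m\le N+1$. Schwarz/Clairaut then lets the $u$-derivative commute past the $k$-derivatives. If only the transport equation is available, there is a fallback. One proves by induction on $n$ that $\partial_k^n\partial_u w$ exists and equals $\partial_k^n(\tilde v\,\partial_k w)$, which is continuous. The classical symmetric-derivative theorem (existence and continuity of one mixed partial implies existence and equality of the other) then gives the interchange. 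Differentiating the identity $a_n(u)=\partial_k^nw(0,u)$ in $u$ yields $\dot a_n$, and equating the two sides gives (5.11) for each $0\le n\le N$.

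A final remark worth recording: when $v_w\equiv0$, the formula specializes with $v_j=\partial_k^jv(0)$. The cases $n=0,1$ give $\dot a_0=\beta a_1$ and $\dot a_1=\beta a_2+\eta a_1$.
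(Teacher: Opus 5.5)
Your proposal is correct and follows the paper's proof essentially step for step. The paper likewise uses the $C^{N+1,1}$ hypothesis with an iterated Schwarz argument to write $\dot a_n=\partial_k^n(\partial_u w)(0,u,T)$, notes that $\tilde v$ is $C^{N+1}$ in $k$ by the chain rule, and then applies the Leibniz rule to $\tilde v\,\partial_k w$ and evaluates at $k=0$. Your inductive fallback for the derivative interchange is an optional extra that the paper does not need, since it takes the mixed-derivative regularity directly from the hypothesis.
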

\begin{proof}
Since $w\in C^{N+1,1}$ jointly in $(k,u)$, applying the Schwarz theorem
iteratively gives
\[
    \dot a_n
    =
    \partial_u\partial_k^n w(0,u,T)
    =
    \partial_k^n\bigl(\partial_u w\bigr)(0,u,T).
\]
Since $v$ is $C^{N+1}$ jointly in $(k,w)$ and $w$ is $C^{N+1}$ in $k$,
the composite $\tilde v(k;u,T)$ is $C^{N+1}$ in $k$ by the chain rule.
Substituting $\partial_u w = \tilde v(k;u,T)\,\partial_k w$ and applying
the Leibniz product rule to the $C^{N+1}$ functions
$k\mapsto\tilde v(k;u,T)$ and $k\mapsto\partial_k w(k,u,T)$ gives
\[
    \partial_k^n\bigl(\tilde v\,\partial_k w\bigr)
    =
    \sum_{j=0}^{n}
    \binom{n}{j}
    \bigl(\partial_k^j\tilde v\bigr)\bigl(\partial_k^{n-j+1}w\bigr).
\]
Evaluating at $k=0$ replaces $\partial_k^j\tilde v(0;u,T)$ with $v_j$
and $\partial_k^{n-j+1}w(0,u,T)$ with $a_{n-j+1}$, yielding $(5.11)$.

The structural content of $(5.11)$ is that $\dot a_n$ depends only on the
$n+1$ composite velocity jets $(v_0,\ldots,v_n)$ and the $n+2$ surface
jets $(a_0,\ldots,a_{n+1})$, not on $w$ or $v$ away from $k=0$.
\end{proof}
\begin{remark}[Identifiability of the velocity jets]
\label{rem:vj-w-dependent}
The $v_j$ in~\eqref{eq:jet_vel} are total $k$-derivatives of the composite
$\tilde v$, including both explicit $k$-dependence and implicit dependence
through $w$. Only $\tilde v$ and its jets are identified from a single trajectory
of $w$; separating the explicit $k$-dependence of $v$ from its
$w$-dependence requires additional structure beyond the transport equation.
\end{remark}
Substituting the named coefficients \eqref{eq:named-coeffs}, the first three
equations become
\[
\begin{aligned}
    \dot a_0 &= \beta a_1,\\
    \dot a_1 &= \beta a_2+\eta a_1,\\
    \dot a_2 &= \beta a_3+2\eta a_2+\psi a_1.
\end{aligned}
\tag{5.12}\label{eq:jet-three}
\]
These identities show precisely how the classical SSR is extended: the level
move is controlled by \(\beta\), the skew move receives an additional
contribution from \(\eta\), and the curvature move receives additional
contributions from \(\eta\) and \(\psi\).

\begin{remark}[Model-specific instances of the jet hierarchy]
For specific parametric models, the individual jet equations in
\eqref{eq:jet-three} have been computed in the literature.
The most prominent example is the SABR model~\cite{HaganKumarLesniewskiWoodward2002},
whose closed-form approximations for ATM implied volatility, ATM skew,
and ATM curvature as functions of spot implicitly encode a specific
instance of \eqref{eq:jet-three} for the SABR velocity field; see
also~\cite{Gatheral2006}.
The present theorem establishes \eqref{eq:jet-three} as a model-free
identity: it holds for any transport velocity $v(k)$, with the
SABR and other parametric cases arising as particular choices.
\end{remark}

In the constant-velocity case \(v\equiv\beta\), all higher velocity jets vanish
(\(\eta=\psi=0\)), and the hierarchy collapses to
\[
    \dot a_n=\beta a_{n+1},
    \qquad n=0,1,2,\ldots .
    \tag{5.13}
\]
Thus the classical SSR is the special case in which the same coefficient
transports every derivative of the smile.
\subsection{Skew and Curvature Stickiness}
The transport coefficients \(\beta, \eta, \psi\) defined in \eqref{eq:named-coeffs}
carry direct interpretations via the hierarchy \eqref{eq:jet-three}.
The skew-transport coefficient \(\eta\) measures how the transport velocity
changes away from ATM and therefore controls the part of ATM skew dynamics not
explained by constant SSR transport. The curvature-transport coefficient \(\psi\)
similarly governs the residual curvature dynamics.
Thus skew and curvature dynamics are not added externally to the SSR; they are
the next terms in the Taylor jet of the same transport velocity field.
\begin{corollary}[Fixed-strike jet dynamics]
\label{cor:fixed-strike-jet}
Under the fixed-strike variation $D^{\mathrm{fx}}w = D^{\mathrm{fl}}w - \partial_k w$,
the first three jet equations become
\[
\begin{aligned}
    \dot a_0^{\mathrm{fx}} &= (\beta-1)\,a_1,\\
    \dot a_1^{\mathrm{fx}} &= (\beta-1)\,a_2+\eta\, a_1,\\
    \dot a_2^{\mathrm{fx}} &= (\beta-1)\,a_3+2\eta\, a_2+\psi\, a_1.
\end{aligned}
\tag{5.14}\label{eq:jet-fixed}
\]
In particular, only \(\beta\) shifts by one; the skew- and curvature-transport
coefficients \(\eta\) and \(\psi\) are coordinate-intrinsic.
\end{corollary}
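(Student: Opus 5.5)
The plan is to derive Corollary~\ref{cor:fixed-strike-jet} directly from Theorem~\ref{thm:jet-transport}, using the decomposition $D^{\mathrm{fx}}w = D^{\mathrm{fl}}w-\partial_k w$ from $(3.2)$. The first step is to fix the meaning of $\dot a_n^{\mathrm{fx}}$. I will take it as the fixed-strike variation of the $n$-th $k$-derivative evaluated at the money:
\[
    \dot a_n^{\mathrm{fx}} := \bigl(D^{\mathrm{fx}}\partial_k^n w\bigr)(0,u,T)
    = \partial_u\partial_k^n w(0,u,T) - \partial_k^{n+1}w(0,u,T).
\]
This is the natural reading, because $(3.1)$ is an operator identity that may be applied to any smooth function of $(k,u)$, in particular to $\partial_k^n w$. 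Under the regularity assumed in Theorem~\ref{thm:jet-transport}, it gives $\dot a_n^{\mathrm{fx}} = \dot a_n - a_{n+1}$ for $0\le n\le N$.

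The second step is to substitute the hierarchy $(5.11)$. This yields
\[
    \dot a_n^{\mathrm{fx}} = (v_0-1)\,a_{n+1} + \sum_{j=1}^{n}\binom{n}{j} v_j\, a_{n-j+1}.
\]
The mechanical slide is absorbed entirely into the $j=0$ term, since $\binom{n}{0}=1$ and its surface factor is exactly $a_{n+1}$.

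The third step specializes to $n=0,1,2$ using the named coefficients \eqref{eq:named-coeffs}, which reproduces \eqref{eq:jet-fixed}. The closing remark that only $\beta$ shifts follows from the same formula: the terms with $j\ge1$ are untouched, so the coefficients $v_j$ for $j\ge 1$, and in particular $\eta$ and $\psi$, appear identically in both coordinate systems.

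The only delicate point is whether $\dot a_n^{\mathrm{fx}}$ could instead be read as the $u$-derivative at fixed strike of $\partial_K^n$-type derivatives, rather than of $\partial_k^n w$. Under that alternative, commutator terms would appear. I will state the definition above explicitly and note that $\partial_k$ and $\partial_u|_k$ commute by Schwarz's theorem, so the subtraction can be performed after differentiating. No further obstacle is expected.
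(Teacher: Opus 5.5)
Your proposal is correct and is essentially the paper's argument. The paper observes that the fixed-strike variation is the transport equation with $v$ replaced by $v-1$, which changes only the zeroth velocity jet; you get the same result by subtracting the mechanical slide $a_{n+1}$ from the $j=0$ term of $(5.11)$. Your explicit definition $\dot a_n^{\mathrm{fx}} := (D^{\mathrm{fx}}\partial_k^n w)(0,u,T)$, which relies on the fact that $\partial_k$ commutes with $\partial_u$, usefully makes precise what the paper leaves implicit.
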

\begin{proof}
Fixed-strike variation amounts to replacing \(v\) with \(v-1\) in the transport
equation, hence \(\beta\mapsto\beta-1\) while \(\eta,\psi\) are unchanged.
Apply \eqref{eq:jet-three}.
\end{proof}
\subsection{Interpretation}
The generalized transport theory replaces the single SSR parameter by a local
transport velocity field. Skew dynamics are governed by its first derivative at ATM, curvature dynamics by
its second derivative, and higher-order smile dynamics by higher derivatives.
This gives a natural extension of the SSR. Rather than postulating separate
parameters for level, skew and curvature, the theory identifies them as the ATM
jet of a single geometric object: $v(0),\ \partial_k v(0),\ \partial_k^2 v(0),\ldots$
The general case yields a nonlinear
characteristic flow and a corresponding generalized ride formula. The remaining
question is which transport velocity fields preserve static arbitrage. This is
the subject of the next section.
\subsection{Order of the Transport Operator and the It\^{o} Correction}
\label{subsec:order-of-transport}
The transport equations studied above are first order in $k$. We now
address when higher-order operators arise and which ones are admissible.

\begin{proposition}[Deterministic-forward leading order]
\label{prop:deterministic-leading-order}
Let $F_u=e^{u}$ be a $C^1$ deterministic forward and $w(k,T;F)$
be $C^{2,1,1}$. Then
\[
    w(k,T;F_{u+\delta u})-w(k,T;F_u)
    =
    \mathcal{D}_w(k,T)\,\delta u
    +
    o(\delta u),
    \quad
    \mathcal{D}_w:=\partial_u w|_k.
\]
No second-order $k$-derivative operator appears at leading order.
\end{proposition}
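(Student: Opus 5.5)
The plan is to treat the statement as a first-order Taylor expansion in the single variable $u$ along a $C^1$ path, with $k$ and $T$ frozen. No stochastic calculus is needed, because the forward is deterministic.

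\emph{Step 1 (reparametrization).} Since $F_u=e^u$, write $\widetilde w(k,T,u):=w(k,T;e^u)$. The map $u\mapsto e^u$ is smooth and $w$ is $C^1$ in $F$ (the last index of $C^{2,1,1}$). By the chain rule, $u\mapsto\widetilde w(k,T,u)$ is $C^1$ with
\[
\partial_u\widetilde w(k,T,u)=F_u\,\partial_F w(k,T;F_u).
\]
This derivative taken at fixed $k$ is, by definition, $\partial_u w|_k=\mathcal D_w(k,T)$. It is the floating variation $D^{\mathrm{fl}}w$ of Section~\ref{sec:transport-representation}.

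\emph{Step 2 (mean value / Taylor).} For fixed $(k,T)$, apply the mean value theorem in integral form:
\[
\widetilde w(u+\delta u)-\widetilde w(u)=\int_0^1\partial_u\widetilde w(u+s\,\delta u)\,ds\;\delta u .
\]
Subtracting $\mathcal D_w\,\delta u$ leaves a remainder bounded by $\sup_{|s|\le1}|\partial_u\widetilde w(u+s\delta u)-\partial_u\widetilde w(u)|\,|\delta u|$. This is $o(\delta u)$ by continuity of $\partial_u\widetilde w$. If the statement is wanted uniformly on a compact $k$-interval, which is the paper's standing convention (Remark~\ref{rem:wing-conditions}), I would invoke joint continuity of $\partial_F w$ in $(k,F)$. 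Uniform continuity on a compact set then makes the $o(\delta u)$ uniform in $k$.

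\emph{Step 3 (absence of second-order terms).} The increment is expressed exactly through the first $u$-derivative at fixed $k$, and $k$ never moves. So no $\partial_{kk}w$ term can appear. A second-order $k$-operator could only arise from a quadratic-variation (It\^o) contribution $\tfrac12\partial_{uu}w\,d\langle u\rangle$, or from a random coordinate shift in $k$. For a $C^1$ deterministic path $u$ has zero quadratic variation, so the second-order term is $O(\delta u^2)=o(\delta u)$. I would make this explicit by contrasting with the It\^o expansion, where $d\langle\log F\rangle\neq0$ would promote such a term to leading order.

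The only delicate point is interpretive rather than technical: one must be clear that the derivative is taken at fixed $k$ rather than fixed $K$. Under the fixed-strike convention, (3.2) would add the mechanical slide $-\partial_k w$, but this is still first order in $k$. Either way, the claim that no $\partial_k^2$ operator enters holds.
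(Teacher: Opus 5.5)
Your proposal is correct and takes the same approach as the paper, whose proof is simply a first-order Taylor expansion in $u$ at fixed $k$ along the deterministic path. Your integral mean-value remainder and the uniformity-in-$k$ remark just make that argument explicit. One small caveat: with only $C^1$ regularity in $F$, the remainder is $o(\delta u)$ but need not be $O(\delta u^2)$ (e.g.\ $|u|^{3/2}$). The $O(\delta u^2)$ claim in your Step 3 therefore requires $C^2$ in $F$, and should be kept only as a heuristic contrast with the It\^o case.
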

\begin{proof}
Immediate Taylor expansion in $u$ along the deterministic path.
\end{proof}

\begin{proposition}[Stochastic-forward It\^{o} correction]
\label{prop:stochastic-ito}
Let $dF_t/F_t=\mu_t\,dt+\nu_t\,dW_t$ and $u_t=\log F_t$. Assume
$w=w(k,T;u)$ is $C^{2,2,2}$ jointly. Then
\begin{equation}
    dw(k,T;u_t)
    =
    \partial_u w\!\left(\mu_t-\tfrac{1}{2}\nu_t^2\right)dt
    +\partial_u w\,\nu_t\,dW_t
    +\tfrac{1}{2}\partial_{uu}w\,\nu_t^2\,dt.
    \tag{5.15}\label{eq:dw-ito}
\end{equation}
If $\partial_u w=v\,\partial_k w$ along the path, where $v=v(k,u,T;w)$
may depend on $w$ pointwise, then writing $p:=\partial_k w$ and letting
$\partial_k v = \partial_k^{\mathrm{exp}}v + v_w\,p$ denote the total
$k$-derivative of $v$ (explicit plus implicit through $w$),
\begin{equation}
    \partial_{uu}w
    =
    v_u\,p
    +v\,v_w\,p^2
    +v\,(\partial_k v)\,p
    +v^2\,\partial_{kk}w,
    \tag{5.16}\label{eq:double-u}
\end{equation}
where $v_u:=\partial v/\partial u\big|_{k,T,w}$ and $v_w:=\partial v/\partial w$.
When $v$ has no explicit $u$- or $w$-dependence ($v_u\equiv v_w\equiv 0$),
\eqref{eq:double-u} reduces to $v\,\partial_k v\,p+v^2\,\partial_{kk}w$.
\end{proposition}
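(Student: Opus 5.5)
The plan has two independent parts: an application of Itô's formula for \eqref{eq:dw-ito}, and a differentiation of the transport identity for \eqref{eq:double-u}.

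\emph{Step 1 (Itô).} Apply Itô's formula to $\log F_t$ to obtain
\[
du_t=\bigl(\mu_t-\tfrac12\nu_t^2\bigr)dt+\nu_t\,dW_t,
\qquad
d\langle u\rangle_t=\nu_t^2\,dt.
\]
Then freeze $(k,T)$ and regard $w(k,T;\cdot)$ as a deterministic $C^2$ function of the single variable $u$. The $C^{2,2,2}$ hypothesis more than suffices for this. The one-dimensional Itô formula gives
\[
dw=\partial_u w\,du_t+\tfrac12\partial_{uu}w\,d\langle u\rangle_t,
\]
and substituting the two expressions above yields \eqref{eq:dw-ito} directly. No $k$-derivatives enter at this stage, because $k$ is held fixed.

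\emph{Step 2 (second $u$-derivative).} I will interpret ``$\partial_u w=v\,\partial_k w$ along the path'' as holding identically in a neighbourhood of the relevant $(k,u)$, so that the identity may be differentiated. I then differentiate $\partial_u w=v(k,u,T;w)\,p$ in $u$ at fixed $k,T$. The chain rule gives
\[
\partial_{uu}w=\bigl(v_u+v_w\,\partial_u w\bigr)p+v\,\partial_u p .
\]
Substituting $\partial_u w=vp$ into the first bracket produces the terms $v_u\,p+v\,v_w\,p^2$.

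For the remaining term, the joint $C^2$ regularity lets me apply Schwarz's theorem and write $\partial_u p=\partial_k(\partial_u w)=\partial_k(vp)$. Here $\partial_k$ acts on the composite $v(k,u,T;w(k,u,T))$, so it is exactly the total derivative $\partial_k^{\mathrm{exp}}v+v_w p$ named in the statement. The Leibniz rule then gives
\[
\partial_u p=(\partial_k v)\,p+v\,\partial_{kk}w ,
\]
and multiplying by $v$ gives the last two terms of \eqref{eq:double-u}. The special case follows by setting $v_u=v_w=0$.

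\emph{Main obstacle.} The delicate point is bookkeeping rather than analysis. The $v_w$-dependence enters twice: once through $\partial_u w$, giving $v v_w p^2$, and once hidden inside the total $\partial_k v$. I must not double count these. This is why $\partial_k v$ is declared total while $v_u$ is taken at fixed $w$, and I will state this convention explicitly before the computation.

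A secondary point is regularity. Applying Schwarz's theorem to $\partial_u\partial_k w$ needs $w\in C^2$ jointly in $(k,u)$, which the hypothesis provides. It also needs $v$ to be $C^1$ in $(k,u,w)$, which I will add as a standing assumption.
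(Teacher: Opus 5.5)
Your proposal is correct and follows the paper's proof essentially verbatim: It\^{o}'s formula applied to $w(k,T;u_t)$ gives \eqref{eq:dw-ito}, and differentiating $\partial_u w=v\,p$ in $u$ (chain rule $\partial_u v=v_u+v_w\,v\,p$, then Schwarz and Leibniz for $\partial_u p=\partial_k(v\,p)$) gives \eqref{eq:double-u}. Your two explicit standing assumptions are already used implicitly in the paper's argument: that the transport identity holds on an open set so it can be differentiated, and that $v\in C^1$.
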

\begin{proof}
Equation~\eqref{eq:dw-ito} is It\^{o}'s formula applied to $f(u_t)=w(k,T;u_t)$.
For~\eqref{eq:double-u}: differentiate $\partial_u w=v\,p$ in $u$.
Since $v=v(k,u,T;w(k,T;u))$,
\[
    \partial_u v = v_u + v_w\,\partial_u w = v_u + v_w\,v\,p.
\]
Since $\partial_u p = \partial_u\partial_k w = \partial_k(v\,p)
=(\partial_k v)\,p+v\,\partial_{kk}w$, assembling gives
\[
    \partial_{uu}w
    = (\partial_u v)\,p + v\,\partial_u p
    = (v_u + v_w v p)\,p + v\bigl[(\partial_k v)\,p+v\,\partial_{kk}w\bigr],
\]
which is~\eqref{eq:double-u}.
\end{proof}

\begin{proposition}[Backward-parabolic transport excluded]
\label{prop:backward-parabolic-excluded}
Let $L=a_2(k)\,\partial_{kk}+a_1(k)\,\partial_k$ with $a_2\in C^\infty$
and $a_2(k_0)<0$ for some $k_0$. Then $L$ does not generate a locally
well-posed $C^{2,1}$ flow; in particular it cannot define an
admissibility-preserving transport flow on $\mathcal{A}^\circ$.
\end{proposition}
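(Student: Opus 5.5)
The plan is to reduce to the classical ill-posedness of the backward heat equation through a local argument near $k_0$. By continuity of $a_2$, there is an interval $J=(k_0-\delta,k_0+\delta)$ on which $a_2\le -c<0$. On $J$, the evolution $\partial_u w = Lw$ is then a uniformly backward-parabolic equation. After the time reversal $s=-u$ it becomes the forward-parabolic equation $\partial_s w = |a_2|\partial_{kk}w - a_1\partial_k w$. Forward well-posedness in $u$ is therefore equivalent to solvability of a forward parabolic equation \emph{backwards} in $s$. Forward parabolic equations instantaneously smooth their data, so the backward problem cannot be solvable for generic data.

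First, a change of variable $y=\int^k |a_2|^{-1/2}$ puts the principal part into the form $-\partial_{yy}$ up to lower-order terms. A Liouville-type gauge transformation $w=e^{\phi(y)}\tilde w$ then removes the first-order term on a subinterval. This leaves $\partial_u\tilde w = -\partial_{yy}\tilde w + q(y)\tilde w$ with $q$ smooth and bounded on a compact subinterval.

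Next, I would show that no solution operator on $C^{2,1}$ with continuous dependence exists, by two complementary arguments.

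\emph{Hadamard argument (lack of continuous dependence).} Take initial data $w_0$ equal to a fixed admissible surface plus $\varepsilon_n\chi(k)\sin(nk)$. Here $\chi$ is a cutoff supported in $J$ and $\varepsilon_n=n^{-3}$, so the perturbation is small in $C^2$. The frozen-coefficient model solution grows like $e^{c n^2 u}$. An energy estimate on $J$ shows that any local solution satisfies the following: the $L^2$ norm of the perturbation component at frequency $n$ is bounded below by roughly $\varepsilon_n e^{(cn^2-C)u}$, once the commutator terms from $\chi$ and the variable coefficients are absorbed. For every fixed $u>0$ this blows up as $n\to\infty$, so continuous dependence fails on every time interval.

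\emph{Existence argument.} Choose $C^{2,1}$ initial data, indeed $C^\infty$ data, that is not Gevrey-2 near $k_0$. Undoing the time reversal, a local solution on $[0,u_1]$ would make $w(\cdot,0)$ the time-$u_1$ value of a forward-parabolic evolution. Interior parabolic regularity, applied to the forward problem with smooth coefficients, then forces $w(\cdot,0)$ to be analytic (Gevrey) in the interior of $J$. This is a contradiction, so existence fails for generic data.

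For the final clause, note that $\mathcal{A}^\circ$ is open in $C^{2,1}$ on compact domains. Small $C^2$ perturbations of an admissible surface therefore remain admissible. Hence the counterexample data can be chosen inside $\mathcal{A}^\circ$, and the flow cannot be defined there.

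The main obstacle is making the growth or regularity argument rigorous with variable coefficients on a bounded interval without boundary conditions. The danger is that boundary effects might mask the instability. I would try the energy or logarithmic-convexity method (Agmon--Nirenberg) first, applied to $\chi w$. It gives a lower bound $\|\chi w(u)\|\ge \|\chi w(0)\|e^{\lambda u}$ in terms of the Rayleigh quotient of the initial data, and the quotient is of order $n^2$ for the oscillatory data. The commutator terms involving $\chi'$ are lower order and should be controllable.
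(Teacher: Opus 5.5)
Your Hadamard arm is the paper's own argument. The paper's proof simply asserts that a frozen-coefficient geometric-optics construction gives data $w_0^{(n)}\to w_0^*$ in $C^{2,1}$ with $\|S_u w_0^{(n)}\|_{C^0}\gtrsim e^{\eta n^2u}$, citing Friedman. It leaves implicit exactly the localization issue you flag.

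The step you propose to close that issue would fail. Log-convexity applied to $\chi w$ produces the forcing term $2\chi'\partial_y w+\chi''w$. This term is first order in $w$ and lives on the support of $\chi'$, where $\chi w$ gives no control of $w$. It is therefore not a lower-order perturbation in the Agmon--Nirenberg sense.

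In fact, no lower bound on $\|\chi w(u)\|$ in terms of $\chi w(\cdot,0)$ alone can hold for arbitrary local solutions on $J$. Here is a counterexample:
\begin{itemize}
\item Take $\rho$ a Gevrey-$s$ cutoff ($1<s<2$) vanishing near $\tau=0$ and equal to $1$ near $\tau=u_1$, and set $g(\tau)=n\rho(\tau)e^{-n^2(\tau-u_1)}$.
\item Then $v=\sum_{i\ge0}y^{2i+1}g^{(i)}(\tau)/(2i+1)!$ converges and solves $v_\tau=v_{yy}$, with $v(\cdot,0)=0$ and $v(\cdot,u_1)=\sin(ny)$.
\item So $w(y,u)=\varepsilon v(y,u_1-u)$ solves your normal form $\partial_u w=-\partial_{yy}w$ on $J\times[0,u_1]$.
\item It has $\chi w(\cdot,0)=\varepsilon\chi\sin(ny)$, as in your construction, yet $w(\cdot,u_1)\equiv0$.
\end{itemize}
Without lateral conditions, the instability lives in the space-time supremum, not at a prescribed time.

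Your existence arm is the right idea, but it is misstated. With $C^\infty$ coefficients, interior parabolic regularity gives $C^\infty$ regularity, not analytic or Gevrey regularity. Stationary solutions show this: $a_2w''+a_1w'=0$ gives $w'\propto\exp(-\int a_1/a_2)$, which is smooth but in general not analytic. So non-Gevrey $C^\infty$ data give no contradiction.

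Use instead data that are $C^2$ but not $C^3$ at $k_0$, e.g.\ $w_0^*+\varepsilon\chi(k)|k-k_0|^{5/2}$. A $C^{2,1}$ solution on $J\times[0,u_1]$, time-reversed via $\tau=u_1-u$, is a classical solution of a uniformly parabolic equation with smooth coefficients. This needs $a_1$ smooth, which the statement leaves implicit. Interior Schauder estimates on backward cylinders $(k-r,k+r)\times(\tau-r^2,\tau]$ reach the final time $\tau=u_1$. They force $w_0\in C^\infty(J)$, a contradiction.

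The same estimates, made quantitative, also repair the continuity arm without any energy method. Let $z$ be the difference of the solutions issued from $w_0^*$ and $w_0^*+n^{-3}\chi\sin(nk)$. For $J'\Subset J$, the estimate $\sup_{J'}|\partial_k^4 z(\cdot,0)|\le C\sup_{J\times[0,u_1]}|z|$ forces $\sup|z|\gtrsim n$. Meanwhile the data converge in $C^2$.

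Compared with the paper, this regularity route needs no WKB construction. It handles the mixed-type case ($a_2$ changing sign) by pure localization, and it does not presuppose that the perturbed problems are solvable. It gives up only the fixed-time rate $e^{cn^2u}$, which ill-posedness does not require.

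Your final step is fine. Openness of $\mathcal{A}^\circ$ is not automatic on the noncompact domain. However, you only need perturbations compactly supported in $(k,T)$, and these preserve the strict inequalities.
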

\begin{proof}
Where $a_2<-\eta<0$ the principal symbol $-a_2(k)\xi^2>\eta\xi^2$ is
positive at high frequencies. A standard frozen-coefficient
geometric-optics construction produces $w_0^{(n)}\in C^{2,1}$ with
$\|w_0^{(n)}-w_0^*\|_{C^{2,1}}\to 0$ whose solutions satisfy
$\|S_u w_0^{(n)}\|_{C^0}\gtrsim e^{\eta n^2 u}$ for any $u>0$,
contradicting continuity of the solution operator in
$C^{2,1}$~\cite{Friedman1964}. Hence the initial-value problem for $L$
is ill-posed in $C^{2,1}$, so $L$ cannot generate a continuous flow
on $\mathcal{A}^\circ$.
\end{proof}

\begin{remark}[$N\ge3$ obstruction]
\label{rem:higher-order-obstruction}
For $N\ge3$, the flow $\partial_u w=c_N(k)\partial_k^N w+\cdots$ (with
$c_N\not\equiv0$) also fails to preserve $\mathcal{A}^\circ$, by a
different mechanism. The admissibility constraints on $\mathcal{A}^\circ$
involve only $w$, $\partial_k w$, and $\partial_{kk}w$ through the
Gatheral density $g[w]>0$; the higher derivatives
$\partial_k^j w(k_0)$ for $j\ge3$ are freely specifiable at any point
$k_0$ subject to $w_0\in\mathcal{A}^\circ$.
Computing via the chain rule,
$\partial_u g[w_u](k_0)\big|_{u=0}
=\tfrac{1}{2}c_N(k_0)\,\partial_k^{N+2}w_0(k_0)+O(1)$,
where $O(1)$ depends only on $\partial_k^{j\le N+1}w_0(k_0)$.
By choosing $c_N(k_0)\cdot\partial_k^{N+2}w_0(k_0)$ arbitrarily large
and negative while keeping $g[w_0]>0$ globally (which the free choice of
$\partial_k^{N+2}w_0$ permits), one forces $g[w_u](k_0)<0$ for
arbitrarily small $u>0$. A complete proof controlling the $O(u^2)$
remainder lies outside the scope of this paper; see
Section~\ref{sec:extensions} for further discussion.
\end{remark}

\begin{remark}[Admissible transport class: scope and motivation]
\label{rem:admissible-class}
Throughout this paper we work within transport operators of the
convex-parabolic form
\begin{equation}
    \mathcal{D}_w
    =
    v(k,T;w)\,\partial_k w
    +\tfrac{1}{2}D^2(k,T;w)\,\partial_{kk}w,\quad D^2\ge0.
    \tag{5.17}\label{eq:transport-with-diffusion}
\end{equation}
This class is motivated on two grounds.
First, Proposition~\ref{prop:deterministic-leading-order} shows that
deterministic-forward dynamics produces the pure first-order operator
$v\,\partial_k w$ at leading order.
Second, Proposition~\ref{prop:stochastic-ito} shows that a stochastic
forward appends an It\^{o} correction $\tfrac{1}{2}\nu_t^2 v^2\,\partial_{kk}w$
with non-negative coefficient $D^2=\nu_t^2 v^2\ge0$ (when $v_w\equiv0$).
The case $D^2<0$ is excluded by
Proposition~\ref{prop:backward-parabolic-excluded} (backward-parabolic
ill-posedness); operators of order $N\ge3$ are heuristically obstructed
as described in Remark~\ref{rem:higher-order-obstruction}.
When $v=v(k,u,T;w)$ depends on $w$, the It\^{o} correction generates an
additional term $\tfrac{1}{2}\nu_t^2\,v\,v_w\,(\partial_k w)^2$ outside
the class~\eqref{eq:transport-with-diffusion};
whether this extended quasilinear class preserves $\mathcal{A}^\circ$ is
left for future work.
\end{remark}
\begin{remark}[Content of the transport framework]
\label{rem:canonical-content}
Any smooth dynamics $\mathcal{D}_w$ admits the pointwise factorization
$v := \mathcal{D}_w/\partial_k w$ wherever $\partial_k w\neq0$, so the
factorization itself is algebraic and imposes no constraint.
The content of the framework lies in the existence of a smooth, stable,
and empirically identifiable velocity field whose finite jets generate
\emph{closed} dynamics for the corresponding smile jets: by
Theorem~\ref{thm:jet-transport}, the ATM $n$-jet $(a_0,\ldots,a_n)$
evolves as a self-contained system governed solely by the velocity
$n$-jet $(\beta,\eta,\psi,\ldots)$, even though the full dynamics
involves the entire field $v(k)$. This finite-jet closure is a
structural consequence of the transport form and not a generic property
of arbitrary surface dynamics. The SSR is the order-zero truncation;
skew transport is order one; curvature transport is order two.
\end{remark}

\subsection{Local Jet Transport}
\label{sec:local-jet-transport}
The Jet Transport Theorem~\ref{thm:jet-transport} expands the transport
velocity $v$ around the at-the-money point $k=0$. The theorem itself is
however a pointwise application of the Leibniz rule and holds equally at any
expansion center $k_0$. Setting $k_0\neq0$ yields local transport equations
governing the dynamics of the variance surface away from the money, and
the family $\{v_0(k_0)\}_{k_0}$ provides a nonparametric identification of
the transport velocity profile $v(k)$.

\begin{definition}[Local variance jet]
\label{def:local-jet}
For $w\in C^{N+1}$ in $k$ near $k_0\in\mathbb R$, the
\emph{local variance jet at $k_0$} is
\begin{equation}
    b_n^{(k_0)}(u,T) := \partial_k^n w(k_0,u,T), \quad n=0,1,\ldots,N,
    \tag{5.18}\label{eq:local-jet-def}
\end{equation}
and the \emph{local velocity jet} is
$v_j(k_0,u,T) := \partial_k^j\tilde v(k_0;u,T)$,
the $j$-th total $k$-derivative of the composite
$\tilde v(k;u,T):=v(k,u,T;w(k,u,T))$ evaluated at $k_0$.
Setting $k_0=0$ recovers the ATM objects: $b_n^{(0)}=a_n$ and
$v_j(0,u,T)=v_j(u,T)$ as in \eqref{eq:jet_vel}.
\end{definition}

\begin{corollary}[Local jet transport hierarchy]
\label{thm:local-jet-transport}
Under the same regularity near $k_0$ as Theorem~\ref{thm:jet-transport}
requires near $k=0$,
\begin{equation}
    \dot b_n^{(k_0)}
    = \sum_{j=0}^{n} \binom{n}{j}\,v_j(k_0,u,T)\,b_{n-j+1}^{(k_0)}(u,T),
    \quad 0\le n\le N,
    \tag{5.19}\label{eq:local-jet-hierarchy}
\end{equation}
where $v_j(k_0,u,T):=\partial_k^j\tilde v(k_0;u,T)$ denotes the $j$-th
total $k$-derivative of the composite $\tilde v(k;u,T):=v(k,u,T;w(k,u,T))$
evaluated at $k_0$.
\end{corollary}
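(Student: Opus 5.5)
The plan is to repeat the proof of Theorem~\ref{thm:jet-transport} with the expansion point moved from $k=0$ to a general $k_0$. One way is to reduce to the ATM case by a translation in $k$; another is to rerun the Leibniz computation at $k_0$. I would present the translation argument as the main route and note that the direct computation gives the same result line by line.

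Fix $k_0$ and define shifted objects
\[
\hat w(\kappa,u,T):=w(k_0+\kappa,u,T),\qquad \hat v(\kappa,u,T;x):=v(k_0+\kappa,u,T;x).
\]
Since $\partial_\kappa=\partial_k$ and the shift does not involve $u$, $\hat w$ satisfies $\partial_u\hat w=\hat v\,\partial_\kappa\hat w$.

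The regularity hypotheses also carry over. Assume $v\in C^{N+1}$ jointly in $(k,w)$ near $(k_0,w(k_0,u,T))$ and $w\in C^{N+1,1}$ jointly in $(k,u)$ near $k_0$. Then $\hat v$ and $\hat w$ satisfy exactly the hypotheses of Theorem~\ref{thm:jet-transport} near $\kappa=0$.

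The shifted composite is
\[
\tilde{\hat v}(\kappa;u,T)=\hat v\bigl(\kappa,u,T;\hat w(\kappa,u,T)\bigr)=\tilde v(k_0+\kappa;u,T).
\]
From this:
\begin{itemize}
\item its $\kappa$-jets at $\kappa=0$ are $\partial_\kappa^j\tilde{\hat v}(0)=v_j(k_0,u,T)$;
\item the ATM jets of $\hat w$ are $\hat a_n=\partial_\kappa^n\hat w(0,u,T)=b_n^{(k_0)}$.
\end{itemize}
Applying Theorem~\ref{thm:jet-transport} to $(\hat w,\hat v)$ gives $\dot{\hat a}_n=\sum_j\binom{n}{j}\hat v_j\,\hat a_{n-j+1}$. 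Substituting the identifications above yields \eqref{eq:local-jet-hierarchy}.

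For the direct computation, the steps are the same as in the ATM proof:
\begin{enumerate}
\item Commute $\partial_u$ with $\partial_k^n$ at $(k_0,u)$ using Schwarz's theorem under $C^{N+1,1}$ regularity.
\item Substitute the transport equation $\partial_u w=\tilde v\,\partial_k w$.
\item Expand $\partial_k^n(\tilde v\,\partial_k w)$ by the Leibniz rule. This requires $\tilde v\in C^{N+1}$ in $k$, which follows from the chain rule.
\item Evaluate at $k=k_0$.
\end{enumerate}

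The only point needing care is the regularity transfer, in particular the commutation $\partial_u\partial_k^n w=\partial_k^n\partial_u w$ at $k_0$. This is exactly why the hypotheses must be localized near $(k_0,w(k_0,u,T))$ rather than near ATM. One should also check that the total $k$-derivative of $\tilde v$ is unaffected by the translation, which is immediate. I expect no genuine obstacle: the corollary is the statement that Theorem~\ref{thm:jet-transport} is translation-invariant in $k$, since neither the transport equation nor the Leibniz expansion singles out $k=0$.
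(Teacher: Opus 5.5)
Your proposal is correct and matches the paper's argument: the paper simply observes that the proof of Theorem~\ref{thm:jet-transport} uses only the local structure of $w$ and $\tilde v$ near the expansion point, and reruns it with $k=0$ replaced by $k_0$, which is exactly your direct computation. Your translation reduction $\hat w(\kappa,u,T)=w(k_0+\kappa,u,T)$ packages the same observation more cleanly, since it lets you invoke the theorem as a black box instead of re-inspecting its proof.
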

\begin{proof}
The proof of Theorem~\ref{thm:jet-transport} uses only the local structure
of $w$ and $\tilde v$ near the expansion point. Replacing $k=0$ by $k=k_0$ and
$a_n, v_j$ by $b_n^{(k_0)}, v_j(k_0)$ throughout gives
\eqref{eq:local-jet-hierarchy} verbatim.
\end{proof}

\begin{proposition}[Nonparametric identification of the velocity field]
\label{prop:identification-v}
Let $\Omega:=\{k_0:\partial_k w(k_0,u,T)\neq 0\}$. The composite transport
velocity $\tilde v(k_0;u,T):=v(k_0,u,T;w(k_0,u,T))$ is identified
pointwise on $\Omega$ by
\begin{equation}
    \tilde v(k_0;u,T) = \frac{\partial_u w(k_0,u,T)}{\partial_k w(k_0,u,T)}
    = \frac{\dot b_0^{(k_0)}}{b_1^{(k_0)}}.
    \tag{5.20}\label{eq:identification-v}
\end{equation}
Moreover, given a discrete time series $\{w(\cdot,u_i,T)\}$ with step
$\Delta u = u_{i+1}-u_i$, the first-difference estimator
\[
    \hat{v}(k_0,T) := \frac{\Delta b_0^{(k_0)}}{b_1^{(k_0)}\,\Delta u}
\]
satisfies $\hat{v}(k_0,T) = \tilde v(k_0;u,T) + O(\Delta u)$ uniformly on compact
subsets of $\Omega$ as $\Delta u\to 0$.
\end{proposition}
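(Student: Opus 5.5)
The identity \eqref{eq:identification-v} is an algebraic consequence of the transport equation, so we will prove it first and treat the consistency of $\hat v$ as a Taylor-remainder estimate. Evaluating $\partial_u w=\tilde v\,\partial_k w$ at $(k_0,u,T)$ gives $\dot b_0^{(k_0)}=\tilde v(k_0;u,T)\,b_1^{(k_0)}$. This is also the $n=0$ case of Corollary~\ref{thm:local-jet-transport}, since only the $j=0$ term survives. On $\Omega$ we have $b_1^{(k_0)}\neq0$, so dividing gives the identity. Identification follows because the right-hand side is a functional of the observed surface trajectory alone. As Remark~\ref{rem:vj-w-dependent} notes, this identifies only the composite $\tilde v$, not $v$ separately in $k$ and $w$.

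For the estimator, we fix a compact $K\subset\Omega$ and a compact $u$-window $[u,u+\Delta u_{\max}]$. Since $w\in C^{N+1,1}$ jointly (with $N\ge 0$), $\partial_u w$ is Lipschitz in $u$, or $C^1$ if we assume $C^{2}$ regularity in $u$. The mean value theorem, or Taylor's theorem with integral remainder, then gives
\[
\Delta b_0^{(k_0)}=\partial_u w(k_0,u,T)\,\Delta u+R(k_0,\Delta u),
\qquad
|R|\le \tfrac12 L\,\Delta u^2,
\]
where $L$ is a Lipschitz constant for $\partial_u w$ in $u$, uniform over $K\times[u,u+\Delta u_{\max}]$. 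We obtain this uniformity by taking the Lipschitz bound on a compact neighbourhood.

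Dividing by $b_1^{(k_0)}\Delta u$ gives
\[
\hat v(k_0,T)-\tilde v(k_0;u,T)=\frac{R(k_0,\Delta u)}{b_1^{(k_0)}\,\Delta u}.
\]
Continuity of $\partial_k w$ and compactness of $K$ give $m:=\min_{K}|\partial_k w(\cdot,u,T)|>0$. The error is therefore bounded by $L\,\Delta u/(2m)$ uniformly on $K$, which is the claimed $O(\Delta u)$ bound.

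The main obstacle is securing the uniform second-order control in $u$. The hypothesis ``$C^{N+1,1}$'' as used in Theorem~\ref{thm:jet-transport} gives only Lipschitz regularity in $u$ of $w$ itself, which is not enough for an $O(\Delta u^2)$ remainder. To close this gap we will differentiate the transport equation in $u$ as in Proposition~\ref{prop:stochastic-ito}, equation~\eqref{eq:double-u}. This expresses $\partial_{uu}w$ through $v_u$, $v_w$, $\partial_k v$, $\partial_k w$ and $\partial_{kk}w$. Each of these is continuous, hence bounded on compacts, under the standing $C^{1}$ assumptions on $v$ and $C^{2}$ in $k$ on $w$, and this supplies the constant $L$. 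If only Lipschitz regularity of $\partial_u w$ is available, the same argument still yields the bound through the Lipschitz constant in place of $\sup|\partial_{uu}w|$.
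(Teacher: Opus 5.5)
Your proposal is correct and follows the paper's own argument. The identity is the $n=0$ row of the local jet hierarchy (equivalently, the transport equation evaluated at $k_0$) divided by $b_1^{(k_0)}\neq 0$, and the estimator bound comes from a second-order Taylor expansion in $u$ together with a positive lower bound on $|\partial_k w|$ over compact subsets of $\Omega$. Your extra step of controlling $\partial_{uu}w$ through \eqref{eq:double-u}, which also needs $v$ to be $C^1$ in $u$, makes explicit the regularity the paper assumes silently when it invokes boundedness of $\ddot b_0^{(k_0)}/b_1^{(k_0)}$.
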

\begin{proof}
Setting $n=0$ in \eqref{eq:local-jet-hierarchy} gives $\dot b_0^{(k_0)} =
v_0(k_0)\,b_1^{(k_0)}$. Since $v_0(k_0)=\tilde v(k_0;u,T)$ by definition
and $b_1^{(k_0)}=\partial_k w(k_0,u,T)\neq 0$ on $\Omega$,
solving yields \eqref{eq:identification-v}. For the estimator, Taylor's
theorem with remainder gives $\Delta b_0^{(k_0)} = \dot b_0^{(k_0)}\Delta u
+ \tfrac{1}{2}\ddot b_0^{(k_0)}(\Delta u)^2 + O((\Delta u)^3)$; dividing by
$b_1^{(k_0)}\Delta u$ and using boundedness of $\ddot b_0^{(k_0)}/b_1^{(k_0)}$
on compact subsets of $\Omega$ establishes the $O(\Delta u)$ bound.
\end{proof}

\begin{remark}
\label{rem:hierarchy-not-forced}
The strict ordering $v(k_0)>\partial_k v(k_0)>\partial_{kk}v(k_0)$ observed
empirically on SPX is \emph{not} a consequence of the admissibility
conditions (V1)--(V3) and (W); it is a property of equity markets, not of the
theory. For instance, $v(k)=\tfrac{1}{2}+\varepsilon\sin(k)$ with
$\varepsilon\in(0,\tfrac{1}{4})$ satisfies all admissibility conditions but
violates the ordering at $k_0=-\pi/2$.
\end{remark}

\subsection{Sequential Estimation and Empirical Identification}
\label{sec:sequential_estimation}
The Jet Transport Theorem establishes a hierarchy of local transport equations
governing the evolution of the variance jet. We now derive a sequential
estimator for the transport coefficients $\beta,\eta,\psi$ that is consistent
with the hierarchy.

Recall from \eqref{eq:jet_vel} that $a_n := \partial_k^n w(0,u,T)$ are the
ATM total-variance jets; the surface has the Taylor expansion
\begin{equation}
w(k,u,T)
=
\sum_{n=0}^{\infty}
\frac{a_n(u,T)}{n!}\,k^n.
\tag{5.21}\label{eq:local_taylor_empirical}
\end{equation}
The Jet Transport Theorem (eq.~\eqref{eq:jet-three}) gives the exact coupled
dynamics
\begin{equation}
\dot a_n
=
\sum_{j=0}^{n}\binom{n}{j}v_j\,a_{n-j+1},
\qquad
\dot a_n := \partial_u a_n,
\tag{5.22}\label{eq:jet_regression}
\end{equation}
which is the lower-triangular linear system in
matrix form.  Because the matrix is lower triangular with $a_1\neq0$ on the
diagonal, the coefficients $(\beta,\eta,\psi)$ are recovered exactly in the
noiseless case by \emph{forward substitution}: solve row $n=0$ for $\beta$,
then use $\hat\beta$ to eliminate the known term from row $n=1$ and solve for
$\eta$, and so on.

The resulting sequential estimator is as follows.
The first step estimates $\beta=v_0$ directly from the $n=0$ row, which
contains no cross-terms:
\begin{equation}
\Delta a_0
=
\beta\, a_1\Delta u
+
\varepsilon_0.
\tag{5.23}\label{eq:ssr_regression}
\end{equation}
The second step subtracts the known $\hat\beta$ contribution from the $n=1$
row and regresses the residual on $a_1$:
\begin{equation}
\Delta a_1 - \hat\beta\, a_2\,\Delta u
=
\eta\, a_1\,\Delta u
+
\varepsilon_1,
\tag{5.24}\label{eq:skew_regression}
\end{equation}
isolating $\eta=v_1$.  The third step subtracts both known contributions from
the $n=2$ row:
\begin{equation}
\Delta a_2 - \hat\beta\, a_3\,\Delta u - 2\hat\eta\, a_2\,\Delta u
=
\psi\, a_1\,\Delta u
+
\varepsilon_2,
\tag{5.25}\label{eq:curv_regression}
\end{equation}
isolating $\psi=v_2$.  The factors $1$, $1$, $2$ in the subtracted terms are
the binomial coefficients $\binom{n}{j}$ from \eqref{eq:jet_regression}; they
are not free parameters.  Higher-order coefficients $v_n$, $n\ge 3$, follow
the same pattern.

\begin{proposition}[Forward substitution recovers the jet hierarchy exactly]
\label{prop:forward-sub}
Suppose $a_1\neq 0$ and let $\varepsilon_n=0$ in
\eqref{eq:ssr_regression}--\eqref{eq:curv_regression}. Then the
forward-substitution procedure uniquely recovers $(\beta,\eta,\psi)$.
Equivalently, equations \eqref{eq:ssr_regression}--\eqref{eq:curv_regression}
are the forward-substitution solution of the lower-triangular linear system
\begin{equation}
    \begin{pmatrix}a_1 & 0 & 0\\ a_2 & a_1 & 0\\ a_3 & 2a_2 & a_1\end{pmatrix}
    \begin{pmatrix}\beta\\\eta\\\psi\end{pmatrix}\Delta u
    =
    \begin{pmatrix}\Delta a_0\\\Delta a_1\\\Delta a_2\end{pmatrix},
    \tag{5.26}\label{eq:lower-tri-system}
\end{equation}
whose determinant is $a_1^3\neq 0$.
\end{proposition}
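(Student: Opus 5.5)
The plan is to write the three noiseless regression equations \eqref{eq:ssr_regression}--\eqref{eq:curv_regression} as a single matrix identity and read off the structure. With $\varepsilon_0=\varepsilon_1=\varepsilon_2=0$, the first row reads $a_1\beta\,\Delta u=\Delta a_0$. The second, after moving the subtracted term $\hat\beta a_2\Delta u$ back to the left and identifying $\hat\beta$ with the unknown $\beta$, reads $a_2\beta\,\Delta u+a_1\eta\,\Delta u=\Delta a_1$. The third similarly reads $a_3\beta\,\Delta u+2a_2\eta\,\Delta u+a_1\psi\,\Delta u=\Delta a_2$. Stacking these gives exactly \eqref{eq:lower-tri-system}. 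The coefficient pattern $(1;1,1;1,2,1)$ is precisely the binomial pattern $\binom{n}{j}$ in the rows $n=0,1,2$ of the jet hierarchy \eqref{eq:jet_regression} (Theorem~\ref{thm:jet-transport}), with $\dot a_n$ replaced by $\Delta a_n/\Delta u$. So the system is the first-order discretization of \eqref{eq:jet-three}, and it holds exactly when $\dot a_n$ is interpreted as the increment ratio.

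Next, the matrix $M$ in \eqref{eq:lower-tri-system} is lower triangular, so its determinant is the product of its diagonal entries, $a_1^3$. This is nonzero under the hypothesis $a_1\neq0$, and $\Delta u\neq0$ is implicit. Hence $M\Delta u$ is invertible and $(\beta,\eta,\psi)$ is the unique solution.

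The key observation is that, for a lower-triangular system with nonzero diagonal, the unique solution is produced by forward substitution. One solves row $0$ to get $\beta=\Delta a_0/(a_1\Delta u)$. Substituting this into row $1$ gives $\eta=(\Delta a_1-\beta a_2\Delta u)/(a_1\Delta u)$. Substituting both into row $2$ gives $\psi=(\Delta a_2-\beta a_3\Delta u-2\eta a_2\Delta u)/(a_1\Delta u)$. Each step is exactly one of the regressions \eqref{eq:ssr_regression}, \eqref{eq:skew_regression}, \eqref{eq:curv_regression} with zero residual. In the noiseless case a regression of an exactly linear relation through the origin on the nonzero regressor $a_1\Delta u$ returns the exact coefficient, so $\hat\beta=\beta$, $\hat\eta=\eta$, $\hat\psi=\psi$ by induction on the row index.

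The only point needing care, which I expect to be the main subtlety, is the meaning of ``recovers exactly''. The exact hierarchy involves $\dot a_n$, while the system uses increments. By Proposition~\ref{prop:identification-v}-type Taylor arguments the two agree only up to $O(\Delta u)$. I would therefore state explicitly that the claim concerns the algebraic system \eqref{eq:lower-tri-system} with $\varepsilon_n=0$, where the increments are taken to satisfy it exactly. Equivalently, it concerns the limit $\Delta u\to0$, in which $\Delta a_n/\Delta u\to\dot a_n$; the recovered coefficients then coincide with $v_0,v_1,v_2$ of \eqref{eq:named-coeffs}, by continuity of $M^{-1}$ since $a_1\neq0$.
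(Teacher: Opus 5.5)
Your proposal is correct and follows the paper's argument: you identify the matrix entries as the Leibniz coefficients $\binom{n}{j}a_{n-j+1}$ from the jet hierarchy, read off $\det=a_1^3\neq0$ from the diagonal, and note that forward substitution row by row reproduces \eqref{eq:ssr_regression}--\eqref{eq:curv_regression}. Your added remarks are sensible clarifications that the paper leaves implicit: that $\Delta u\neq0$ is needed, and that exactness refers to the algebraic system (with $\dot a_n$ recovered only as $\Delta u\to0$).
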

\begin{proof}
The matrix entries $\binom{n}{j}a_{n-j+1}$ are exactly the Leibniz
coefficients from the jet hierarchy \eqref{eq:jet_regression}: the $(n,j)$
entry is $\binom{n}{j}a_{n-j+1}$ for $j\le n$ and zero otherwise. The
diagonal is $a_1$ throughout, so $\det = a_1^3\neq 0$. Forward substitution
on the lower-triangular system solves row $n=0$ for $\beta$, then row
$n=1$ for $\eta$ after substituting $\hat\beta$, then row $n=2$ for $\psi$
after substituting $\hat\beta$ and $\hat\eta$, giving exactly
\eqref{eq:ssr_regression}--\eqref{eq:curv_regression}.
\end{proof}

Under noise ($\varepsilon_n\neq 0$), forward substitution is the preferred
implementation. Each step isolates one coefficient from the equations below
it: $\beta$ is identified from row~$n=0$ alone, $\eta$ from row~$n=1$
after substituting $\hat\beta$, and so on. This means that misspecification
in equation $n$ — for instance an omitted higher-order jet term — does not
contaminate the estimates of coefficients identified in earlier rows. The
estimator degrades gracefully: $\hat\beta$ is unaffected by any problem in
the $\eta$ or $\psi$ equations, and $\hat\eta$ is unaffected by any problem
in the $\psi$ equation. 

After estimating $\hat\beta$, $\hat\eta$, $\hat\psi$, the predicted change in
the variance surface is obtained by substituting the estimated hierarchy into
\eqref{eq:local_taylor_empirical}:
\begin{equation}
\widehat{\Delta w}(k)
=
\Delta u\sum_{n=0}^{N}\frac{k^n}{n!}\,\widehat{\dot a}_n,
\qquad
\widehat{\dot a}_n
:=
\sum_{j=0}^{n}\binom{n}{j}\hat v_j\,a_{n-j+1},
\tag{5.27}\label{eq:jet_prediction}
\end{equation}
where $\hat v_0=\hat\beta$, $\hat v_1=\hat\eta$, $\hat v_2=\hat\psi$.
Explicitly for the first three orders:
\[
\widehat{\Delta w}(k)
=
\Delta u\Bigl[
\hat\beta\,a_1
+(\hat\beta\,a_2+\hat\eta\,a_1)\,k
+\tfrac{1}{2}(\hat\beta\,a_3+2\hat\eta\,a_2+\hat\psi\,a_1)\,k^2
+\cdots
\Bigr].
\]

\begin{remark}
The empirical implementation of this estimator requires ATM total-variance jets
$a_0,a_1,a_2,a_3$ extracted from market data by fitting \eqref{eq:local_taylor_empirical}
with factorial denominators intact (see eq.~\eqref{eq:jet-poly-fit}).
\end{remark}

\FloatBarrier
\section{Arbitrage-Free Transport}
\label{sec:arb-preservation}
The transport hierarchy of Section~\ref{sec:general-transport} produces a broad
family of candidate dynamics. This section identifies conditions on the
velocity field under which the induced flow remains in~$\mathcal{A}^\circ$.

\subsection{Transport of the Calendar Density}

\begin{lemma}[Calendar-density transport]
\label{lem:calendar-transport}
Let $w\in C^{2,1}$ solve $\partial_u w=v\,\partial_k w$ with $v\in C^{1}$ in
$(k,T,w)$. Set $q:=\partial_T w$. Then
\begin{equation}
    \partial_u q
    =
    v\,\partial_k q
    +(\partial_T v)\,\partial_k w
    +v_w[q]\,\partial_k w,
    \tag{6.1}\label{eq:q-transport}
\end{equation}
where $v_w[q]$ denotes the Fr\'echet derivative of $v$ in its $w$-argument,
applied to the perturbation $q=\partial_T w$.
\end{lemma}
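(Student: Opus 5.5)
Differentiate the transport equation $\partial_u w = v\,\partial_k w$ in the maturity variable $T$, holding $(k,u)$ fixed. The plan is a chain-rule computation: treat the composite $\tilde v(k,u,T) := v(k,u,T;w(k,u,T))$ as a function of $T$ through both its explicit $T$-argument and its $w$-argument, and then exchange the order of the $u$- and $T$-derivatives (and the $k$- and $T$-derivatives) of $w$.

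\emph{Step 1: the left-hand side.} Interpret the $C^{2,1}$ hypothesis as guaranteeing that $\partial_u w$, $\partial_T w$, $\partial_k w$ and the mixed derivatives $\partial_T\partial_u w$, $\partial_T\partial_k w$ exist and are continuous. Schwarz's theorem then gives $\partial_T\partial_u w = \partial_u\partial_T w = \partial_u q$ and $\partial_T\partial_k w = \partial_k q$. If the stated regularity does not literally cover the mixed $(u,T)$ derivative, strengthen it tacitly to ``$w$ is $C^1$ in $(u,T)$ with continuous mixed partials''. This is the only delicate point, and I regard it as the main, essentially bookkeeping, obstacle.

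\emph{Step 2: the product rule on the right-hand side.} We have $\partial_T(\tilde v\,\partial_k w) = (\partial_T\tilde v)\,\partial_k w + \tilde v\,\partial_k q$. The second term is $v\,\partial_k q$, the first term of (6.1).

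\emph{Step 3: the total $T$-derivative of $\tilde v$.} Expand
\[
\partial_T\tilde v = (\partial_T v) + v_w[\partial_T w] = \partial_T v + v_w[q].
\]
Here $\partial_T v$ is the explicit partial derivative at fixed $(k,u,w)$, and $v_w[\cdot]$ is the derivative of $v$ in its $w$-slot; it is the ordinary partial $\partial v/\partial w$ times $q$ when $v$ depends on $w$ pointwise, and the Fréchet derivative applied to $q$ when $v$ depends functionally on the surface. The $C^1$ assumption on $v$ in $(k,T,w)$ justifies this chain rule, since $w$ is $C^1$ in $T$. Substituting into Step 2 gives
\[
\partial_u q = v\,\partial_k q + (\partial_T v)\,\partial_k w + v_w[q]\,\partial_k w,
\]
which is (6.1).

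Finally, I would remark that when $v$ depends on $w$ through a nonlocal functional, the same argument goes through provided $T\mapsto w(\cdot,u,T)$ is differentiable in the topology in which $v$ is Fréchet differentiable. The Fréchet chain rule then replaces the pointwise one in Step 3 verbatim.
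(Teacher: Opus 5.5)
Your proposal is correct and follows the paper's proof essentially verbatim: differentiate the transport equation in $T$, apply the product rule and the chain rule to the composite velocity, and commute $\partial_u$ and $\partial_k$ with $\partial_T$ by Schwarz's theorem. Your caveat that $C^{2,1}$ does not literally guarantee the mixed $(u,T)$ and $(k,T)$ partials is fair; the paper passes over the same point with ``justified by $C^{2,1}$ regularity''.
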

\begin{proof}
Differentiate $\partial_u w=v(k,u,T;w)\,\partial_k w$ with respect to $T$:
\begin{align*}
    \partial_u\partial_T w
    &=(\partial_T v)\,\partial_k w + v_w[\partial_T w]\,\partial_k w
     + v\,\partial_k\partial_T w.
\end{align*}
Setting $q=\partial_T w$ and using $\partial_u\partial_T=\partial_T\partial_u$
(justified by $C^{2,1}$ regularity) yields~\eqref{eq:q-transport}.
\end{proof}

\begin{proposition}[Local calendar preservation]
\label{prop:calendar-preserved}
In the setting of Lemma~\ref{lem:calendar-transport}, if
$q(k,0,T)\ge q_*>0$ uniformly in $k$ and $v$, $\partial_T v$, $v_w$,
$\partial_k w$ are bounded, then $q>0$ for all sufficiently small $|u|$.
\end{proposition}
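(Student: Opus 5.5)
The plan is to treat \eqref{eq:q-transport} as a linear first-order transport equation for $q$ with a source term, and to integrate it along the characteristic curves \eqref{eq:q-transport} shares with the transport equation for $w$. Write the source as $R(k,u,T):=(\partial_T v)\,\partial_k w + v_w[q]\,\partial_k w$. Under the hypotheses, $\partial_T v$, $\partial_k w$ and $v_w$ are bounded, say by $M_1$, $M_2$, $M_3$. Interpreting $v_w[q]$ as the pointwise (or bounded-operator) derivative gives $|v_w[q]|\le M_3\|q(\cdot,u,T)\|_\infty$. Hence
\[
|R|\le M_1M_2+M_2M_3\,\|q(\cdot,u,T)\|_\infty .
\]

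First, I would use the characteristics $dk_s/ds=-v(k_s,s,T;w)$ from (5.4). These are well defined because $v\in C^1$ is bounded. Along them \eqref{eq:q-transport} becomes the ODE
\[
\frac{d}{ds}q(k_s,s,T)=-R(k_s,s,T).
\]
Integrating from $0$ to $u$ expresses $q(k,u,T)$ as $q$ at the foot point $\chi_{u\to0}(k)$ plus an integral of $R$. Since $q(\cdot,0,T)\ge q_*$ at every foot point,
\[
q(k,u,T)\ge q_*-\int_0^{|u|}|R|\,ds .
\]

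Second, I would close the estimate with a Gr\"onwall argument on $Q(s):=\sup_k|q(k,s,T)|$. The same characteristic representation gives
\[
Q(s)\le Q(0)+\int_0^{s}\bigl(M_1M_2+M_2M_3Q\bigr),
\]
so
\[
Q(s)\le \bigl(Q(0)+M_1M_2 s\bigr)e^{M_2M_3 s}.
\]
This requires $Q(0)<\infty$. If that is not implied by the hypotheses, I would add it explicitly or work on a compact $k$-interval, consistent with Remark~\ref{rem:wing-conditions}. Substituting back yields
\[
q(k,u,T)\ge q_*-|u|\,C\bigl(Q(0),M_1,M_2,M_3,|u|\bigr),
\]
where $C$ is bounded for $|u|\le 1$. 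Thus $q>0$ for
\[
|u|<u_*:=\min\Bigl\{1,\;\frac{q_*}{M_1M_2+M_2M_3(Q(0)+M_1M_2)e^{M_2M_3}}\Bigr\}.
\]
This gives an explicit radius, which is the form needed later for the admissibility theorem.

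The main obstacle is the nonlocal term $v_w[q]$. If $v$ depends on $w$ only pointwise, the equation stays genuinely local and the bound above is immediate. If $v_w$ is a Fr\'echet derivative, I need $\sup_k|v_w[q]|\le M_3\sup_k|q|$, which I would read as the meaning of "$v_w$ bounded". A secondary technical point is that the characteristic map must be a bijection of the $k$-domain for small $|u|$, so that every $k$ has a foot point where $q\ge q_*$. This follows from the Lipschitz bound on $v$, since backward characteristics exist globally in $u$ when $v$ is bounded and Lipschitz. On a compact interval one must also check that characteristics do not exit; I would handle this by shrinking to a subinterval of width reduced by $\sup|v|\,|u|$, which only affects the constants.
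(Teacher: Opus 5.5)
Your estimate is correct only under an extra hypothesis that the statement does not contain, namely $\sup_k q(k,0,T)<\infty$. This is a genuine gap rather than a technicality. The proposition gives only the lower bound $q(\cdot,0,T)\ge q_*$, and on all of $\mathbb R$ the calendar density can be unbounded above while $\partial_k w$ stays bounded. For example, for SSVI surfaces with maturity-dependent wing slope, $\partial_T w$ grows linearly in $|k|$. Your radius $u_*$ tends to $0$ as $Q(0)\to\infty$. Running the argument characteristic by characteristic does not help: the bound $q\ge q_*-\int_0^{|u|}|R|\,ds$ with $|R|\lesssim 1+|q|$ gives no $k$-uniform radius for characteristics starting where $q(\cdot,0,T)$ is large. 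Passing to a compact $k$-interval only produces a radius that depends on the interval. The loss comes from bounding $v_w[q]\,\partial_k w$ in absolute value, which throws away the fact that this term is linear in $q$ itself.

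The missing idea, which is the paper's proof, is to use the pointwise dependence $v_w[q]=(\partial v/\partial w)\,q$. The characteristic equation then becomes the scalar linear ODE
\[
\frac{d}{ds}q(k_s,s,T)=A(s)+B(s)\,q(k_s,s,T),\qquad |A|\le\bar A:=\|\partial_T v\|_\infty\|\partial_k w\|_\infty,\quad |B|\le\bar B:=\|v_w\|_\infty\|\partial_k w\|_\infty,
\]
and you absorb $Bq$ with the integrating factor $e^{\int_0^s B}$ instead of estimating it. Variation of constants gives $q(k_u,u,T)\ge e^{-\bar B|u|}q_*-(\bar A/\bar B)(1-e^{-\bar B|u|})$. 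This is positive for $|u|<\bar B^{-1}\ln(1+\bar B q_*/\bar A)$, with the obvious limiting values when $\bar A$ or $\bar B$ vanishes, and it involves no upper bound on $q$. The term $Bq$ can shrink a positive $q$ by at most the factor $e^{-\bar B|u|}$ but cannot change its sign, so large values of $q$ are harmless.

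What your sup-norm Gr\"onwall route buys is a version for a genuinely nonlocal Fr\'echet derivative $v_w[q]$. There, large $q$ at one strike can feed a negative source at another, and the extra upper bound really is needed. But that is a different result from the one stated: more general in $v$ and more restrictive in $q$.

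A minor point: along $\dot k_s=-v$ one has $\frac{d}{ds}q(k_s,s,T)=+R$, not $-R$. This is harmless since you take absolute values.
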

\begin{proof}
Let $q_*:=\inf_k q(k,0,T)>0$.
Fix $k$ and let $s\mapsto k_s$ solve $\dot k_s=-v(k_s,s,T;w_s)$,
$k_0=k$.
Since $v$ depends on $w$ pointwise, $v_w[q]=(\partial v/\partial w)\,q$,
so along this characteristic equation~\eqref{eq:q-transport} reduces to
the scalar linear ODE
\[
    \frac{d}{ds}q(k_s,s,T) \;=\; A(s) \;+\; B(s)\,q(k_s,s,T),
\]
where
$A(s):=(\partial_T v)(k_s,s,T;w_s)\,\partial_k w(k_s,s,T)$ and
$B(s):=(\partial v/\partial w)(k_s,s,T;w_s)\,\partial_k w(k_s,s,T)$.
The boundedness hypotheses give uniform bounds
\[
    |A(s)|\le \bar A:=\|\partial_T v\|_\infty\|\partial_k w\|_\infty,
    \qquad
    |B(s)|\le \bar B:=\|v_w\|_\infty\|\partial_k w\|_\infty.
\]
Variation of constants yields
\[
    q(k_u,u,T)
    =
    e^{\int_0^u B(s)\,ds}
    \!\left[\,q(k,0,T)+\int_0^u\!A(s)\,e^{-\int_0^s B(r)\,dr}\,ds\right].
\]
Using $|\!\int_0^s B\,dr|\le\bar B|s|$ and $|A|\le\bar A$, the bracket
is bounded below by $q_*-(\bar A/\bar B)(e^{\bar B|u|}-1)$
(interpreting $(\bar A/\bar B)(e^{\bar B|u|}-1)$ as $\bar A|u|$ when
$\bar B=0$), so
\[
    q(k_u,u,T) \;\ge\; e^{-\bar B|u|}\,q_*
    \;-\;
    \frac{\bar A}{\bar B}\bigl(1-e^{-\bar B|u|}\bigr).
\]
The right-hand side is strictly positive for all $|u|<\varepsilon_1$,
where
\[
    \varepsilon_1
    :=
    \frac{1}{2\bar B}\,\ln\!\left(1+\frac{\bar B\,q_*}{\bar A}\right)
    > 0
    \qquad(\varepsilon_1:=q_*/(2\bar A)\text{ when }\bar B=0).
\]
Since $k$ was arbitrary and all bounds are uniform in $k$, this gives
$q(\,\cdot\,,u,T)>0$ pointwise for all $|u|<\varepsilon_1$.
\end{proof}

\subsection{Transport of the Butterfly Density}

\begin{lemma}[Butterfly-density transport]
\label{lem:butterfly-transport}
Under $\partial_u w=v\,\partial_k w$ with $v\in C^2$ in $k$, the Gatheral
density functional $g[w]$ from~\eqref{eq:g-functional} satisfies
\begin{equation}
    \partial_u g
    =
    v\,\partial_k g
    +\mathcal R[v;w],
    \tag{6.2}\label{eq:g-transport}
\end{equation}
where, with $p:=\partial_k w$, $\kappa:=\partial_{kk}w$, and the explicit-$k$ partial
\begin{equation}
    \partial_k^{\mathrm{exp}} g
    :=
    \left.\frac{\partial g(k,w,p,\kappa)}{\partial k}\right|_{w,p,\kappa\,\mathrm{fixed}}
    =
    -\frac{p}{w}\!\left(1-\frac{kp}{2w}\right),
    \tag{6.3}\label{eq:dkexpg}
\end{equation}
the residual is
\begin{equation}
    \mathcal R[v;w]
    =
    -v\,\partial_k^{\mathrm{exp}} g
    +(\partial_k v)\bigl[(\partial_p g)\,p+\kappa\bigr]
    +\tfrac{1}{2}(\partial_{kk}v)\,p,
    \tag{6.4}\label{eq:g-residual}
\end{equation}
with $\partial_p g = -\frac{k}{w}(1-\frac{kp}{2w}) - \frac{p}{2}(\frac{1}{w}+\frac{1}{4})$
and $\partial_\kappa g = \tfrac{1}{2}$.

\emph{Special case.} For constant $v\equiv\beta$, so $\partial_k v=\partial_{kk}v=0$,
\begin{equation}
    \mathcal R[\beta;w]
    =
    \frac{\beta\,p}{w}\!\left(1-\frac{kp}{2w}\right),
    \tag{6.5}\label{eq:R-constant-v}
\end{equation}
which is nonzero in general: even the SSR transport generates a source term
due to the explicit $k$-dependence of the Gatheral functional.
\end{lemma}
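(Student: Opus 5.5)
The plan is a direct chain-rule computation. We regard the Gatheral functional \eqref{eq:g-functional} as an explicit function $g=g(k,w,p,\kappa)$ of four variables, with $p=\partial_k w$ and $\kappa=\partial_{kk}w$. We then compare two total derivatives: $\partial_u g$ along the flow, and $v\,\partial_k g$. The claim is that every term involving $\partial_w g$ and every term involving $\partial_k\kappa$ cancels in the difference. Throughout, $v$ denotes the composite $\tilde v(k;u,T)$ and $\partial_k v$, $\partial_{kk}v$ are its total $k$-derivatives, as in Theorem~\ref{thm:jet-transport}. This keeps the computation valid when $v$ depends on $w$.

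\emph{Step 1: evolve $w$, $p$, $\kappa$ along the flow.} Differentiate the transport equation $\partial_u w=v\,p$ in $k$, once and twice, commuting $\partial_u$ with $\partial_k$ exactly as in the proof of Theorem~\ref{thm:jet-transport}. This is the $n=0,1,2$ Leibniz computation at a general point $k$:
\[
\partial_u p = v\,\kappa+(\partial_k v)\,p,\qquad
\partial_u\kappa = v\,\partial_k\kappa+2(\partial_k v)\,\kappa+(\partial_{kk}v)\,p .
\]

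\emph{Step 2: expand both sides by the chain rule.} We have
\[
\partial_u g=g_w\,\partial_u w+g_p\,\partial_u p+g_\kappa\,\partial_u\kappa,
\qquad
\partial_k g=\partial_k^{\mathrm{exp}}g+g_w\,p+g_p\,\kappa+g_\kappa\,\partial_k\kappa .
\]
Substituting Step 1 and subtracting $v\,\partial_k g$, the following terms cancel identically:
\begin{itemize}
\item $g_w\,v p$ against $v\,g_w p$;
\item $g_p\,v\kappa$ against $v\,g_p\kappa$;
\item $g_\kappa\,v\,\partial_k\kappa$ against $v\,g_\kappa\,\partial_k\kappa$.
\end{itemize}
What remains is
\[
-v\,\partial_k^{\mathrm{exp}}g+(\partial_k v)\bigl[g_p\,p+2g_\kappa\kappa\bigr]+g_\kappa(\partial_{kk}v)\,p .
\]
With $g_\kappa=\tfrac12$ this is exactly \eqref{eq:g-residual}.

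\emph{Step 3: the explicit partials.} Write $X:=1-kp/(2w)$, so that $g=X^2-\tfrac{p^2}{4}\bigl(\tfrac1w+\tfrac14\bigr)+\tfrac{\kappa}{2}$. Then:
\begin{itemize}
\item $\partial_k^{\mathrm{exp}}g=2X\cdot(-p/(2w))$, which is \eqref{eq:dkexpg};
\item $g_p=2X\cdot(-k/(2w))-\tfrac p2\bigl(\tfrac1w+\tfrac14\bigr)$;
\item $g_\kappa=\tfrac12$.
\end{itemize}
\emph{Special case.} For $v\equiv\beta$ we have $\partial_k v=\partial_{kk}v=0$, so only $-\beta\,\partial_k^{\mathrm{exp}}g=\beta\tfrac{p}{w}X$ survives. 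This gives \eqref{eq:R-constant-v}.

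The only delicate point is regularity. Written out, both $\partial_u\kappa$ and $\partial_k g$ involve $\partial_k^3 w$, whereas the class $C^{2,1}$ does not guarantee it exists. I would first prove the identity for $w$ of class $C^3$ in $k$ (with mixed derivatives commuting). Since $\partial_k\kappa$ cancels and does not appear in $\mathcal R$, I would then either state the lemma under that extra smoothness or read \eqref{eq:g-transport} as the combination $\partial_u g-v\,\partial_k g$, which is well defined. In the second reading the identity extends by approximation, or in the distributional sense along characteristics.
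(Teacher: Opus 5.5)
Your proposal is correct and follows essentially the same route as the paper: treat $g$ as a function of $(k,w,p,\kappa)$, and obtain $\partial_u p$ and $\partial_u\kappa$ by differentiating the transport equation in $k$. All $v$-multiplied terms then collapse to $v(\partial_k g-\partial_k^{\mathrm{exp}}g)$, leaving exactly $\mathcal R[v;w]$. Your regularity caveat is a genuine refinement: the paper's proof introduces $r=\partial_{kkk}w$ without noting that $C^{2,1}$ does not guarantee it exists, whereas your along-characteristics reading is precisely what Proposition~\ref{prop:butterfly-preserved} actually uses.
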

\begin{proof}
The functional $g$ from~\eqref{eq:g-functional} depends on $k$ both
explicitly (through the factor $k$ in $(1-kp/2w)^2$) and implicitly through
$w(k,T)$, $p=\partial_k w$, $\kappa=\partial_{kk}w$. Write $g=g(k,w,p,\kappa)$.

Set $r:=\partial_{kkk}w$. Applying $\partial_k$ and $\partial_{kk}$ to
$\partial_u w=vp$ gives
$\partial_u p = v\kappa+(\partial_k v)p$ and
$\partial_u \kappa = vr+2(\partial_k v)\kappa+(\partial_{kk}v)p$.
Differentiating $g=g(k,w,p,\kappa)$ along the flow, noting that
$u$-differentiation does not act on the explicit $k$ in $g$:
\begin{align*}
    \partial_u g
    &=
    (\partial_w g)(vp)
    +(\partial_p g)\bigl[v\kappa+(\partial_k v)p\bigr]
    +(\partial_\kappa g)\bigl[vr+2(\partial_k v)\kappa+(\partial_{kk}v)p\bigr].
\end{align*}

The \emph{total} $k$-derivative of $g$ is
\[
    \partial_k g
    =
    \partial_k^{\mathrm{exp}} g
    +(\partial_w g)p+(\partial_p g)\kappa+(\partial_\kappa g)r.
\]
Grouping all terms multiplied by $v$ in $\partial_u g$ yields
$v[\partial_k g - \partial_k^{\mathrm{exp}} g]$. The remaining terms give
$\mathcal R[v;w]$ as in~\eqref{eq:g-residual}.
Setting $\partial_k v=\partial_{kk}v=0$ in~\eqref{eq:g-residual} reduces
$\mathcal R[\beta;w]$ to $-\beta\,\partial_k^{\mathrm{exp}} g
=\beta p/w(1-kp/(2w))$, confirming~\eqref{eq:R-constant-v}.
\end{proof}

\begin{proposition}[Local butterfly preservation]
\label{prop:butterfly-preserved}
Assume $g[w_0]\ge g_*>0$ pointwise on $\mathbb R\times[T_1,T_2]$, and that
along the characteristic flow generated by $v$ the quantities
$w$, $1/w$, $p=\partial_k w$, $\kappa=\partial_{kk}w$, $kp/w$,
$\partial_k v$, $\partial_{kk}v$
remain uniformly bounded. Then $\mathcal R[v;w_u]$ from~\eqref{eq:g-residual}
is uniformly bounded, say by $\|\mathcal R\|_\infty$, and
\begin{equation}
    g[w_u](k,T)
    \;\ge\;
    g_*-\|\mathcal R\|_\infty\,|u|,
    \tag{6.6}\label{eq:butterfly-gronwall}
\end{equation}
so that $g[w_u]\ge g_*/2>0$ for $|u|\le g_*/(2\|\mathcal R\|_\infty)$.
\end{proposition}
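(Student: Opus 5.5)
The plan is to mirror the proof of Proposition~\ref{prop:calendar-preserved}: integrate the transport identity~\eqref{eq:g-transport} of Lemma~\ref{lem:butterfly-transport} along characteristics and control the source term $\mathcal R$ uniformly.

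\emph{Step 1: bound the residual.} Take the explicit formula~\eqref{eq:g-residual}. The term $\partial_k^{\mathrm{exp}}g = -(p/w)(1-kp/(2w))$ is bounded because $p$, $1/w$ and $kp/w$ are. The coefficient $\partial_p g = -(k/w)(1-kp/(2w)) - (p/2)(1/w+1/4)$ needs a bound on $k/w$, which is not among the listed quantities. On the compact moneyness intervals the paper works on (Remark~\ref{rem:wing-conditions}), $k$ is bounded and $1/w$ is bounded, so $k/w$ is bounded. If one insists on all of $\mathbb R$, I would instead multiply out and note that $(\partial_p g)\,p$ involves only $kp/w$, $(kp/w)^2$, $p^2/w$ and $p^2$, all of which are bounded; this is the form I would actually use. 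With $\kappa$, $\partial_k v$, $\partial_{kk}v$ bounded, and $v$ itself bounded (implicit in the flow being defined; I would state it as an assumption if needed), each term of $\mathcal R[v;w_u]$ is a product of bounded factors. This gives $\|\mathcal R\|_\infty<\infty$ uniformly in $(k,T,u)$ over the range of $u$ where the hypotheses hold.

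\emph{Step 2: integrate along characteristics.} Fix $(k,T)$ and let $k_s$ solve $\dot k_s=-v(k_s,s,T;w_s)$ with $k_0=k$, exactly as in~(5.4). By~\eqref{eq:g-transport},
\[
\frac{d}{ds}\,g[w_s](k_s,T)=\partial_u g - v\,\partial_k g=\mathcal R[v;w_s](k_s,T).
\]
Hence $g[w_u](k_u,T)=g[w_0](k,T)+\int_0^u\mathcal R\,ds\ge g_*-\|\mathcal R\|_\infty|u|$. The characteristic map $k\mapsto k_u$ is a bijection of the moneyness domain, since it is the flow of a Lipschitz field. Therefore every point $(k',T)$ at time $u$ is reached by some characteristic, and~\eqref{eq:butterfly-gronwall} holds pointwise.

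\emph{Step 3: conclude.} Plug $|u|\le g_*/(2\|\mathcal R\|_\infty)$ into~\eqref{eq:butterfly-gronwall} to get $g[w_u]\ge g_*/2$.

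The only delicate points are the regularity needed to differentiate $g$ along the flow and the surjectivity of the characteristic map. Lemma~\ref{lem:butterfly-transport} already requires $w$ to be $C^3$ in $k$ along the flow. I expect the main obstacle to be ensuring that the characteristics stay in the domain on which the bounds are assumed, and that $k\mapsto k_u$ is onto. I would handle both by using the Lipschitz and boundedness of $v$ to invoke global existence and uniqueness for the characteristic ODE. On a compact interval I would additionally need the boundary to be non-inflowing, or else simply take the hypothesis "along the characteristic flow" to cover whatever points are reached.
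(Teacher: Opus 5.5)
Your proposal is correct and follows the paper's proof. Both arguments reduce \eqref{eq:g-transport} along characteristics to $\frac{d}{ds}g[w_s](k_s,T)=\mathcal R[v;w_s](k_s,T)$, bound $\mathcal R$ term by term, and integrate. Your extra care fills in details the paper's short proof leaves implicit rather than changing the route: the paper also uses boundedness of $v$ without listing it as a hypothesis, glosses over the $(\partial_k v)(\partial_p g)p$ term (which, as you note, is controlled by $kp/w$, $p^2/w$ and $p^2$), and defers surjectivity of $k\mapsto k_u$ to the diffeomorphism step of Theorem~\ref{thm:local-arb-preserving}.
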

\begin{proof}
Along characteristics $\dot k_s=-v(k_s,s,T;w_s)$, equation~\eqref{eq:g-transport}
reduces to $\frac{d}{ds}g[w_s](k_s,T)=\mathcal R[v;w_s](k_s,T)$.
Each term in the corrected residual~\eqref{eq:g-residual} is bounded under
the stated hypotheses: the term $-v\,\partial_k^{\mathrm{exp}}g
=vp/w(1-kp/(2w))$ is bounded since $v$, $p/w$, $kp/w$ are all bounded;
the remaining terms involve products of $\partial_k v$ or $\partial_{kk}v$
with $\kappa=\partial_{kk}w$, all bounded by assumption. Hence
$|\mathcal R|\le\|\mathcal R\|_\infty$
along the flow, and integrating gives~\eqref{eq:butterfly-gronwall}.
\end{proof}

\begin{remark}[Constant-velocity case]
For $v\equiv\beta$, the residual is $\mathcal R[\beta;w]
=\beta p/w(1-kp/(2w))$ from~\eqref{eq:R-constant-v}. Its uniform bound
$\|\mathcal R[\beta;\cdot]\|_\infty$ depends on the surface through
$\sup|p/w|$ and $\sup|kp/w|$. For SVI surfaces satisfying the
no-static-arbitrage conditions of Gatheral--Jacquier~\cite{GatheralJacquier2014},
these quantities are bounded explicitly in terms of the SVI parameters,
giving a computable lower bound for the local admissibility radius
$\varepsilon_{\mathrm{SSR}}=g_*/(2\|\mathcal R[\beta;\cdot]\|_\infty)$.
\end{remark}

\subsection{Main Arbitrage-Free Transport Theorem}

\begin{theorem}[Local arbitrage-free transport]
\label{thm:local-arb-preserving}
Let $w_0\in\mathcal{A}^\circ$ with $\partial_T w_0\ge q_*>0$ and
$g[w_0]\ge g_*>0$ on $\mathbb R\times[T_1,T_2]$. Conditions
\emph{(V1)--(V3)} are regularity and boundedness conditions on the velocity
field $v$; condition \emph{(W)} is a boundedness condition on the initial
surface $w_0$. Let $v=v(k,u,T,y)$ depend on $w$ \emph{pointwise} via $y=w(k,T;u)$, and satisfy
\begin{itemize}
\item[\emph{(V1)}] $v\in C^2$ jointly in $(k,y)$ and $C^1$ in $(u,T)$;
  the partial derivative $v_y:=\partial v/\partial y$ is bounded;
\item[\emph{(V2)}] $v$, $\partial_k v$, $\partial_{kk}v$ bounded on the relevant strip,
  where $\partial_k v=\partial_k^{\mathrm{exp}}v+v_y\,\partial_k w$ denotes
  the total $k$-derivative at the evaluation point $y=w(k,T;u)$;
\item[\emph{(V3)}] $\partial_T v$ bounded;
\end{itemize}
and let $w_0$ satisfy
\begin{itemize}
\item[\emph{(W)}] $w_0\ge\underline w>0$ pointwise and
  $\partial_k w_0$, $\partial_{kk}w_0$, $k\,\partial_k w_0/w_0$ bounded on
  $\mathbb R\times[T_1,T_2]$ (ensures the residual $\mathcal R[v;w]$
  in~\eqref{eq:g-residual}, including the $(\partial_k v)\,\kappa$ term,
  is uniformly bounded along the flow).
\end{itemize}
Then there exists $\varepsilon>0$ and a unique $C^{2,1}$ flow $w_u$ on $|u|<\varepsilon$
solving $\partial_u w_u=v(\cdot,u,\cdot;w_u)\,\partial_k w_u$, $w|_{u=0}=w_0$, with
$w_u\in\mathcal{A}^\circ$.
\end{theorem}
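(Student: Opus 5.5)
The plan has three stages: construct the flow by the method of characteristics, show that the quantities listed in (W) stay bounded on a short $u$-interval, and then invoke Proposition~\ref{prop:calendar-preserved} and Proposition~\ref{prop:butterfly-preserved} to obtain positivity of $\partial_T w_u$ and $g[w_u]$, with positivity of $w_u$ itself coming for free.

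\textbf{Existence and uniqueness.} Because $v$ depends on $w$ only pointwise, the equation $\partial_u w=v(k,u,T;w)\,\partial_k w$ is a scalar quasilinear first-order PDE for each fixed $T$. Its characteristic system is
\[
\dot k_s=-v(k_s,s,T;y_s),\qquad \dot y_s=0 .
\]
Hence $w$ is constant along characteristics, and the characteristic ODE reduces to $\dot k_s=-v(k_s,s,T;w_0(k_0,T))$.

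By (V1)--(V2), the right-hand side is bounded and globally Lipschitz in $k$ (its $k$-derivative is bounded), so the characteristics exist for all $s$ and are unique. The only obstruction to a classical solution is crossing of characteristics. This requires controlling the Jacobian $J:=\partial k_s/\partial k_0$, which satisfies the linear ODE
\[
\dot J=-\bigl(\partial_k^{\mathrm{exp}}v+v_y\,\partial_k w_0\bigr)J .
\]
By (V2) and (W) the coefficient is bounded by some constant $L$, so $J\ge e^{-L|u|}>0$. The map $k_0\mapsto k_u$ is therefore a $C^2$ diffeomorphism of $\mathbb R$. Setting $w_u(k_u,T):=w_0(k_0,T)$ gives the unique $C^{2,1}$ solution.

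\textbf{Regularity and a priori bounds.} Differentiating in $k$ gives $\partial_k w_u=\partial_k w_0/J$. A second differentiation, using $\partial_{kk}v$ bounded from (V2), shows that $\partial_{kk}w_u$ stays bounded on $|u|<\varepsilon_0$. Differentiating in $T$ and using (V3) shows that $\partial_T k_u$ is bounded, which yields $C^1$ dependence in $T$.

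Positivity and the lower bound on $w$ are immediate: $w_u\ge\underline w$, since values are merely transported.

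The delicate quantity in (W) is $k\,\partial_k w/w$. Along a characteristic $|k_u-k_0|\le\|v\|_\infty|u|$, and $\partial_k w_u$ differs from $\partial_k w_0$ by the bounded factor $1/J$, so
\[
\frac{|k_u\,\partial_k w_u|}{w_u}\le e^{L|u|}\left(\frac{|k_0\,\partial_k w_0|}{w_0}+\|v\|_\infty|u|\,\frac{\|\partial_k w_0\|_\infty}{\underline w}\right),
\]
which is bounded. This is exactly the list of hypotheses required in Proposition~\ref{prop:butterfly-preserved}. It also gives the boundedness of $\partial_k w$, $v$, $\partial_T v$ and $v_w$ needed in Proposition~\ref{prop:calendar-preserved}, using (V1) and (V3).

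\textbf{Admissibility.} Proposition~\ref{prop:calendar-preserved} gives $\partial_T w_u>0$ for $|u|<\varepsilon_1$. Proposition~\ref{prop:butterfly-preserved} gives $g[w_u]\ge g_*/2$ for $|u|\le g_*/(2\|\mathcal R\|_\infty)$. Taking
\[
\varepsilon=\min\bigl(\varepsilon_0,\ \varepsilon_1,\ g_*/(2\|\mathcal R\|_\infty)\bigr)
\]
yields $w_u\in\mathcal A^\circ$.

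\textbf{Main obstacle.} I expect the hardest step to be the uniform bound on $\partial_{kk}w_u$, equivalently the $C^2$ control of the characteristic map. It involves $\partial_k J$, which brings in $\partial_{kk}v$ together with the $w$-dependence terms $v_{yy}(\partial_k w)^2$ and $v_y\,\partial_{kk}w$. The boundedness of $v_{yy}$ is only implicit in the $C^2$ assumption of (V1), so I would first try to close a Gronwall estimate on the pair $(J,\partial_k J)$ on a short interval. If that fails, I would shrink the strip to a compact $k$-window, consistent with Remark~\ref{rem:wing-conditions}.
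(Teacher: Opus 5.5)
Your plan follows the paper's proof step for step. You use characteristics with $w_0(k_0,T)$ frozen as a parameter, a lower bound on $J=\partial_{k_0}k_u$ to get a diffeomorphism $\chi$, the construction $w_u=w_0\circ\chi$, uniqueness by constancy along characteristics, and admissibility from Propositions~\ref{prop:calendar-preserved} and~\ref{prop:butterfly-preserved}. Your explicit propagation of the (W)-type bounds along the flow, in particular for $k\,\partial_k w_u/w_u$, is a check the paper leaves implicit.

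\textbf{The Jacobi equation is wrong as written.} Differentiating $\dot k_s=-v(k_s,s,T;w_0(k_0,T))$ in $k_0$ gives
\[
\dot J=-\partial_k^{\mathrm{exp}}v\,J-v_y\,\partial_{k_0}w_0(k_0,T).
\]
The frozen value $w_0(k_0,T)$ depends on $k_0$ directly, not through $k_s$. So the $v_y$ term is an inhomogeneous forcing, not a multiplier of $J$.

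Your homogeneous form $-(\partial_k^{\mathrm{exp}}v+v_y\,\partial_k w)J$ is correct only if $\partial_k w$ means the current slope $\partial_k w_u(k_u)=\partial_{k_0}w_0/J$. In that reading, bounding it via (W) is circular, because that slope blows up exactly when $J\to 0$.

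Hence the bound $J\ge e^{-L|u|}$ with $L$ fixed by the data, and the diffeomorphism of $\mathbb R$ without restriction on $u$, are false in general. Take $v(k,u,T;y)=y$ with bounded $w_0$. Then $\partial_k^{\mathrm{exp}}v=0$ and $v_y\,\partial_k w_0$ is bounded, yet $J=1-u\,\partial_{k_0}w_0(k_0)$ vanishes at $u=1/\partial_{k_0}w_0(k_0)$ wherever the slope is nonzero. You correctly name crossing of characteristics as the obstruction, but your argument does not exclude it. For $w$-dependent $v$ it genuinely occurs in finite $u$, and that is one reason the theorem is only local.

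\textbf{The repair is the paper's.} Variation of constants on the correct equation gives
\[
J(u)\ge e^{-C_A|u|}-C_B|u|e^{C_A|u|},
\]
with $C_A$ bounding $|\partial_k^{\mathrm{exp}}v|$ and $C_B$ bounding $|v_y\,\partial_{k_0}w_0|$. This is positive uniformly in $k_0$ only for $|u|<\varepsilon_0(C_A,C_B)$.

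Your final $\varepsilon$ already takes a minimum with an $\varepsilon_0$, so nothing downstream breaks. Two adjustments are needed, though. First, $\varepsilon_0$ must now come from this Jacobian bound and not only from the $\partial_{kk}w_u$ control. Second, the factor $e^{L|u|}$ in your $k\,\partial_k w/w$ estimate must be replaced by $1/\inf_{|u|<\varepsilon_0}J$.

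Your worry about $v_{yy}(\partial_k w)^2$ is legitimate. Nothing in (W) bounds $w_0$ above, so $C^2$ regularity in $y$ alone gives no uniform bound. The paper does not resolve this either; it simply reads (V2) as bounding the total $\partial_{kk}v$ along the flow. So you need not retreat to a compact $k$-window to match the theorem as stated.
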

\begin{proof}
\emph{Existence.}
Since $v$ depends on $w$ pointwise, $w_0(k_0,T)$ is a fixed parameter
along each characteristic, and the ODE
\begin{equation}
    \dot k = -v(k,u,T;\,w_0(k_0,T)),\qquad k(0)=k_0,
    \tag{6.7}\label{eq:char-ode}
\end{equation}
has a unique local solution $k(u;k_0,T)$ by Cauchy--Lipschitz,
since $v$ is $C^1$ in $k$ by (V1).

\emph{Regularity of the characteristic map.}
Differentiating \eqref{eq:char-ode} in $k_0$ gives the Jacobi equation
\[
    \frac{d}{du}\partial_{k_0}k
    = -\partial_k^{\mathrm{exp}}v\cdot\partial_{k_0}k
      - v_y\,\partial_{k_0}w_0,
\]
a linear ODE with bounded coefficients by (V1), (V2), and (W).
A further differentiation in $k_0$ produces a similar ODE for
$\partial_{k_0}^2 k$, with coefficients bounded by (V2) and (W).
Regularity in $T$ follows from (V3) and (W). Hence the flow map
$\Phi_u\colon k_0\mapsto k(u;k_0,T)$ is $C^2$ in $k_0$ and $C^1$ in $T$.

\emph{Global diffeomorphism.}
Let $J:=\partial_{k_0}k(u;k_0,T)$. The Jacobi equation
\[
    \dot J = -\partial_k^{\mathrm{exp}}v\cdot J - v_y\,\partial_{k_0}w_0
\]
is a linear ODE with coefficients bounded uniformly in $k_0$ by (V2)
and (W): $|\partial_k^{\mathrm{exp}}v|\le C_A$ and
$|v_y\,\partial_{k_0}w_0|\le C_B$.
Since $J(0)=1$, variation of constants gives
$J(u)\ge e^{-C_A|u|}-C_B|u|e^{C_A|u|}$,
which is strictly positive for $|u|<\varepsilon_0$, where $\varepsilon_0>0$
depends only on $C_A$ and $C_B$ (not on $k_0$).
Hence $\partial_{k_0}\Phi_u>0$ uniformly in $k_0$ for $|u|<\varepsilon_0$,
so $\Phi_u$ is strictly increasing.
Since $v$ is bounded, $|\Phi_u(k_0)-k_0|\le\|v\|_\infty|u|\le C|u|$,
so $\Phi_u(k_0)\to\pm\infty$ as $k_0\to\pm\infty$.
A strictly increasing surjection of $\mathbb R$ onto $\mathbb R$ with
$C^2$ regularity is a global $C^2$ diffeomorphism; write
$\chi:=\Phi_u^{-1}$.

\emph{Solution construction.} Set
\[
    w_u(k,T) := w_0(\chi(k),T).
\]
By the chain rule $w_u\in C^{2,1}$. Differentiating
$w_u(k,T)=w_0(\chi(k),T)$ in $u$ and using the characteristic relation
\[
    \dot\chi = -v(k,u,T;w_u)\,\partial_k\chi
\]
gives $\partial_u w_u = v(\cdot,u,\cdot\,;w_u)\,\partial_k w_u$.

\emph{Uniqueness.}
Any $C^{2,1}$ solution $\tilde w$ satisfies $d\tilde w/du=0$ along
its characteristics, so $\tilde w$ is constant along them. These
characteristics satisfy \eqref{eq:char-ode} with $w_0(k_0,T)$ as
initial data, and Cauchy--Lipschitz uniqueness forces
$\tilde w = w_u$.

\emph{Admissibility.}
Propositions~\ref{prop:calendar-preserved} and
\ref{prop:butterfly-preserved} yield $\varepsilon_1,\varepsilon_2>0$
with $\partial_T w_u\ge q_*/2>0$ and $g[w_u]\ge g_*/2>0$
for $|u|<\min(\varepsilon_1,\varepsilon_2)$. Positivity holds since
$w_u=w_0\circ\chi>0$. Setting $\varepsilon:=\min(\varepsilon_1,\varepsilon_2)$
completes the proof.
\end{proof}

\begin{remark}
The pointwise dependence of $v$ on $w$ is essential to the proof:
it allows $w_0(k_0,T)$ to serve as a fixed parameter in the
characteristic ODE~\eqref{eq:char-ode}, so that each characteristic
is determined by a scalar ODE with no feedback from the evolving
surface. If $v$ depends nonlocally on $w$ — as in local volatility,
where $v^{\rm LV}$ is determined by Dupire's formula applied to the
full smile — then $w_0(k_0,T)$ is no longer a valid substitute for
$w(k(u),T;u)$ along the characteristic, and the argument breaks down.
Extending the theorem to nonlocal functional dependence is left for
future work.
\end{remark}

\begin{corollary}[SSR flow: local admissibility]
\label{cor:ssr-preserves}
Let $\beta\in\mathbb R$ be constant and let $w_0\in\mathcal{A}^\circ$ satisfy
(W) with constants $(\underline w, M_p, M_{kp})$. Define
$w_u(k,T):=w_0(k+\beta u,T)$. Then $w_u$ solves
$\partial_u w_u=\beta\,\partial_k w_u$, the calendar density satisfies
$\partial_T w_u(k,T)=\partial_T w_0(k+\beta u,T)>0$ for all $u$, and there
exists $\varepsilon_{\mathrm{SSR}}>0$, depending only on $g_*$, $\beta$,
and $(\underline w, M_p, M_{kp})$, such that
$w_u\in\mathcal{A}^\circ$ for all $|u|<\varepsilon_{\mathrm{SSR}}$.
\end{corollary}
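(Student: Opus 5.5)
The plan is to bypass the general machinery of Theorem~\ref{thm:local-arb-preserving} wherever possible and work directly with the explicit ride formula. First, the transport equation is verified by the chain rule: $\partial_u w_u(k,T)=\beta\,\partial_k w_0(k+\beta u,T)=\beta\,\partial_k w_u(k,T)$. This is simply the implication (ii)$\Rightarrow$(i) of Theorem~\ref{thm:self-similar-transport} applied at each fixed $T$. Uniqueness among $C^{2,1}$ solutions can be quoted from the uniqueness step of Theorem~\ref{thm:local-arb-preserving}, since constant $v\equiv\beta$ trivially satisfies (V1)--(V3).

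Next come positivity and the calendar condition. The shift $k\mapsto k+\beta u$ does not depend on $T$. Differentiating in $T$ therefore gives $\partial_T w_u(k,T)=\partial_T w_0(k+\beta u,T)$. This is positive for every $u$ because $w_0\in\mathcal A^\circ$ has $\partial_T w_0>0$ at every point. The same argument shows $w_u\ge\underline w>0$. So the calendar and positivity conditions hold globally in $u$, not merely locally. This is consistent with Lemma~\ref{lem:calendar-transport}, whose source terms $\partial_T v$ and $v_w$ vanish here.

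The butterfly condition is the substantive step. I would apply Proposition~\ref{prop:butterfly-preserved} with the explicit residual $\mathcal R[\beta;w]=\beta\frac{p}{w}\bigl(1-\frac{kp}{2w}\bigr)$ from \eqref{eq:R-constant-v}. The main obstacle is to bound this residual uniformly along the flow using only the constants of (W) for $w_0$.

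Here the ratio $p/w$ is harmless. We have $p_u(k)=p_0(k+\beta u)$ and $w_u\ge\underline w$, so $|p_u/w_u|\le M_p/\underline w$.

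The term $kp/w$ is the delicate one, because the explicit $k$ is not shifted. Writing $k'=k+\beta u$, we get
\[
\frac{k\,p_u(k)}{w_u(k)}=\frac{k'p_0(k')}{w_0(k')}-\beta u\,\frac{p_0(k')}{w_0(k')}.
\]
This is bounded by $M_{kp}+|\beta||u|M_p/\underline w$. Restricting to $|u|\le1$ then gives a bound $|\mathcal R[\beta;w_u]|\le C:=|\beta|\frac{M_p}{\underline w}\bigl(1+\tfrac12 M_{kp}+\tfrac12|\beta|\tfrac{M_p}{\underline w}\bigr)$, which depends only on $\beta$ and $(\underline w,M_p,M_{kp})$.

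Integrating the characteristic identity $\frac{d}{ds}g[w_s](k_s,T)=\mathcal R$ as in \eqref{eq:butterfly-gronwall} then gives $g[w_u]\ge g_*-C|u|$. Setting $\varepsilon_{\mathrm{SSR}}:=\min\bigl(1,g_*/(2C)\bigr)$ (with $\varepsilon_{\mathrm{SSR}}:=1$ if $C=0$) yields $g[w_u]\ge g_*/2>0$. Combining this with the previous step gives $w_u\in\mathcal A^\circ$ for $|u|<\varepsilon_{\mathrm{SSR}}$.
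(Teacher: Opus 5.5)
Your proposal is correct and follows the paper's proof essentially step for step: you read calendar positivity off $\partial_T w_u(k,T)=\partial_T w_0(k+\beta u,T)$, and you get the butterfly bound from the explicit residual $\mathcal R[\beta;w_u]$ by writing $k=(k+\beta u)-\beta u$ and feeding the resulting uniform bound into the Gronwall estimate of Proposition~\ref{prop:butterfly-preserved}. The only difference is cosmetic: you cap $|u|\le 1$ to get a constant bound $C$ and set $\varepsilon_{\mathrm{SSR}}=\min(1,g_*/(2C))$, whereas the paper keeps the increasing bound $\bar R(|u|)$ and takes the largest $\varepsilon$ with $2\bar R(\varepsilon)\,\varepsilon<g_*$.
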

\begin{proof}
Calendar preservation: the calendar-density equation with $v\equiv\beta$
has source $(\partial_T\beta)\partial_k w=0$, so
$\partial_T w_u(k,T)=\partial_T w_0(k+\beta u,T)>0$ for all $u$.
Butterfly preservation: since $w_u(k,T)=w_0(k+\beta u,T)$, the residual
from~\eqref{eq:R-constant-v} evaluated along the flow is
\[
    \mathcal R[\beta;w_u](k,T)
    =
    \frac{\beta\,p_0(k+\beta u,T)}{w_0(k+\beta u,T)}
    \!\left(1-\frac{k\,p_0(k+\beta u,T)}{2\,w_0(k+\beta u,T)}\right),
\]
where $p_0:=\partial_k w_0$. Writing $k=(k+\beta u)-\beta u$ and using
$|p_0|\le M_p$, $w_0\ge\underline w$, and
$|(k+\beta u)p_0/w_0|\le M_{kp}$, one obtains the uniform bound
\[
    \|\mathcal R[\beta;w_u]\|_\infty
    \;\le\;
    \frac{|\beta|M_p}{\underline w}
    \!\left(1+\frac{M_{kp}}{2}+\frac{|\beta||u|M_p}{2\underline w}\right)
    =: \bar R(|u|),
\]
which is finite and increasing in $|u|$. The Gronwall estimate of
Proposition~\ref{prop:butterfly-preserved} then gives
$g[w_u]\ge g_*/2>0$ for all $|u|<\varepsilon_{\mathrm{SSR}}$, where
$\varepsilon_{\mathrm{SSR}}>0$ is the largest $\varepsilon$ satisfying
$2\bar R(\varepsilon)\,\varepsilon < g_*$.
\end{proof}

\begin{remark}[Global SSR admissibility is not automatic]
\label{rem:global-ssr}
The SSR ride formula does \emph{not} preserve $g$ via pure transport:
$g[w_u](k,T)\neq g[w_0](k+\beta u,T)$ in general, because $g$ has explicit
$k$-dependence through the factor $(1-kp/(2w))^2$. Equivalently,
$g[w_u](k,T)$ uses the log-moneyness $k$, whereas $g[w_0](k+\beta u,T)$ uses
$k+\beta u$; these differ unless $p\equiv0$. Global admissibility for all
$u\in\mathbb R$ therefore requires the additional condition
$g[w_0(\cdot+\beta u,T)](k,T)>0$ for all $(k,T,u)$, which is a non-trivial
constraint on $w_0$ that must be verified case by case. For SVI surfaces
satisfying the static-no-arbitrage conditions of
Gatheral--Jacquier~\cite{GatheralJacquier2014}, this can be checked
analytically in terms of the SVI parameters.
Figure~\ref{fig:gatheral-density} illustrates this numerically:
the transported Gatheral density remains positive for all tested $\beta$,
and the curves are not translates of the initial density, confirming that
$g$ is not translation-equivariant in $k$.
\end{remark}

\begin{figure}[!htbp]
\centering
\includegraphics[width=\textwidth]{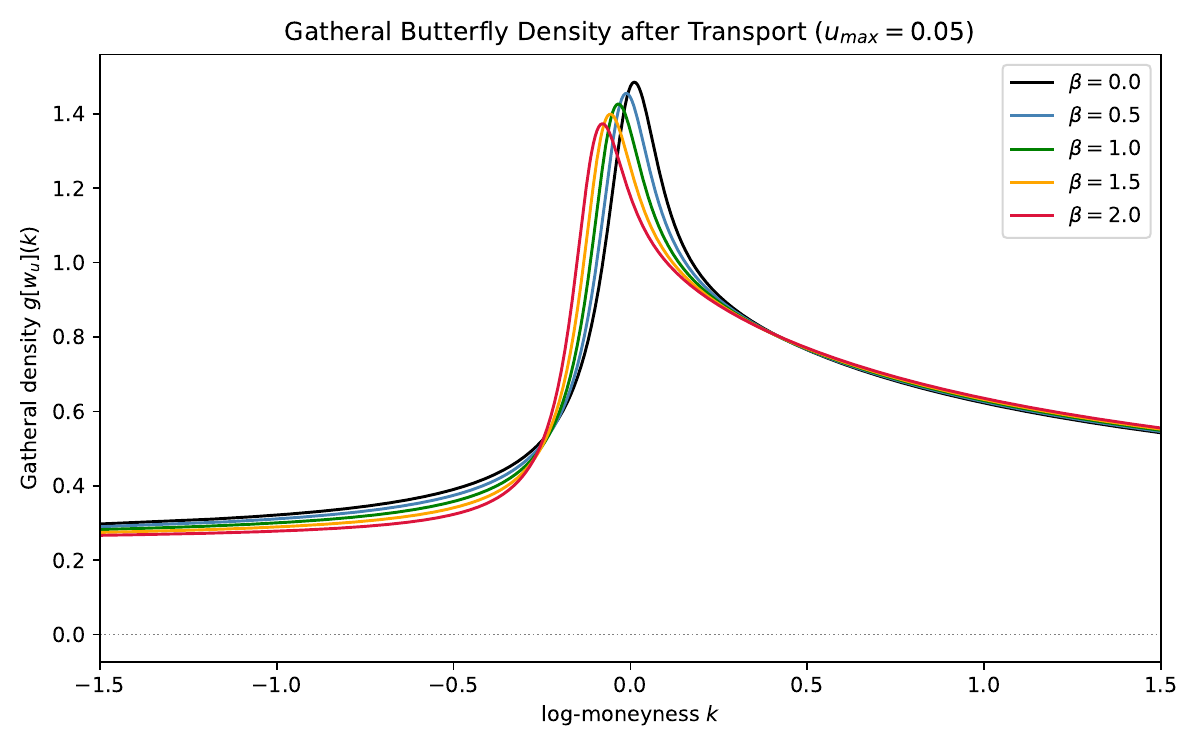}
\caption{Gatheral butterfly density $g[w_u](k)$ evaluated on the transported
SVI surface after a $10\%$ forward move, for $\beta\in\{0,0.5,1,1.5,2\}$.
The density remains strictly positive for all $\beta$ and all $k\in[-1,1]$,
confirming local butterfly-arbitrage preservation (Theorem~\ref{thm:local-arb-preserving}
and Corollary~\ref{cor:ssr-preserves}).  The minimum value across all cases
is $0.23$, well above zero.  Note that $g[w_u](k)\neq g[w_0](k+\beta u)$
(the curves are not simply translates of the initial density), empirically
confirming Remark~\ref{rem:global-ssr}: the Gatheral functional is not
translation-equivariant in $k$.}
\label{fig:gatheral-density}
\end{figure}

\begin{figure}[!htbp]
\centering
\includegraphics[width=0.65\textwidth]{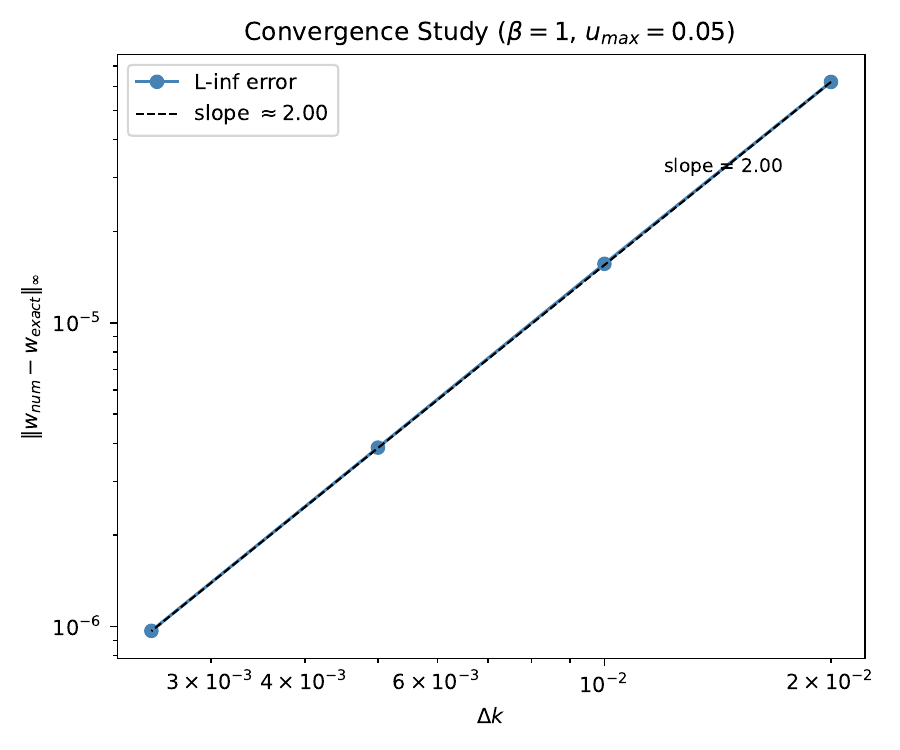}
\caption{Convergence of the second-order Beam-Warming upwind solver for the
SSR transport equation $\partial_u w=\beta\,\partial_k w$.  Log-log plot of
the $L^\infty$ error against grid spacing $\Delta k$
($N_k\in\{151,301,601,1201\}$).  The fitted slope is $2.002$, confirming
second-order convergence as expected from the Beam-Warming spatial
discretization combined with RK2 time-stepping.}
\label{fig:convergence}
\end{figure}

\FloatBarrier
\section{Classical Smile Dynamics as Special Cases}
\label{sec:classical-special-cases}
The classical smile-dynamics models can each be identified with a particular
choice of transport velocity field $v$. Sticky delta corresponds to
$v^{\rm SD}\equiv 0$ (the floating smile is frozen) and sticky strike to
$v^{\rm SS}\equiv 1$ (the fixed-strike surface is frozen); these were the
earliest empirical descriptions~\cite{Derman1999}. The SSR
model~\cite{Bergomi2004,Bergomi2005,Bergomi2009,BergomiBook} takes $v^{\rm SSR}\equiv\beta$ constant
and yields the global translation of Corollary~\ref{cor:ssr-preserves};
the transport equation $\partial_u w=\beta\,\partial_k w$ is solved
numerically using a second-order Beam--Warming scheme whose
$L^\infty$ convergence rate is confirmed in Figure~\ref{fig:convergence}. The
Dupire local-volatility formula~\cite{Dupire1994} implies a spot-dependent
$v^{\rm LV}$ whose short-maturity limit is $2$ as $T\downarrow 0$.
This follows from the classical half-slope
relation~\cite{BerestyckiBuscaFlorent2002,Gatheral2006,BayerFrizEtAl2019}:
$\partial_k\sigma^{\rm imp}(0,T)\to\tfrac{1}{2}\partial_k\sigma^{\rm LV}(S)$
as $T\downarrow 0$, so the ATM implied skew $S_\sigma^{\rm imp}$ is half
the local vol skew. Since $\beta=\Delta\sigma^*/S_\sigma^{\rm imp}$, the
smaller denominator yields $\beta_{\rm LV}\to 2$, consistent with the
table and the $\beta=2$ panel of Figure~\ref{fig:transported-smiles}.
Durrleman~\cite{Durrleman2010} derives the PDE structure of implied-volatility
surface dynamics explicitly from local volatility, establishing results
closely related to the transport-equation representation developed here;
the key distinction is that Durrleman's transport velocity $v(k)$ is a
model-specific output determined by Dupire inversion of the local-vol
surface, whereas in the present paper $v(k)$ is a free organizing object
from which the jet hierarchy, the admissibility conditions, and the
empirical identification all follow without prescribing a generative model.
The quadratic ATM jet $v(k,T)=\beta+\eta k+\tfrac{\psi}{2}k^2$ generates,
via Theorem~\ref{thm:jet-transport}, the coupled level--skew--curvature
dynamics~\eqref{eq:jet-three}. Finally, in rough Bergomi
models~\cite{BergomiGuyon2012,ElEuchFukasawaRosenbaum2018,Fukasawa2021} the velocity is
stochastic with short-maturity expectation $\mathbb E[v(0,T)]\to H+\tfrac{3}{2}$,
recovering the classical SSR short-maturity limit. Table~\ref{tab:taxonomy}
collects these correspondences.

\begin{table}[ht]
\centering
\begin{tabular}{lll}
\toprule
Dynamics & Velocity field $v(k,T;w)$ & Notes \\
\midrule
Sticky delta          & $0$             & floating smile frozen \\
Sticky strike         & $1$             & fixed-$K$ surface frozen \\
SSR (Bergomi)         & $\beta$ const.  & global translation; locally admissible (Corollary~\ref{cor:ssr-preserves}) \\
Jet expansion         & $\beta+\eta k+\tfrac{\psi}{2}k^2$ & skew/curvature transport \\
Local vol (Dupire)    & $\to 2$ as $T\downarrow0$ & determined by $w$ \\
Rough vol (rBergomi)  & stochastic; $\mathbb E[v]\to H+\tfrac{3}{2}$ & $T\downarrow0$; see~\cite{ElEuchFukasawaRosenbaum2018,BergomiGuyon2012,Fukasawa2021} \\
Stochastic-$F$ It\^o  & $v$ & operator class~\eqref{eq:transport-with-diffusion} with $D^2{=}\nu_t^2 v^2{\ge}0$; Prop.~\ref{prop:stochastic-ito} \\
\bottomrule
\end{tabular}
\caption{Classical smile dynamics classified by transport velocity field.}
\label{tab:taxonomy}
\end{table}

\FloatBarrier
\section{Empirical Evidence from SPX}
\label{sec:spx-empirical}
The theoretical framework of Sections~\ref{sec:general-transport}--\ref{sec:local-jet-transport}
makes testable predictions about observed implied volatility dynamics. We
now present an empirical study using SPX data spanning July 2021
to July 2026. The estimand $\hat v_0(k_0)$ is related in spirit to the
path-dependent regression of Guyon and Lekeufack~\cite{GuyonLekeufack2023},
who document that SPX implied-volatility dynamics are not captured by a
scalar SSR; the present framework provides a parameterisation of the
full $k$-dependent velocity field $v(k)$ of which the scalar SSR is the
zeroth-order truncation.

\subsection{Data and Methodology}
\label{subsec:spx-data}

The dataset consists of daily SPX implied-volatility surface snapshots
spanning July 12, 2021 to July 9, 2026 ($n=1{,}118$ usable observations
per tenor after filtering $|\Delta u|\ge 0.001$ and polynomial
$R^2\ge 0.95$), sampled at seven tenors
$T\in\{1\mathrm{M},2\mathrm{M},3\mathrm{M},6\mathrm{M},9\mathrm{M},
12\mathrm{M},24\mathrm{M}\}$. All rolling statistics reported below use a
60-trading-day window. Moneyness is measured relative to the tenor
forward $F(T)=S_0\,e^{(r-q)T}$, consistent with the forward-based
martingale structure of the theory; the strike grid contains 11 moneyness
levels $K/F\in\{0.85, 0.90, 0.925, 0.95, 0.975, 1.0, 1.025, 1.05, 1.075,
1.10, 1.15\}$, spanning a log-moneyness range of approximately
$[-0.16, +0.14]$.

The ATM variance jet is extracted by fitting a cubic Taylor polynomial
to the cross-section of total variance $w_i = \sigma_i^2 T$ in
forward log-moneyness $k_i = \log(K_i/F(T))$ at each date and tenor,
requiring $R^2\ge 0.95$ per fit:
\begin{equation}
    w(k) \approx a_0 + a_1 k + \frac{a_2}{2}\,k^2 + \frac{a_3}{6}\,k^3,
    \tag{8.1}\label{eq:jet-poly-fit}
\end{equation}
with factorial scaling $a_0=c_0$, $a_1=c_1$, $a_2=2c_2$, $a_3=6c_3$, so
that the fitted coefficients $a_n$ equal the ATM derivatives
$\partial_k^n w(0)$ directly, consistent with
equation~\eqref{eq:jet_vel}. The degree-3 polynomial identifies
$a_0,a_1,a_2,a_3$, so all four jets required for the sequential
$(\beta,\eta,\psi)$ estimator are available. For the local-jet experiments, expansion centers are placed at
$K/F\in\{0.905, 0.928, 0.951, 0.975, 1.000, 1.025, 1.051\}$ across multiple tenors.

The sequential estimation protocol from Section~\ref{sec:sequential_estimation}
is applied directly to the extracted smile jets, with $\Delta u = \Delta\log F $
denoting the daily log-forward increment. 
ATM jet coefficients $(\beta, \eta, \psi)$ in Experiment 2 are estimated by sequential forward substitution, with pairs-bootstrap standard errors.  The scalar SSR and local level velocity regression use HC3 standard errors; covariance-dependent Wald tests use the moving-block bootstrap as described below.
\subsection{Experiment 1: SSR Term Structure}
\label{subsec:exp1-ssr}

The classical regression $\Delta a_0 = \beta(T)\cdot a_1\cdot\Delta u
+ \varepsilon$ yields the following estimates ($n=1{,}118$, HC3 standard
errors):

\begin{table}[H]
\centering
\begin{tabular}{lccccccc}
\toprule
Tenor & $\hat\beta$ & HC3 SE & $t$-stat & $p$-value & $R^2$ & Rolling mean & Rolling std \\
\midrule
1M  & 1.4375 & 0.1240 & 11.593 & $<0.001$ & 0.796 & 1.4205 & 0.3983 \\
2M  & 1.3757 & 0.0567 & 24.244 & $<0.001$ & 0.800 & 1.3161 & 0.3681 \\
3M  & 1.3357 & 0.0480 & 27.827 & $<0.001$ & 0.803 & 1.2649 & 0.3456 \\
6M  & 1.2291 & 0.0365 & 33.703 & $<0.001$ & 0.810 & 1.1726 & 0.2890 \\
9M  & 1.1590 & 0.0313 & 37.024 & $<0.001$ & 0.816 & 1.1151 & 0.2532 \\
12M & 1.1145 & 0.0280 & 39.849 & $<0.001$ & 0.818 & 1.0765 & 0.2227 \\
24M & 1.0114 & 0.0242 & 41.829 & $<0.001$ & 0.792 & 0.9972 & 0.1627 \\
\bottomrule
\end{tabular}
\caption{SSR estimates from $\Delta a_0 = \beta\cdot a_1\cdot\Delta u$,
HC3 standard errors, $n=1{,}118$. Rolling statistics use a 60-trading-day
window.}
\label{tab:ssr-term-structure}
\end{table}

\begin{figure}[!htbp]
\centering
\includegraphics[width=\textwidth]{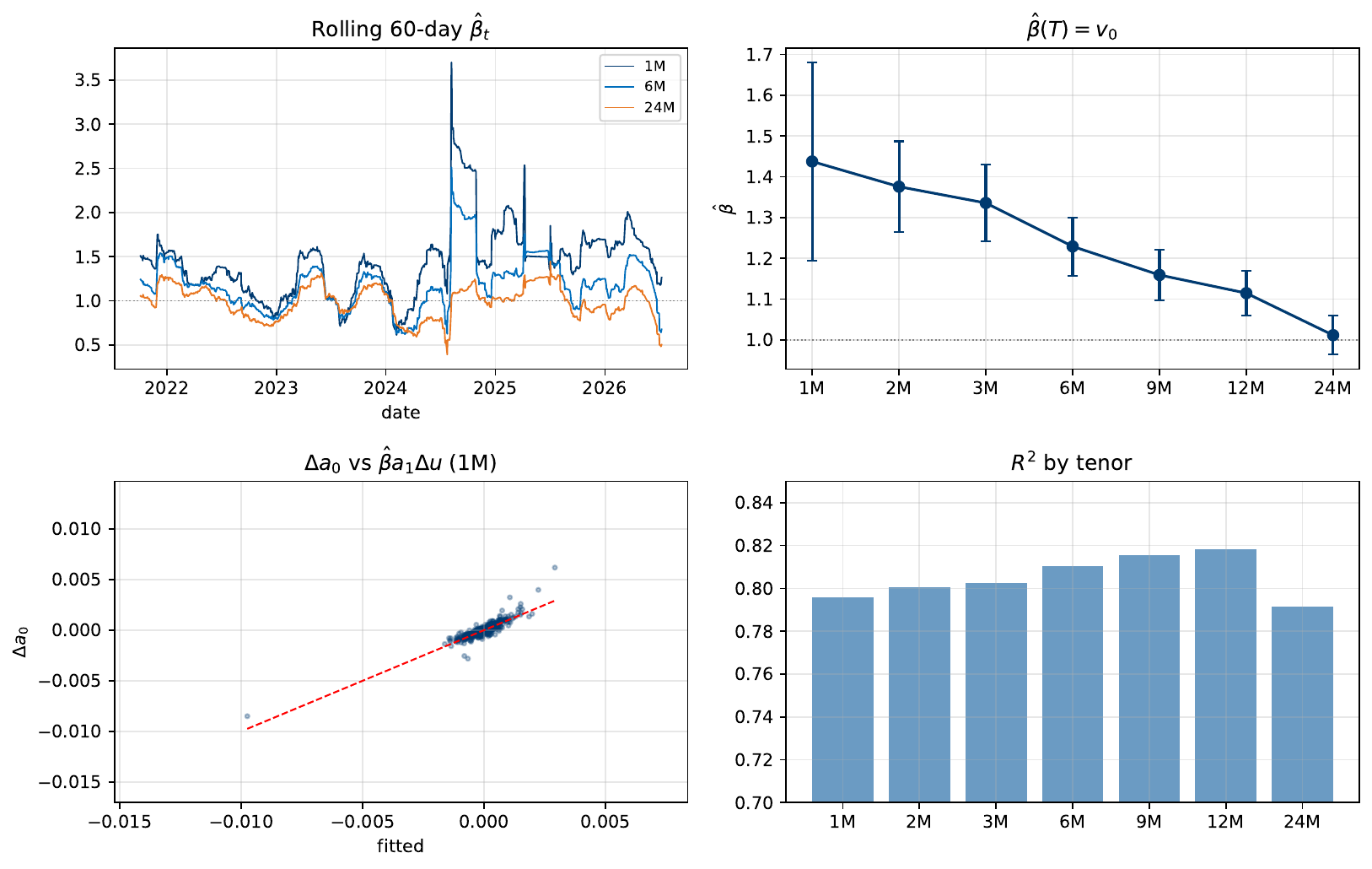}
\caption{SPX SSR (Experiment 1): rolling 60-day estimates (top left), term
structure $\hat\beta(T)$ (top right), regression scatter plot (bottom
left), and $R^2$ by tenor (bottom right). Data: July 2021--July 2026,
$n=1{,}118$. $\hat\beta>1$ at all tenors through 12M and $\hat\beta\approx
1$ at 24M; the term structure is monotonically decreasing from $1.4375$
at 1M to $1.0114$ at 24M.}
\label{fig:spx-exp1}
\end{figure}

$\hat\beta>1$ at all tenors through 12M, and $\hat\beta\approx 1$ at 24M.
The term structure is monotonically decreasing from $1.4375$ at 1M to
$1.0114$ at 24M. $R^2$ is stable at $0.79$--$0.82$ across all tenors.

\subsection{Experiment 2: Jet Transport Hierarchy}
\label{subsec:exp2-jet}

The primary estimator is sequential forward substitution (FS) with pairs
bootstrap standard errors ($N_{\mathrm{boot}}=500$). Because the cubic
polynomial fit identifies $a_3$, the curvature-transport coefficient
$\psi$ is estimated from the third-order equation and reported alongside
$\beta$ and $\eta$.

\begin{table}[H]
\centering
\small
\begin{tabular}{lccccccccc}
\toprule
Tenor & $\hat\beta$ & SE$(\hat\beta)$ & $p(\beta)$ & $\hat\eta$ & SE$(\hat\eta)$ & $p(\eta)$ & $\hat\psi$ & SE$(\hat\psi)$ & $p(\psi)$ \\
\midrule
1M  & 1.4375 & 0.0969 & $<0.001$ &  1.9308 & 1.1782 & 0.101    & 61.632 & 30.935 & 0.046    \\
2M  & 1.3757 & 0.0542 & $<0.001$ &  1.2766 & 0.3590 & $<0.001$ & 49.284 & 12.383 & $<0.001$ \\
3M  & 1.3357 & 0.0465 & $<0.001$ &  0.9008 & 0.2210 & $<0.001$ & 37.260 &  8.090 & $<0.001$ \\
6M  & 1.2291 & 0.0367 & $<0.001$ &  0.3168 & 0.0681 & $<0.001$ & 17.547 &  1.633 & $<0.001$ \\
9M  & 1.1590 & 0.0312 & $<0.001$ & $-0.0309$ & 0.0596 & 0.604  &  8.861 &  0.805 & $<0.001$ \\
12M & 1.1145 & 0.0271 & $<0.001$ & $-0.1554$ & 0.0584 & 0.008  & 3.915 & 0.750 & $<0.001$ \\
24M & 1.0114 & 0.0249 & $<0.001$ & $-0.3623$ & 0.0447 & $<0.001$ & 0.207 & 0.604 & 0.732 \\
\bottomrule
\end{tabular}

\vspace{0.5em}

\begin{tabular}{lcc}
\toprule
Tenor & Wald$(\eta=\psi=0)$ & $p$(Wald) \\
\midrule
1M  &   7.33 & 0.026    \\
2M  &  15.87 & $<0.001$ \\
3M  &  21.26 & $<0.001$ \\
6M  & 130.7  & $<0.001$ \\
9M  & 121.2  & $<0.001$ \\
12M & 38.17  & $<0.001$ \\
24M &  73.73 & $<0.001$ \\
\bottomrule
\end{tabular}
\caption{Sequential forward-substitution estimates with pairs-bootstrap SEs
($n=1{,}118$, $B=500$). $\psi$ is the
curvature-transport coefficient estimated from the third-order equation.
The Wald statistic tests the joint null $H_0:\eta=\psi=0$.
Moving-block bootstrap SEs ($b=5$, same block length as the Wald test)
are on average $16\%$ larger than the pairs-bootstrap SEs reported;
no coefficient changes significance at the $5\%$ level.}
\label{tab:jet-sequential}
\end{table}

\begin{figure}[!htbp]
\centering
\includegraphics[width=\textwidth]{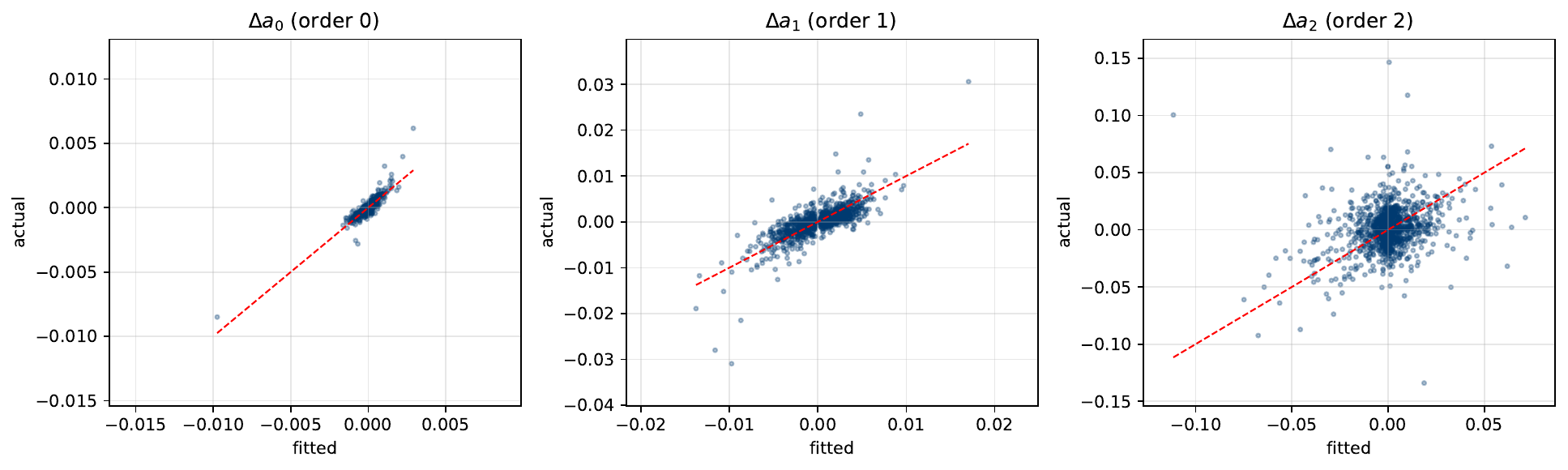}
\caption{Jet transport hierarchy (Experiment 2): scatter plots for
sequential regressions at orders $n=0$, $n=1$, and $n=2$ for 1M SPX. Each
panel shows the dependent variable against the fitted value.}
\label{fig:spx-exp3a}
\end{figure}

\paragraph{Self-similarity test.}
Self-similar transport requires $\eta=0$ and $\psi=0$. The joint Wald
test of $H_0:\eta=\psi=0$ rejects at all tenors ($p=0.026$ at 1M,
$p<0.001$ at all other tenors). The two coefficients drive the
rejection in different maturity regimes: $\hat\psi$ (curvature
transport) is individually significant from 1M through 12M, while
$\hat\eta$ (skew transport) is individually significant at 2M--6M and
again at 12M--24M. Each coefficient dips below individual significance
exactly where it passes through zero --- $\hat\eta$ at 9M ($p=0.60$,
near its sign change between 6M and 9M) and $\hat\psi$ at 24M
($p=0.73$, where it has decayed toward zero) --- but the joint test
rejects at every tenor. Self-similar transport is therefore rejected
across the full term structure.

\begin{figure}[!htbp]
\centering
\includegraphics[width=\textwidth]{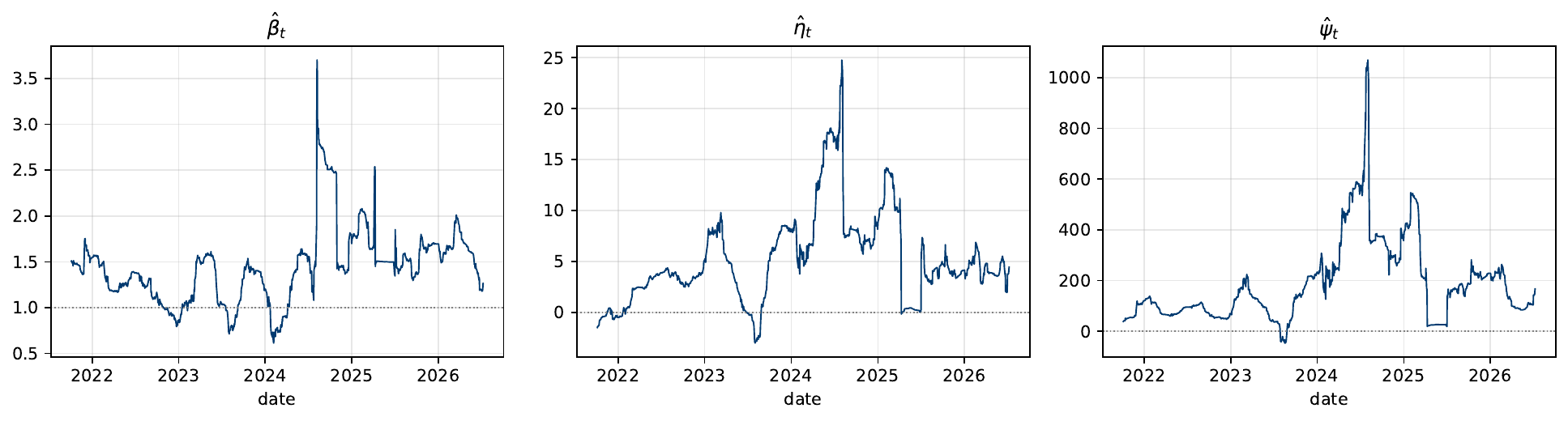}
\caption{Rolling 60-day estimates of $\beta$, $\eta$, and $\psi$ for 1M
SPX over the full sample period. The $\hat\beta$ series also appears in Figure~\ref{fig:spx-exp1} (Experiment~1) across all tenors.}
\label{fig:spx-exp3b}
\end{figure}

At 12M the FS estimate $\hat\eta=-0.155$ is roughly half the slope of
$\hat v_0(k_0)$ fitted across the 12M profile ($\hat\beta_1\approx-0.31$
by OLS on the seven centres of Table~\ref{tab:strike-dep-ssr}); the
discrepancy is consistent with finite-range curvature contamination from
$\hat\psi=3.92$, which bends the $v_0(k)$ profile over the $\pm10\%$
moneyness range and inflates the apparent linear slope beyond the ATM
derivative $\hat\eta$.

\subsection{Experiment 3: Strike-Dependent Velocity Profile}
\label{subsec:exp3-strike}

The following proposition formalizes the test used in this experiment.

The local identification formula~\eqref{eq:identification-v} lets us ask
whether the velocity profile $v(k)$ is actually constant across strikes,
which is precisely the self-similar transport assumption $v(k)\equiv\beta$.
Under self-similarity, the $M$ estimates $\hat{v}_0(k_{0,1}),\ldots,
\hat{v}_0(k_{0,M})$ obtained at different expansion centers should all be
estimating the same number $\beta$. The following Wald test formalizes this.

\begin{proposition}[Strike-constancy test]
\label{prop:chi-square-test}
Self-similar transport ($v(k)\equiv\beta$) is equivalent to
$v_0(k_0)=\beta$ for all $k_0$. Let $\{k_{0,i}\}_{i=1}^M$ be a grid of
expansion centers and let $\hat{v}_0(k_{0,i})$ be the OLS estimator from
Proposition~\ref{prop:identification-v} at each center. Because all $M$
estimators are computed from the same time series of spot returns $\Delta u$
and the same daily polynomial surface fits, they are strongly positively
correlated. Let $\hat\Sigma\in\mathbb R^{M\times M}$ denote the
block-bootstrap covariance matrix of $(\hat{v}_0(k_{0,1}),\ldots,
\hat{v}_0(k_{0,M}))$, estimated by resampling blocks of $b$ consecutive
days. Form the inverse-variance-weighted mean
$\bar\beta = \mathbf{1}^\top\hat\Sigma^{-1}\hat{v}\big/
\mathbf{1}^\top\hat\Sigma^{-1}\mathbf{1}$. Then the Wald statistic
\begin{equation}
    \chi^2
    =
    (\hat{v}-\bar\beta\,\mathbf{1})^\top
    \hat\Sigma^{-1}
    (\hat{v}-\bar\beta\,\mathbf{1})
    \overset{H_0}{\sim}
    \chi^2_{M-1}
    \tag{8.2}\label{eq:chi-square-stat}
\end{equation}
tests $H_0: v(k)\equiv\mathrm{const}$ against the two-sided alternative.
The $\chi^2_{M-1}$ null distribution follows from the joint asymptotic
normality of $\hat{v}$.

\begin{remark}[Naive diagonal statistic is biased toward non-rejection]
The diagonal version $\chi^2_{\mathrm{naive}}=\sum_i(\hat v_i-\bar v)^2/
\widehat{\mathrm{SE}}_i^2$ assumes independence and is severely
\emph{conservative} when estimators are strongly positively correlated:
the marginal SEs $\widehat{\mathrm{SE}}_i$ reflect the common-factor
variance shared by all estimators, whereas the deviations
$(\hat v_i - \bar v)$ are small because the estimators move together.
The result is a test statistic far below the null mean $M-1$,
making rejection artificially hard. For the SPX 1M data the mean
pairwise bootstrap correlation is $0.984$ and the ratio
$\chi^2/\chi^2_{\mathrm{naive}}\approx 100$; the naive test is therefore
not valid for this application.
\end{remark}
\end{proposition}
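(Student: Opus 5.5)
The plan has two parts: an algebraic equivalence, then a standard GMM-type Wald argument for the asymptotic null distribution.

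\emph{Equivalence.} By Proposition~\ref{prop:identification-v}, $v_0(k_0)=\tilde v(k_0;u,T)$ pointwise on $\Omega$.
\begin{itemize}
\item If $v(k)\equiv\beta$, then $v_0(k_0)=\beta$ for every $k_0$ by definition.
\item Conversely, if $v_0(k_0)=\beta$ for all $k_0$, then $\tilde v\equiv\beta$ on $\Omega$. Assume $\Omega$ is dense, as is automatic for analytic or generic surfaces. Then continuity of $\tilde v$ extends the identity to all of $\mathbb R$.
\item By Theorem~\ref{thm:self-similar-transport}, this is exactly self-similar transport.
\end{itemize}
As noted in Remark~\ref{rem:vj-w-dependent}, the equivalence is at the level of the composite $\tilde v$.

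\emph{Joint asymptotic normality.} Each $\hat v_0(k_{0,i})$ is a ratio-type OLS slope
\[
\hat v_i=\frac{\sum_t b_{1,t}^{(i)}\Delta u_t\,\Delta b_{0,t}^{(i)}}{\sum_t (b_{1,t}^{(i)}\Delta u_t)^2}.
\]
Stacking the $M$ moment conditions, I would invoke a central limit theorem for weakly dependent stationary sequences (strong mixing with finite $2+\delta$ moments) together with the delta method. This gives
\[
\sqrt n(\hat v-v\mathbf 1)\Rightarrow N(0,\Sigma).
\]
Consistency of the moving-block bootstrap covariance $\hat\Sigma\to\Sigma$, with block length $b\to\infty$ and $b/n\to0$, follows from the standard Künsch/Lahiri results. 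The $O(\Delta u)$ bias from Proposition~\ref{prop:identification-v} must be negligible relative to $n^{-1/2}$; I would assume this explicitly.

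\emph{Chi-square limit.} Write $z=\Sigma^{-1/2}\sqrt n(\hat v-\beta\mathbf 1)\Rightarrow N(0,I_M)$. The GLS mean $\bar\beta$ is the generalized-least-squares projection of $\hat v$ onto $\mathrm{span}(\mathbf 1)$. Hence the statistic equals $z^\top(I-P)z$, where $P$ is the orthogonal projector onto $\Sigma^{-1/2}\mathbf 1$.

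Since $I-P$ is idempotent of rank $M-1$, Cochran's theorem gives the $\chi^2_{M-1}$ limit. Slutsky's lemma then lets me replace $\Sigma$ by $\hat\Sigma$.

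\emph{Main obstacle.} I expect the hardest step to be ensuring $\Sigma$ is nonsingular. With pairwise correlations near $0.98$, near-singularity is a real concern. I would impose positive definiteness as a hypothesis, justified by the distinct centres having linearly independent regressors $b_1^{(i)}\Delta u$ and distinct residuals. Rigorous verification would rely on that assumption rather than on a derivation.
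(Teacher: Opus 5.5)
Your route is the paper's route, written out. The paper's entire justification is the sentence that the $\chi^2_{M-1}$ null ``follows from the joint asymptotic normality of $\hat v$''. Your GLS projection onto $\Sigma^{-1/2}\mathbf 1$, the rank-$(M-1)$ idempotent $I-P$, and the Slutsky step are what that sentence abbreviates, and given $\Sigma\succ0$ and negligible discretization bias the argument is correct. The genuine gap is at the step you flagged. The problem is not only that positive definiteness is unverified: the justification you give for it is false in the design the statement describes. Suppose, as stated, every $\hat v_0(k_{0,i})$ comes from the same daily cubic fit \eqref{eq:jet-poly-fit}, and let $P_t$ be the day-$t$ cubic. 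Then $x_{i,t}:=b^{(i)}_{1,t}\Delta u_t=P_t'(k_{0,i})\Delta u_t$ lies in the span of the three series $a_{j,t}\Delta u_t$, $j=1,2,3$. So the regressors are linearly dependent once $M\ge4$. The residuals $\varepsilon_{i,t}=\Delta P_t(k_{0,i})-\beta P_t'(k_{0,i})\Delta u_t$ are cubic in $k_{0,i}$. Hence the score is a quintic, $x_{i,t}\varepsilon_{i,t}=\sum_{m=0}^{5}c_{m,t}k_{0,i}^m$, and the score vector is $Vc_t$ with $V=(k_{0,i}^m)$ the $M\times6$ Vandermonde matrix. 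For through-origin OLS one has exactly
\[
\sqrt n\,(\hat v_i-\beta)=\frac{n^{-1/2}\sum_t x_{i,t}\varepsilon_{i,t}}{n^{-1}\sum_t x_{i,t}^2},
\]
so the limit covariance is $\Sigma=Q^{-1}V\Lambda V^\top Q^{-1}$. Here $Q=\mathrm{diag}(\mathbb E\,x_{i,t}^2)$ and $\Lambda$ is the long-run covariance of $c_t$. This matrix has rank at most $6$.

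For the paper's $M=7$ centres, $\Sigma$ is therefore singular. The $\Sigma^{-1/2}$ in your $z$ does not exist, $(n\hat\Sigma)^{-1}$ has no limit, and the Slutsky step fails. So neither your argument nor the paper's one-line appeal, which makes the same assumption tacitly, delivers a $\chi^2_{6}$ null. This is consistent with the paper's reported $\mathrm{cond}(\hat\Sigma)$ of $10^{9}$--$10^{15}$ and with its retreat to the smooth-trend $\chi^2(2)$ test. Your argument does go through in three cases: for $M\le6$ distinct centres (then $V$ has full row rank and $\Sigma\succ0$ follows from $\Lambda\succ0$); for a Wald test on a few contrasts whose covariance is nonsingular; or with a rank-truncated generalized inverse and correspondingly fewer degrees of freedom.

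Two smaller points. First, $v_0(k_0)=\tilde v(k_0;u,T)$ holds by Definition~\ref{def:local-jet} at every $k_0$; identification on $\Omega$ is about recovering $\tilde v$ from data, not about defining it. So the density-of-$\Omega$ step is unnecessary. Second, $\hat\Sigma$ is the bootstrap covariance of $\hat v$ itself, so the consistency statement should read $n\hat\Sigma\to\Sigma$.
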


Running the local level-velocity regression at each expansion center
$k_0=\log(K/F)$ independently across the term structure yields the
following. Table~\ref{tab:strike-dep-ssr} reports point estimates at
three representative tenors (1M, 3M, 12M); the complete Wald-test term
structure is in Table~\ref{tab:strike-constancy-ts}.

\begin{table}[H]
\centering
\begin{tabular}{lcc@{\hskip 1.2em}cc@{\hskip 1.2em}cc}
\toprule
 & \multicolumn{2}{c}{1M} & \multicolumn{2}{c}{3M} & \multicolumn{2}{c}{12M} \\
\cmidrule(lr){2-3}\cmidrule(lr){4-5}\cmidrule(lr){6-7}
$K/F$ & $\hat{v}_0$ & SE & $\hat{v}_0$ & SE & $\hat{v}_0$ & SE \\
\midrule
$90.5\%$        & 1.4248 & 0.074 & 1.3903 & 0.055 & 1.1620 & 0.032 \\
$92.8\%$        & 1.4082 & 0.074 & 1.3527 & 0.051 & 1.1401 & 0.030 \\
$95.1\%$        & 1.4019 & 0.078 & 1.3304 & 0.049 & 1.1263 & 0.029 \\
$97.5\%$        & 1.4096 & 0.091 & 1.3237 & 0.048 & 1.1183 & 0.028 \\
$100.0\%$ (ATM) & 1.4375 & 0.124 & 1.3357 & 0.048 & 1.1145 & 0.028 \\
$102.5\%$       & 1.4891 & 0.199 & 1.3689 & 0.052 & 1.1134 & 0.028 \\
$105.1\%$       & 1.5555 & 0.366 & 1.4258 & 0.065 & 1.1126 & 0.029 \\
\midrule
Range (\%)      & \multicolumn{2}{c}{$+11.0\%$} &
                  \multicolumn{2}{c}{$+7.7\%$} &
                  \multicolumn{2}{c}{$-4.4\%$} \\
Profile         & \multicolumn{2}{c}{U-shaped} &
                  \multicolumn{2}{c}{U-shaped} &
                  \multicolumn{2}{c}{decreasing} \\
\bottomrule
\end{tabular}
\caption{Local level velocity $\hat{v}_0(k_0)$ at three representative
tenors. At 1M and 3M the profile is U-shaped, with both the far
OTM-put wing and the OTM-call wing elevated above a trough modestly
below ATM ($95.1\%$ at 1M, $97.5\%$ at 3M) and the OTM-call wing
attaining the highest velocity; at 12M it decreases monotonically,
with OTM puts carrying the highest velocity.
HC3 standard errors; see Table~\ref{tab:strike-constancy-ts} for Wald
test statistics.}
\label{tab:strike-dep-ssr}
\end{table}

The strike-constancy hypothesis $H_0:v(k,T)\equiv v(T)$ is tested at
each tenor using the block-bootstrap Wald statistic of
Proposition~\ref{prop:chi-square-test} ($B=500$ resamples, block
length $b=5$ days).

Because the seven $k_0$-estimates share the same daily return series,
$\hat\Sigma$ is highly correlated (mean pairwise $\rho = 0.908$ at 1M,
rising to $0.998$ at 24M) and severely ill-conditioned
($\mathrm{cond}(\hat\Sigma)$ from $2.7\times10^9$ at 1M to
$3.0\times10^{15}$ at 24M). Directly inverting such a matrix inflates
the full-rank $\chi^2(6)$ statistic, making its magnitude unreliable;
the non-monotone pattern of these values across tenors is a symptom of
conditioning noise rather than genuine signal variation.
As the primary test we therefore report a \emph{smooth-trend Wald}
that projects the seven-vector of estimates onto linear and quadratic
contrasts in $k_0=\log(K/F)$, yielding a well-conditioned $2\times2$
sub-covariance and a $\chi^2(2)$ statistic insensitive to the
near-singular directions. Table~\ref{tab:strike-constancy-ts} reports
both statistics alongside the conditioning diagnostics.

\begin{table}[H]
\centering
\small
\begin{tabular}{lcrrrrc}
\toprule
Tenor & Profile & Mean $\rho$
  & $\mathrm{cond}(\hat\Sigma)$
  & \makecell{Full\\$\chi^2(6)$}
  & \makecell{Smooth\\$\chi^2(2)$}
  & $p$ \\
\midrule
1M  & U-shaped    & $0.908$ & $2.7\times10^{9}$  & $109$ & $3.57$ & $0.168$ \\
2M  & U-shaped    & $0.973$ & $9.9\times10^{9}$  & $354$ & $23.1$ & $<0.001$ \\
3M  & U-shaped    & $0.982$ & $2.2\times10^{10}$ & $345$ & $32.7$ & $<0.001$ \\
6M  & U-shaped    & $0.988$ & $2.9\times10^{11}$ & $250$ & $27.5$ & $<0.001$ \\
9M  & U-shaped    & $0.989$ & $1.9\times10^{12}$ & $208$ & $16.5$ & $<0.001$ \\
12M & decreasing  & $0.994$ & $1.3\times10^{13}$ & $200$ & $10.1$ & $0.006$ \\
24M & decreasing  & $0.998$ & $3.0\times10^{15}$ & $91$  & $5.46$ & $0.065$ \\
\bottomrule
\end{tabular}
\caption{Strike-constancy Wald test for $\beta=v_0$,
$H_0:v_0(k,T)\equiv v_0(T)$ ($B=500$ block-bootstrap resamples,
$b=5$ days, $M=7$ strike centers).
\emph{Full $\chi^2(6)$}: standard Wald inverting the full $\hat\Sigma$;
magnitudes are unreliable due to severe ill-conditioning
($\mathrm{cond}(\hat\Sigma)$ rising from $10^9$ to $10^{16}$ across the
term structure).
\emph{Smooth $\chi^2(2)$}: smooth-trend Wald projecting onto
linear and quadratic contrasts in $k_0=\log(K/F)$; this is the primary
reported statistic.
The ``profile'' column describes the shape of
$\hat v_0(k_0)$ from OTM puts to OTM calls.}
\label{tab:strike-constancy-ts}
\end{table}

The smooth-trend Wald rejects $\beta$ strike-constancy at the
intermediate tenors $2$M--$12$M ($\chi^2(2)$ from $10.1$ to $32.7$,
all $p\le0.006$) but not at $1$M ($p=0.168$) or, marginally, at $24$M
($p=0.065$). The full-rank values ($91$--$354$) are inflated by
conditioning noise and should be read as order-of-magnitude diagnostics
only. Block-length sensitivity ($b=5,10,21$) confirms the full-rank
statistics vary substantially across block lengths at every tenor,
consistent with instability from near-singular inversion.

The profile shape evolves smoothly with maturity. At the shorter tenors
the profile is U-shaped --- pronounced at 1M and 3M and shallower at 6M
and 9M --- with a trough that sits modestly below ATM and edges toward
ATM as maturity increases ($95.1\%$ at 1M, $97.5\%$ at 3M and 6M, near
ATM at 9M) and the OTM-call wing carrying the highest velocity at the
shortest tenors. By the longest maturities ($12$M--$24$M) the interior
trough has disappeared and the profile is monotonically decreasing, with
OTM puts carrying the highest velocity. For the higher-order components,
the smooth-trend Wald likewise rejects $\eta$ strike-constancy at nearly
all tenors and $\psi$ strike-constancy at the intermediate tenors,
consistent with the two-dimensional structure of the velocity field.

We next examine the full velocity jet profile $(\hat{v}_0, \hat{v}_1, \hat{v}_2)$ at each expansion center.

We apply Proposition~\ref{prop:identification-v} at seven expansion
centers $k_0\in\{\log(0.905),\ldots,0,\ldots,\log(1.051)\}$ to trace the
local velocity jets $\hat v_0(k_0)$, $\hat v_1(k_0)$, $\hat v_2(k_0)$ at
the 1M tenor:

\begin{table}[H]
\centering
\small
\begin{tabular}{lcccccc}
\toprule
$K/F$ & $\hat{v}_0(k_0)$ & SE$(v_0)$ & $\hat{v}_1(k_0)$ & SE$(v_1)$ & $\hat{v}_2(k_0)$ & SE$(v_2)$ \\
\midrule
$90.5\%$        & 1.4248 & 0.071 & $-0.8900$ & 0.352 &  17.820 &   2.634 \\
$92.8\%$        & 1.4082 & 0.079 & $-0.4815$ & 0.376 &  21.940 &   3.684 \\
$95.1\%$        & 1.4019 & 0.071 &   0.0572  & 0.447 &  28.556 &   6.657 \\
$97.5\%$        & 1.4096 & 0.084 &   0.8170  & 0.670 &  40.221 &  13.699 \\
$100.0\%$ (ATM) & 1.4375 & 0.093 &   1.9308  & 1.169 &  61.632 &  30.019 \\
$102.5\%$       & 1.4891 & 0.139 &   3.4919  & 2.437 &  98.560 &  76.618 \\
$105.1\%$       & 1.5555 & 0.221 &   5.2949  & 5.094 & 147.532 & 198.251 \\
\bottomrule
\end{tabular}
\caption{Local velocity jet profiles $\hat{v}_0(k_0)$, $\hat{v}_1(k_0)$,
$\hat{v}_2(k_0)$ at 1M. $\hat{v}_1(k_0)$ is negative in the OTM put
region and increasing toward ATM and calls. Standard errors for
$\hat{v}_1$ and $\hat{v}_2$ grow rapidly off-ATM; estimates beyond $\pm
5\%$ moneyness should be interpreted with caution.}
\label{tab:local-vel-profile}
\end{table}

\begin{figure}[!htbp]
\centering
\includegraphics[width=\textwidth]{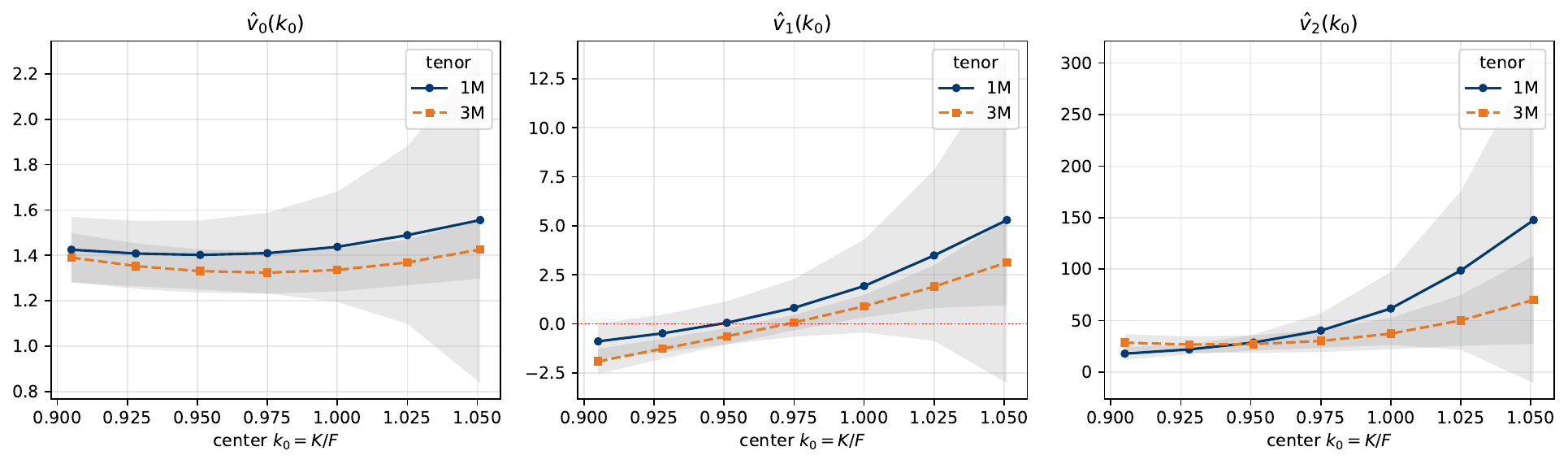}
\caption{Local velocity-jet profiles $\hat{v}_0(k_0)$, $\hat{v}_1(k_0)$,
and $\hat{v}_2(k_0)$ versus expansion center $k_0$ at 1M and 3M.}
\label{fig:spx-exp4}
\end{figure}

Figure~\ref{fig:spx-exp4} plots all three velocity jets against expansion
center $k_0$ at 1M. $\hat{v}_0(k_0)$ is U-shaped: it dips from $1.42$
at $90.5\%$ to a trough of $1.40$ at $95.1\%$, then rises strongly to
$1.56$ at $105.1\%$, consistent with the profile in
Table~\ref{tab:strike-dep-ssr}.
The smooth-trend Wald test does not formally reject strike-constancy at
1M ($p=0.168$) due to near-singular $\hat\Sigma$ at this tenor, though
the non-constant U-shape is visually clear. $\hat{v}_1(k_0)$ shows a clear
pattern: negative at the OTM-put centers ($K/F\le92.8\%$), crossing zero
near $94.9\%$, and positive at ATM and OTM calls. At ATM,
$\hat{v}_1(0)=\hat\eta=1.931$, consistent with Experiment~2. The sign
and magnitude of $\hat{v}_0(k_0)$ at 3M (U-shaped, trough near ATM) and
at 12M (monotonically decreasing) differ from the 1M profile, reflecting
the maturity-dependent structure of $v_0(k,T)$ documented in
Section~\ref{subsec:exp3-strike}. Standard errors for $\hat\psi$ and
off-ATM $\hat v_1$, $\hat v_2$ are large relative to the estimates
(Table~\ref{tab:local-vel-profile}); beyond $\pm5\%$ moneyness,
higher-order jet coefficients should be interpreted with caution.

\subsection{Experiment 4: Out-of-Sample Predictive Performance}
\label{subsec:exp4-oos}

To test whether $\eta$ and $\psi$ earn their keep predictively, we run
a 5-fold blocked time-series cross-validation on the aligned daily panel
$(a_0,a_1,a_2,a_3,\Delta u)$ and the full smile grid $w(k_i,T)$ at 11
moneyness points $K/F\in[0.85, 1.15]$. On each training fold we
estimate four nested models by forward substitution; on the held-out
fold we score them on four metrics:
$\mathrm{RMSE}(\Delta a_n)$ for $n=0,1,2$ and the full-smile RMSE
$\sqrt{\frac{1}{11}\sum_j(\widehat{\Delta w}(k_j)-\Delta w(k_j))^2}$
where $\widehat{\Delta w}(k)=\widehat{\Delta a_0}
+\widehat{\Delta a_1}\,k+\tfrac{1}{2}\widehat{\Delta a_2}\,k^2$.
Table~\ref{tab:oos-smile} reports mean OOS RMSE ($\times10^{-4}$)
across the five folds at four tenors.

\begin{table}[H]
\centering
\small
\begin{tabular}{llrrrr}
\toprule
Tenor & Model
  & \makecell{RMSE\\$\Delta a_0$}
  & \makecell{RMSE\\$\Delta a_1$}
  & \makecell{RMSE\\$\Delta a_2$}
  & \makecell{RMSE\\Smile} \\
\midrule
\multirow{4}{*}{1M}
 & Random walk              & 4.690 & 30.59 & 188.5 &  5.393 \\
 & SSR only                 & 2.254 & 22.24 & 220.4 &  3.289 \\
 & SSR + skew               & 2.254 & 21.88 & 281.6 &  3.436 \\
 & Full ($\beta,\eta,\psi$) & 2.254 & 21.88 & 219.4 &  3.236 \\
\midrule
\multirow{4}{*}{3M}
 & Random walk              & 9.094 & 39.92 & 175.7 &  9.673 \\
 & SSR only                 & 4.360 & 26.07 & 272.8 &  5.274 \\
 & SSR + skew               & 4.360 & 25.12 & 318.7 &  5.332 \\
 & Full ($\beta,\eta,\psi$) & 4.360 & 25.12 & 215.7 &  5.191 \\
\midrule
\multirow{4}{*}{6M}
 & Random walk              & 13.01 & 39.80 & 215.9 & 13.452 \\
 & SSR only                 &  6.028 & 24.61 & 234.8 &  6.759 \\
 & SSR + skew               &  6.028 & 24.28 & 251.6 &  6.783 \\
 & Full ($\beta,\eta,\psi$) &  6.028 & 24.28 & 195.5 &  6.761 \\
\midrule
\multirow{4}{*}{12M}
 & Random walk              & 17.88 & 36.97 & 284.1 & 18.148 \\
 & SSR only                 &  8.013 & 24.67 & 199.3 &  8.599 \\
 & SSR + skew               &  8.013 & 24.77 & 195.0 &  8.591 \\
 & Full ($\beta,\eta,\psi$) &  8.013 & 24.77 & 191.2 &  8.615 \\
\bottomrule
\end{tabular}
\caption{Out-of-sample RMSE ($\times10^{-4}$) from 5-fold blocked
time-series cross-validation on the real SPX panel ($n=1{,}118$ per
tenor). \emph{RMSE(Smile)}: root mean squared error of
$\widehat{\Delta w}(k)$ evaluated at 11 moneyness points
$K/F\in[0.85,1.15]$.}
\label{tab:oos-smile}
\end{table}

By construction $\mathrm{RMSE}(\Delta a_0)$ is identical for the three
transport models: $\beta$ is estimated from the $\Delta a_0$ equation
alone and $\eta$, $\psi$ do not enter it.
The non-tautological comparisons are on $\Delta a_1$, $\Delta a_2$,
and the full smile.
Adding $\eta$ (SSR+skew and Full) reduces $\Delta a_1$ RMSE by $1$--$4\%$
relative to SSR only at 1M--6M (essentially flat at 12M).
The largest gain is on $\Delta a_2$: the full model reduces RMSE by
$0.5\%$ at 1M (where the small ATM skew $a_1\approx-0.019$ makes $\psi$ near-degenerate to identify at 1M) and by $17$--$21\%$ at
$3$M--$6$M, narrowing to $\sim4\%$ at $12$M.
SSR+skew worsens $\Delta a_2$ RMSE at short and medium tenors
($+28\%$ at 1M, $+17\%$ at 3M): without $\hat\psi$ to account for the
dominant $\psi\,a_1\,\Delta u$ signal, finite-sample noise in $\hat\eta$
propagates through the $2\hat\eta\,a_2\,\Delta u$ term and degrades
curvature predictions.
The $17$--$21\%$ improvement in $\Delta a_2$ RMSE does not translate
into comparable smile RMSE gains, and this is not a contradiction.
The smile RMSE decomposes approximately as
\[
    \mathrm{RMSE(Smile)}^2
    \;\approx\;
    \mathrm{RMSE}(\Delta a_0)^2
    \;+\;
    \mathrm{RMSE}(\Delta a_1)^2\,\overline{k^2}
    \;+\;
    \tfrac{1}{4}\,\mathrm{RMSE}(\Delta a_2)^2\,\overline{k^4},
\]
where $\overline{k^2}\approx0.007$ and $\overline{k^4}\approx0.00015$
over the $\pm16\%$ moneyness grid.
These weights reveal the hierarchy: $\Delta a_0$ contributes
$\sim75\%$ of total smile variance, $\Delta a_1$ about $17\%$, and
$\Delta a_2$ only $\sim6\%$, because curvature errors are multiplied
by $\tfrac{1}{2}k^2\le0.013$ at the widest strikes.
A $\sim20\%$ reduction in $\Delta a_2$ RMSE therefore translates to only
$\sim1.6\%$ improvement in total smile RMSE, exactly as observed.

The correct interpretation is a decomposition of risk:
$\beta$ governs \emph{level risk} — how the ATM variance level shifts
with spot, the dominant component in smile RMSE — and SSR already
captures this well.
$\eta$ and $\psi$ govern \emph{shape risk} — how the skew and
curvature of the smile deform with spot — which enters the smile
prediction with small $k$ and $k^2/2$ weights and thus contributes
little to aggregate RMSE, yet is economically material for positions
with vanna and volga exposure.
The random walk is on average $89\%$ worse than SSR on the full smile,
confirming that transport structure captures the dominant level
dynamics; the higher-order coefficients sharpen the shape dynamics
beyond what SSR alone provides.

\subsection{Statistical Summary}
\label{subsec:spx-summary}

Table~\ref{tab:stat-summary} consolidates the hypothesis tests.

\begin{table}[H]
\centering
\begin{tabular}{lllcc}
\toprule
Test & $H_0$ & Statistic & $p$-value & Result \\
\midrule
SSR significance (1M)   & $\beta=0$          & $t=11.59$          & $<0.001$ & Reject \\
SSR significance (3M)   & $\beta=0$          & $t=27.83$          & $<0.001$ & Reject \\
SSR significance (24M)  & $\beta=0$          & $t=41.83$          & $<0.001$ & Reject \\
Self-similarity (1M)    & $\eta=\psi=0$      & Wald$=7.33$        & $0.026$  & Reject \\
Self-similarity (3M)    & $\eta=\psi=0$      & Wald$=21.26$       & $<0.001$ & Reject \\
Self-similarity (24M)   & $\eta=\psi=0$      & Wald$=73.73$       & $<0.001$ & Reject \\
Strike-constancy (1M)$^\dagger$   & $v_0(k)=\text{const}$ & $\chi^2(2)=3.57$  & $0.168$  & Not rejected \\
Strike-constancy (3M)$^\dagger$   & $v_0(k)=\text{const}$ & $\chi^2(2)=32.7$  & $<0.001$ & Reject \\
Strike-constancy (12M)$^\dagger$  & $v_0(k)=\text{const}$ & $\chi^2(2)=10.1$  & $0.006$  & Reject \\
Strike-constancy (24M)$^\dagger$  & $v_0(k)=\text{const}$ & $\chi^2(2)=5.46$  & $0.065$  & Not rejected \\
\bottomrule
\end{tabular}
\caption{Summary of hypothesis tests ($n=1{,}118$). Self-similarity is
tested via the joint Wald test of $H_0:\eta=\psi=0$.
$^\dagger$Strike-constancy uses the smooth-trend Wald $\chi^2(2)$
(linear and quadratic contrasts in $k_0=\log(K/F)$; see
Table~\ref{tab:strike-constancy-ts} for full details including
$\mathrm{cond}(\hat\Sigma)$ diagnostics). Tenors 2M, 6M, 9M also reject
at $p<0.001$ (smooth $\chi^2(2)\in[16.5,27.5]$); 24M is marginal
($p=0.065$).}
\label{tab:stat-summary}
\end{table}

\begin{figure}[!htbp]
\centering
\includegraphics[width=\textwidth]{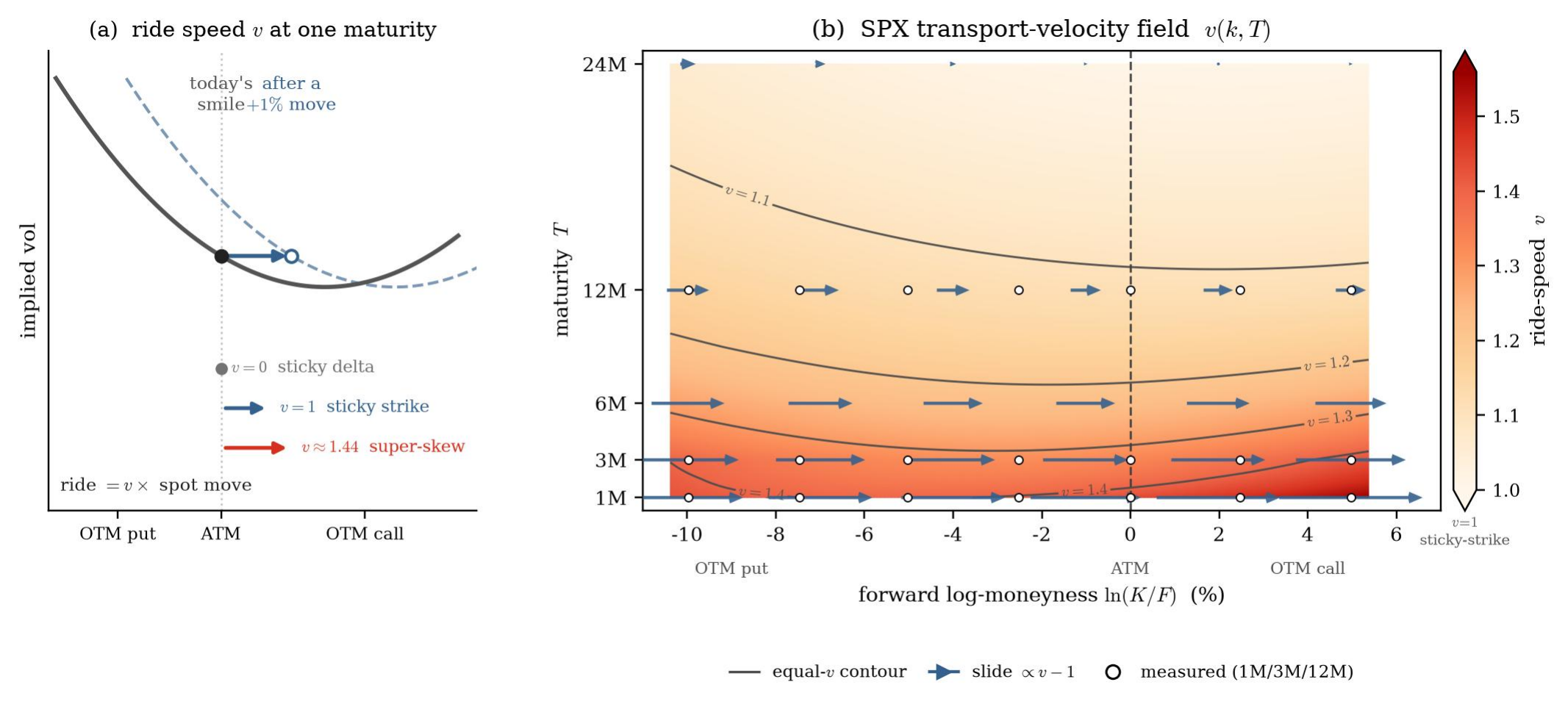}
\caption{The SPX transport velocity field $v(k,T)$.
\emph{(a)}~Ride mechanics: a $1\%$ spot move shifts the smile by $v\%$
in forward log-moneyness --- $v{=}0$ sticky-delta, $v{=}1$
sticky-strike, $v{>}1$ super-skew (SPX 1M: $v{\approx}1.44$).
\emph{(b)}~Estimated field over forward log-moneyness $k{=}\log(K/F)$
and maturity $T$: colour encodes ride-speed, arrows mark the
excess over sticky-strike ($v{-}1$), contours are constant-$v$ levels,
dashed line is forward-ATM.
The ATM speed $\beta(T)$ decays toward $1$ at long maturities; the
across-strike profile transitions from U-shaped at short tenors to
monotonically decreasing at long tenors.
Estimated from Tables~\ref{tab:strike-dep-ssr}
and~\ref{tab:ssr-term-structure}; field interpolated between measured tenors.}
\label{fig:velocity-flow}
\end{figure}

\begin{remark}[The $v_0(k,T)$ surface is two-dimensional]
Figure~\ref{fig:velocity-flow} renders the complete empirical field.
The smooth-trend Wald rejects strike-constancy at $2$M--$12$M; the test
is not significant at 1M ($p=0.168$) or, marginally, at 24M ($p=0.065$),
where the near-singular $\hat\Sigma$ makes the full-rank statistic
unreliable. At the intermediate tenors the strike-dependence is clear and
the profile shape evolves smoothly with maturity: at the shorter tenors
$v_0(k)$ is U-shaped, pronounced at 1M--3M, shallower at 6M--9M, 
with a trough that edges from $\approx95\%$ toward ATM as maturity
increases and the OTM-call wing carrying the highest velocity at the
shortest tenors; at the longest maturities ($\ge12$M) the profile is
monotonically decreasing, with OTM puts carrying higher velocity,
consistent with structural put-skew demand. The field
$v_0(k,T)$ is not separable as $v_0(T)\cdot\phi(k)$ and not monotonic
in either dimension; a single SSR parameter $\beta(T)$ captures only
the ATM value of this surface.
\end{remark}

\FloatBarrier
\section{Numerical Validation}
\label{sec:mc-validation}

This section validates finite-sample properties of the estimator under the
assumed data-generating process(DGP); it does not constitute evidence that SPX dynamics follow the
jet-transport model.

\subsection{Estimator Finite-Sample Properties}
\label{subsec:mc-design}
We estimate $(\beta,\eta,\psi)$ by joint OLS on the lower-triangular system~\eqref{eq:lower-tri-system}; the empirical methodology shares the goal of extracting low-dimensional surface dynamics with Cont and da Fonseca~\cite{ContdaFonseca2002}. The validation is run at
three tenors (1M, 3M, 6M). At each tenor the true parameters
$(\beta,\eta,\psi)$ are set to the empirical forward-substitution estimates
from Section~\ref{sec:spx-empirical}, and the signal is built on the
\emph{observed} covariate path $(a_1,a_2,a_3,\Delta u)$.
In each of $S=500$ trials the residual triple
$(\varepsilon_0,\varepsilon_1,\varepsilon_2)$ is replaced by a
moving-block resample of the empirical regression residuals
(block length $b=5$, matching the Wald-test block length),
drawn jointly across the three equations to preserve both serial
autocorrelation and cross-equation correlation.
The cubic coefficient $a_3$ enters the $\Delta a_2$ equation as an observed
regressor and is held fixed; its realized heavy-tailed variability propagates
through $\varepsilon_2$.
The full empirical sample ($n=1{,}118$ days) is used, with an inner
pairs-bootstrap ($B=150$) to obtain reported standard errors within each trial.

Table~\ref{tab:mc-recovery} reports parameter recovery for all three tenors.

\begin{table}[H]
\centering
\small
\begin{tabular}{llrrrrrc}
\toprule
Tenor & Param & True & Bias
  & \makecell{Realized\\SE}
  & \makecell{Reported\\SE}
  & \makecell{Emp\\SE}
  & Cov.\ (\%) \\
\midrule
\multirow{3}{*}{1M}
 & $\beta$ & $1.4375$ & $-0.003$ & $0.021$ & $0.020$ & $0.097$ & $97.4$ \\
 & $\eta$  & $1.9308$ & $+0.006$ & $0.139$ & $0.142$ & $1.178$ & $96.0$ \\
 & $\psi$  & $61.63$  & $-0.199$ & $2.39$  & $2.71$  & $30.94$ & $98.0$ \\
\midrule
\multirow{3}{*}{3M}
 & $\beta$ & $1.3357$ & $-0.002$ & $0.020$ & $0.018$ & $0.047$ & $94.4$ \\
 & $\eta$  & $0.9008$ & $-0.001$ & $0.072$ & $0.067$ & $0.221$ & $94.8$ \\
 & $\psi$  & $37.26$  & $-0.027$ & $1.160$ & $1.128$ & $8.090$ & $97.0$ \\
\midrule
\multirow{3}{*}{6M}
 & $\beta$ & $1.2291$ & $-0.002$ & $0.018$ & $0.017$ & $0.037$ & $95.2$ \\
 & $\eta$  & $0.3168$ & $-0.005$ & $0.045$ & $0.043$ & $0.068$ & $95.4$ \\
 & $\psi$  & $17.55$  & $+0.030$ & $0.608$ & $0.603$ & $1.633$ & $94.4$ \\
\bottomrule
\end{tabular}
\caption{Parameter recovery under the block-residual bootstrap DGP
($S=500$ trials, $n=1{,}118$ days, block length $b=5$, inner pairs-bootstrap
$B=150$). True values are the empirical forward-substitution estimates at each
tenor. \emph{Realized SE}: standard deviation of estimates across trials,
conditioning on the observed design. \emph{Reported SE}: median of the
estimator's own pairs-bootstrap SE within each trial. \emph{Emp SE}:
pairs-bootstrap SE from the actual SPX fit. Coverage uses the reported SE.}
\label{tab:mc-recovery}
\end{table}

The estimator is well-calibrated at all tenors: realized SE and reported SE
agree closely across all parameters, and coverage is at or above the nominal
95\% in every case. The empirical SE exceeds the realized SE for $\psi$ by a
ratio that decreases sharply with tenor: $12.9\times$ at 1M, $7.0\times$ at
3M, and $2.7\times$ at 6M. This pattern is not a calibration failure. The
block-residual bootstrap conditions on the observed design matrix and measures
$\mathrm{SE}(\hat\psi\mid X)$; the empirical pairs bootstrap resamples design
rows, capturing additional variance from the near-zero identifying regressor
$x=a_1\cdot\Delta u$ at short tenors ($a_1\approx-0.019$ at 1M). As tenor
increases, $|a_1|$ grows and the design becomes more informative, compressing
the gap. The corresponding empirical $t$-statistics for $\hat\psi$ are
$t\approx2.0$ at 1M, $4.6$ at 3M, and $10.7$ at 6M. The 3M and 6M rejections
are sharp; the 1M result should be interpreted with caution given
near-degenerate identification from the small ATM skew at that tenor.

\begin{table}[H]
\centering
\small
\begin{tabular}{lrrrr}
\toprule
Model
  & \makecell{RMSE\\$\Delta a_0$}
  & \makecell{RMSE\\$\Delta a_1$}
  & \makecell{RMSE\\$\Delta a_2$}
  & \makecell{RMSE\\Smile} \\
\midrule
Random walk              & 3.834 & 28.978 & 322.805 & 4.198 \\
SSR only                 & 2.456 & 22.338 & 219.399 & 2.789 \\
SSR + skew               & 2.456 & 22.007 & 240.453 & 2.800 \\
Full ($\beta,\eta,\psi$) & 2.456 & 22.007 & 205.384 & 2.785 \\
\bottomrule
\end{tabular}
\caption{Out-of-sample RMSE ($\times10^{-4}$) from 5-fold time-series
cross-validation ($S=500$ replications, $n=1{,}118$ days,
DGP calibrated to 1M SPX with $\beta=1.4375$, $\eta=1.9308$,
$\psi=61.63$).
\emph{RMSE($\Delta a_n$)}: root mean squared error of the predicted
jet coefficient change.
\emph{RMSE(Smile)}: full-smile RMSE evaluated at the seven
$k_0$ expansion centers ($K/F\in\{0.905,\ldots,1.051\}$),
$\Delta\hat w(k)=\widehat{\Delta a_0}+\widehat{\Delta a_1}k
+\tfrac{1}{2}\widehat{\Delta a_2}k^2$.}
\label{tab:mc-rmse}
\end{table}

All three transport models tie on $\Delta a_0$ RMSE (2.456):
$\eta$ and $\psi$ enter the $\Delta a_1$ and $\Delta a_2$ equations
and leave $\Delta a_0$ unaffected by construction.
Adding $\eta$ reduces $\Delta a_1$ RMSE by $1.5\%$ (SSR+skew and full
vs.\ SSR only).
The full model achieves the lowest $\Delta a_2$ RMSE (205.4),
a $6.4\%$ improvement over SSR only (219.4); notably,
SSR+skew worsens $\Delta a_2$ RMSE to 240.5
because the finite-sample estimation error in $\hat\eta$
propagates through the $2\hat\eta\,a_2\,\Delta u$ term without
$\hat\psi$ to correct for the dominant $\psi\,a_1\,\Delta u$ signal.
Full-smile RMSE differences across models are small ($<0.5\%$)
within the $\pm10\%$ moneyness window used here, since
$k^2\le0.01$ makes the $\tfrac{1}{2}\Delta a_2\,k^2$ contribution
negligible relative to $\Delta a_0$ at these strikes.

Table~\ref{tab:mc-power} reports the rejection rate of the Wald test
$H_0:\eta=0$ (nominal 5\%) at four sample sizes.

\begin{table}[H]
\centering
\begin{tabular}{cc}
\toprule
$N$ & Rejection rate \\
\midrule
50  & 0.140 \\
100 & 0.280 \\
200 & 0.460 \\
500 & 0.810 \\
\bottomrule
\end{tabular}
\caption{Power of the Wald test $H_0:\eta=0$ (nominal 5\%) at $\eta=1.931$
(1M SPX calibration), $S=200$ replications.}
\label{tab:mc-power}
\end{table}

The true $\eta=1.931$ is the SPX-calibrated value from
Section~\ref{sec:spx-empirical}. Power grows steadily with sample size:
at $N=50$ steps the rejection rate is $0.140$, rising to $0.810$ at
$N=500$ (approximately two years of daily data). The relatively low power
at short $N$ reflects the weak identification of $\eta$ at 1M documented
in Section~\ref{subsec:mc-design}: the identifying regressor
$a_1\cdot\Delta u$ is small when $|a_1|\approx0.019$, so the
pairs-bootstrap SE is large relative to $\hat\eta$.

\begin{figure}[!htbp]
\centering
\includegraphics[width=\textwidth]{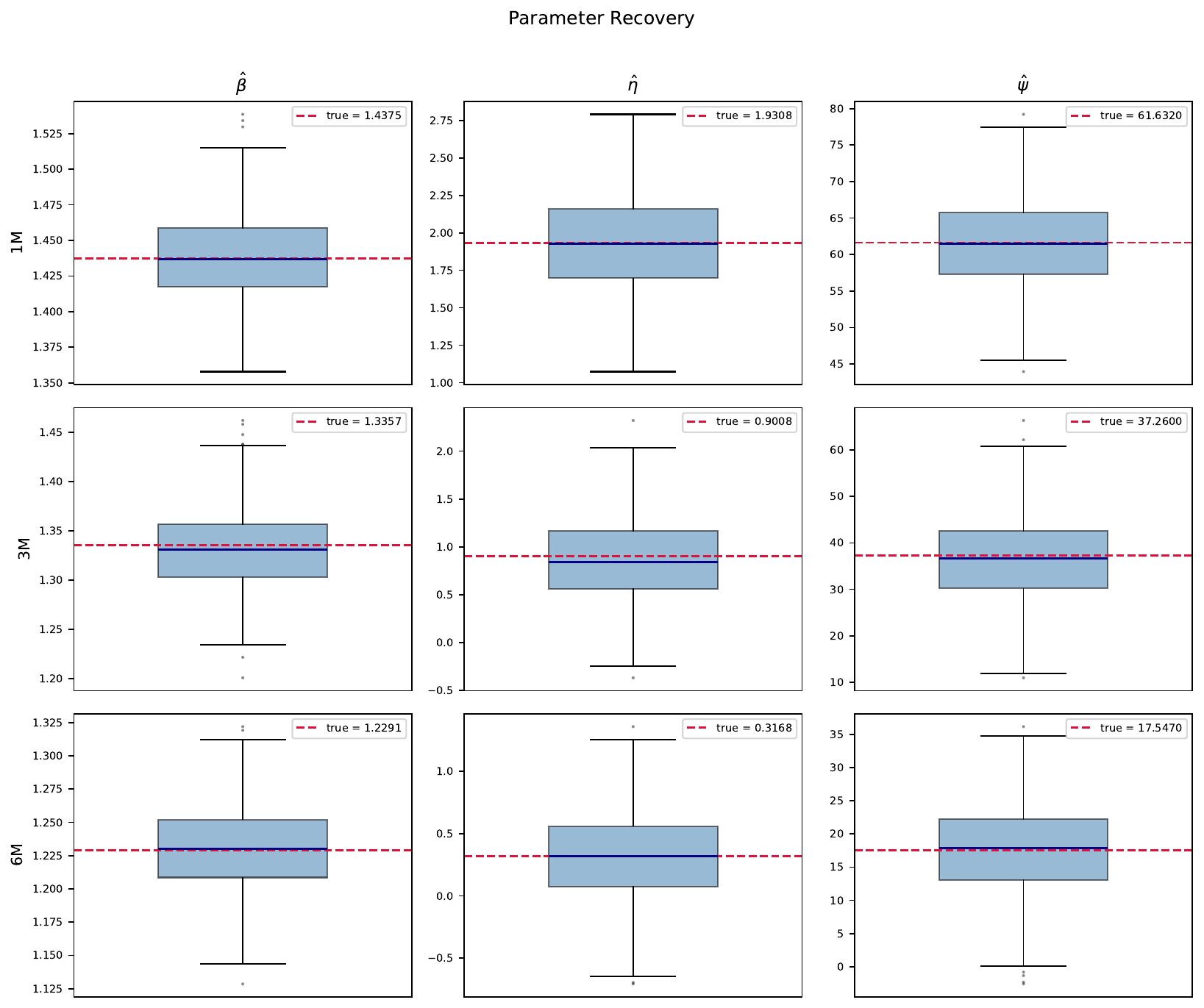}
\caption{Parameter recovery under the block-residual bootstrap DGP
($S=500$ trials, $n=1{,}118$ days, block length $b=5$). Each row
corresponds to a tenor (1M, 3M, 6M); each column to a parameter
($\hat\beta$, $\hat\eta$, $\hat\psi$). True values (red dashed) are
the empirical forward-substitution estimates at each tenor
($\beta=1.44,\,1.34,\,1.23$; $\eta=1.93,\,0.90,\,0.32$;
$\psi=61.6,\,37.3,\,17.5$). Box plots show the distribution of
estimates across trials; all parameters are recovered with
negligible bias.}
\label{fig:param-recovery}
\end{figure}

\begin{figure}[!htbp]
\centering
\includegraphics[width=0.80\textwidth]{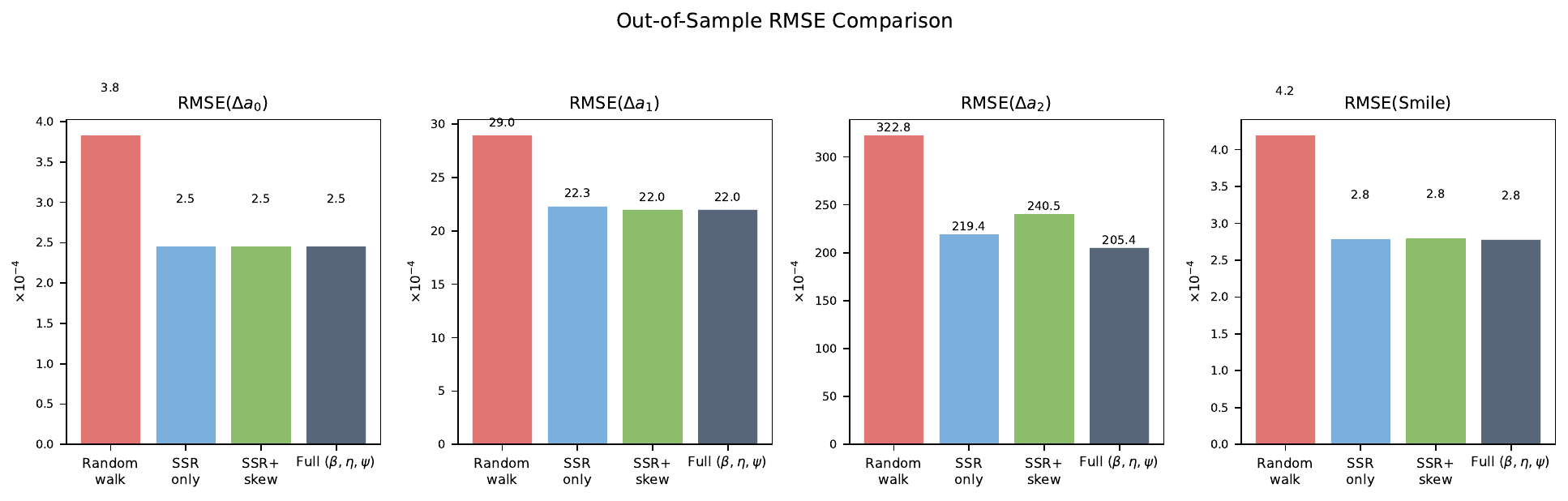}
\caption{Out-of-sample RMSE ($\times10^{-4}$) across four metrics and
four model variants from the block-residual bootstrap simulation
($S=500$ trials, 1M DGP). By construction RMSE($\Delta a_0$) is
identical for all transport models; differences appear in
RMSE($\Delta a_1$), RMSE($\Delta a_2$), and RMSE(Smile), where the
full model achieves the lowest curvature and smile error.}
\label{fig:rmse-comparison}
\end{figure}

\begin{figure}[!htbp]
\centering
\includegraphics[width=\textwidth]{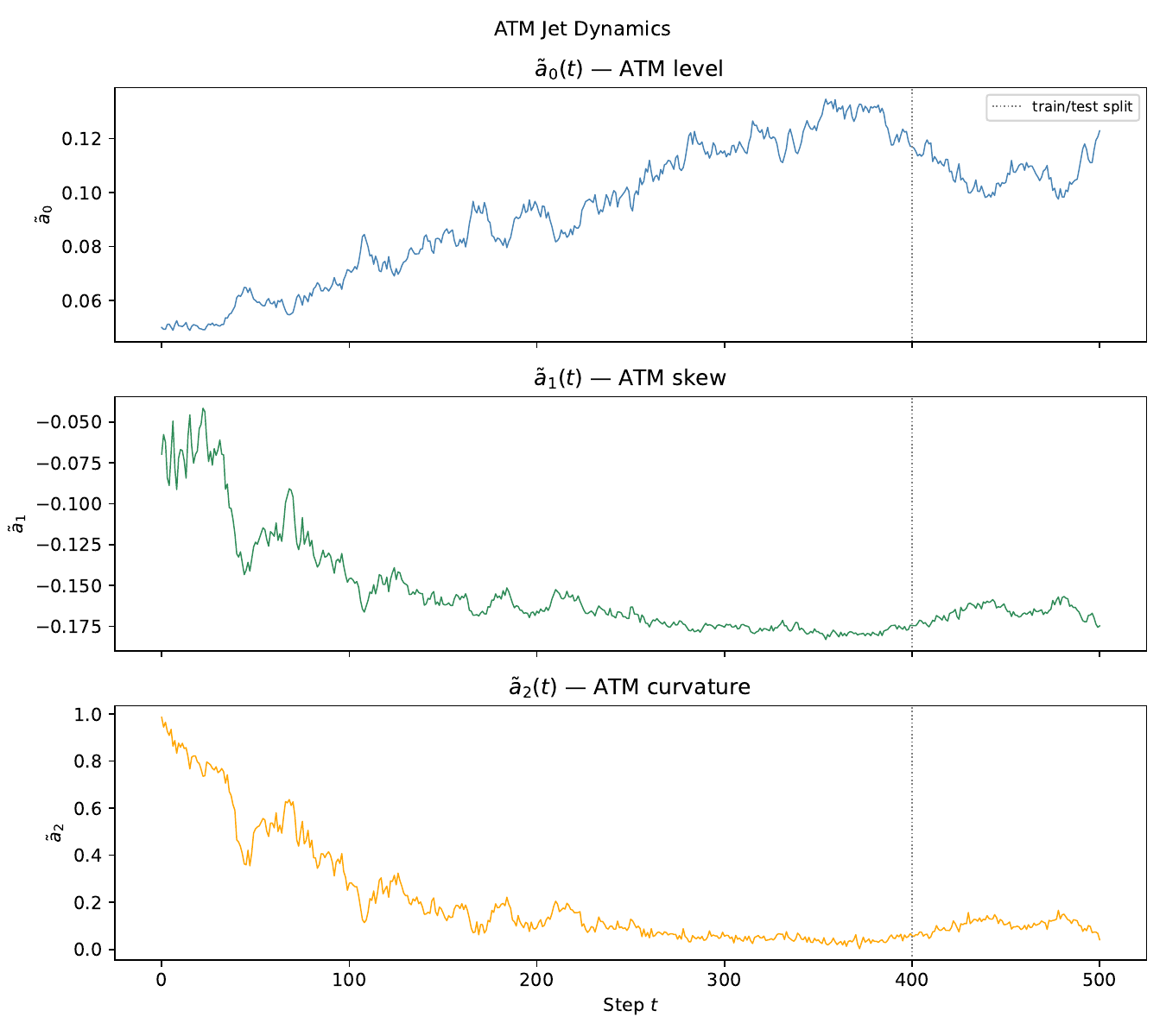}
\caption{ATM variance jet dynamics along a single simulated spot path
($N=500$ steps). Top to bottom: ATM variance level $a_0$, ATM skew
$a_1$, ATM curvature $a_2$. The coupled evolution is consistent with the
jet hierarchy \eqref{eq:jet-three}.}
\label{fig:atm-jet-dynamics}
\end{figure}

\begin{figure}[!htbp]
\centering
\includegraphics[width=0.70\textwidth]{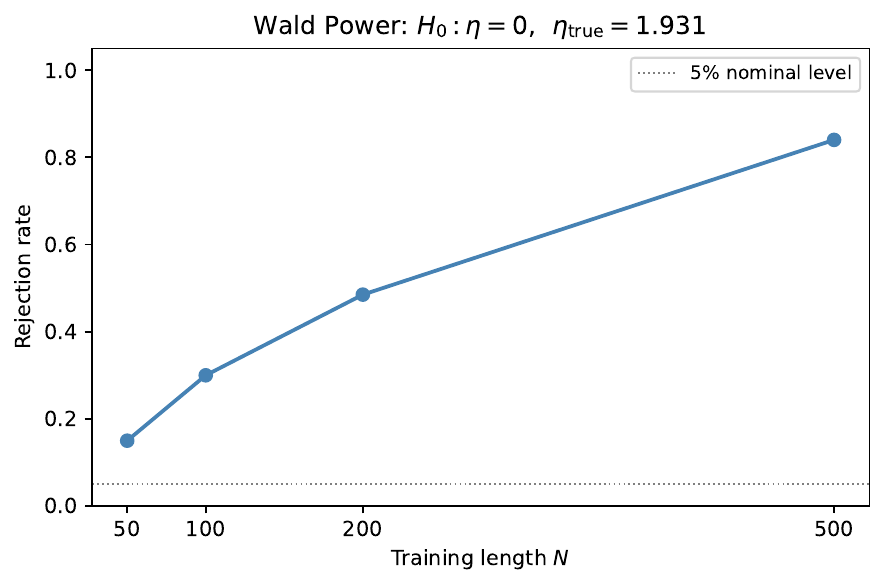}
\caption{Statistical power of the Wald test $H_0:\eta=0$ as a function of
sample size $N\in\{50,100,200,500\}$, based on $S=200$ replications
with $\eta=1.931$ (1M SPX calibration, pairs-bootstrap SE). Rejection
rate is $0.14$ at $N=50$, $0.28$ at $N=100$, $0.46$ at $N=200$, and
$0.81$ at $N=500$. Low power at short $N$ reflects weak identification
of $\eta$ at 1M from the near-zero ATM skew.}
\label{fig:power-curve}
\end{figure}

\FloatBarrier
\section{Extensions}
\label{sec:extensions}

\subsection{Term-Structure Transport}
The empirical results of Section~\ref{sec:spx-empirical} show that the
identified transport coefficients exhibit a non-trivial term structure:
$\hat\beta$ decays monotonically, $\hat\eta$ crosses zero between 6M and 9M,
and $\hat\psi$ decays toward zero (Table~\ref{tab:jet-sequential}). In the present work this
maturity dependence is accommodated by allowing the locally estimated
transport coefficients to vary with $T$ (that is, $v=v(k,T)$) and
estimating them at each tenor independently. The dynamics governing
transport in the maturity direction are not modeled explicitly.

Mathematically, the current transport equation $\partial_u w = v(k,T)\,\partial_k w$
moves the surface in the $k$ direction only: the characteristics satisfy
$\dot k_s = -v(k_s, T)$ with maturity $T$ held fixed along each path.
The natural extension is to allow $T$ to evolve as well, replacing the
single-component velocity field with a pair $(v, \mu)$ acting jointly on
the $(k,T)$ surface manifold. The characteristics then satisfy the coupled
system
\[
    \dot k_s = -v(k_s,T_s),
    \qquad
    \dot T_s = -\mu(k_s,T_s),
\]
and the transport equation becomes
\[
    \partial_u w = v(k,T)\,\partial_k w + \mu(k,T)\,\partial_T w.
\]
A $T$-affine ansatz $v=\gamma_0+\gamma_1 T$ for the spot component captures
the empirical maturity-dependence of the SSR and is the implied-variance
counterpart of the term-structure program of Schweizer and
Wissel~\cite{SchweizerWissel2008} and the tangent-model formulation of
Carmona and Nadtochiy~\cite{CarmonaNadtochiy2011}.

The transport geometry developed in this paper is intentionally local. The transport dynamics are characterized through the local jet expansion of the transport vector field around a reference strike, and the arbitrage-preserving results established above apply to the corresponding local transport flow for sufficiently small spot perturbations.

The present work does not construct a globally defined transport vector field over the entire strike--maturity domain, nor does it address the compatibility of local jet representations centered at different strikes. Consequently, we do not claim a global arbitrage-preservation theorem. The construction of a globally compatible transport field, together with developing an arbitrage-free transport theorem for the coupled $(k,T)$
vector field $(v,\mu)$ is a substantially harder problem than the one solved
in Section~\ref{sec:arb-preservation}: the maturity component $\mu$ couples
directly into the calendar-density equation~\eqref{eq:q-transport}, so
calendar and butterfly preservation must be established jointly under the
two-dimensional characteristic flow. This coupled global flow theory is
left for future work.

\subsection{Rigorous Differential Geometry of $\mathcal{A}^\circ$}
\label{subsec:rigorous-geometry}
The geometric language used throughout this paper: tangent spaces,
transport vector fields, flows serves as an organizing framework
for the analytic results. All theorems are established by PDE and ODE
methods; the geometric terms describe structure that is present in those
arguments without requiring a formal manifold construction.
A complete functional-analytic foundation, equipping $\mathcal{A}^\circ$
with a Banach or Fr\'echet manifold structure and verifying that the
transport operators are smooth sections of the tangent bundle, is a
natural and tractable extension of the present work.

A natural starting point is to work on compact strike-maturity domains
$[-L,L]\times[T_1,T_2]$ with uniform positive margins, i.e.\ requiring
$w\ge\delta$, $\partial_T w\ge\delta$, and $g[w]\ge\delta$ for some
$\delta>0$. On such domains the three strict inequalities defining
$\mathcal{A}^\circ$ cut out an open subset of the relevant $C^{2,1}$ or
smoother function space, and the tangent space at any
$w\in\mathcal{A}^\circ$ is the ambient function space.
Extending this construction to the full noncompact domain
$\mathbb R\times\mathbb R_+$ requires an appropriate weighted topology
(e.g.\ a weighted Sobolev space $H^s_w$ for $s>5/2$) in which the
uniform lower bounds are built into the norm; in the Fr\'echet topology
of $C^\infty(\mathbb R\times\mathbb R_+)$ the compact-open convergence
does not prevent tail violations of the strict inequalities, so
openness is not automatic there.
The transport vector field $\mathcal{D}_w = v\,\partial_k w$ would then
need to be shown to define a smooth vector field on $\mathcal{A}^\circ$
(i.e.\ a smooth section of the trivial tangent bundle), and the associated
flow $u\mapsto w_u$ to be a smooth one-parameter family of diffeomorphisms
of $\mathcal{A}^\circ$, locally under (V1)--(V3) and globally under
additional conditions.

The program of Filipovi\'c and
Teichmann~\cite{FilipovicTeichmann2004} for the
analogous HJM geometry of yield-curve spaces provides a template:
their analysis of invariant manifolds and finite-dimensional realizations
in the yield-curve setting has a natural counterpart for variance
surfaces.
Specifically, for the variance-surface setting:
\begin{enumerate}[label=(\roman*)]
\item \emph{Manifold structure.} Prove that $\mathcal{A}^\circ$ is an
  open submanifold of $C^{2,1}(\mathbb R\times\mathbb R_+)$ under an
  appropriate Fr\'echet or Banach topology; identify the smooth structure
  induced by the constraints.
\item \emph{Vector fields.} Show that $v\mapsto v\,\partial_k$ defines a
  smooth vector field on $\mathcal{A}^\circ$ for $v$ in an appropriate
  function space; characterize the Lie bracket structure of admissible
  transport fields.
\item \emph{Finite-dimensional realization.} Identify conditions under
  which the transport flow preserves a finite-dimensional submanifold of
  $\mathcal{A}^\circ$ (e.g.\ the SVI or SSVI family), providing a
  rigorous foundation for parametric smile models.
\item \emph{Global flows.} Extend the local admissibility result of
  Theorem~\ref{thm:local-arb-preserving} to a global one-parameter group
  of diffeomorphisms of $\mathcal{A}^\circ$, removing the ``local'' caveat.
\end{enumerate}

\subsection{Classification of Admissibility-Preserving Operators}
The transport operators studied in this paper are of order at most two
with non-negative second-order coefficient (class~\eqref{eq:transport-with-diffusion}).
The Hadamard obstruction to the $D^2<0$ case and the linearization
obstruction for order $N\ge3$ are described in
Section~\ref{subsec:order-of-transport}; neither constitutes a complete
classification theorem. A full characterization of all linear operators $L$
such that the flow $\partial_u w=Lw$ preserves $\mathcal{A}^\circ$ remains
open. The $N\ge3$ case requires controlling the $O(u^2)$ remainder in
the instantaneous-rate argument of Remark~\ref{rem:higher-order-obstruction};
the $w$-dependent case requires extending the parabolic maximum-principle
argument to the quasilinear operator class identified in
Remark~\ref{rem:admissible-class}.

\subsection{Stochastic Transport}
For a diffusion $dF_t/F_t=\nu_t\,dW_t$, combining
Proposition~\ref{prop:stochastic-ito} with the transport equation gives
\begin{equation}
    dw(k,T)
    =
    v\,\partial_k w\!\left(\nu_t\,dW_t-\tfrac{1}{2}\nu_t^2\,dt\right)
    +\tfrac{1}{2}\nu_t^2\bigl(v_u\,\partial_k w
    +v\,v_w\,(\partial_k w)^2
    +v\,\partial_k v\,\partial_k w
    +v^2\,\partial_{kk}w\bigr)\,dt,
    \tag{10.1}\label{eq:stochastic-flow}
\end{equation}
where $v_u=\partial v/\partial u\big|_{k,T,w}$, $v_w=\partial v/\partial w$,
and $\partial_k v$ is the total $k$-derivative of $v$
(see Proposition~\ref{prop:stochastic-ito}).
When $v$ has no explicit $u$- or $w$-dependence ($v_u\equiv v_w\equiv 0$),
\eqref{eq:stochastic-flow} reduces to the admissible second-order
form~\eqref{eq:transport-with-diffusion} with $D^2=\nu_t^2 v^2\ge 0$.
For $w$-dependent $v$, the term $\tfrac{1}{2}\nu_t^2\,v\,v_w\,(\partial_k w)^2$
is an additional quasilinear drift outside the
class~\eqref{eq:transport-with-diffusion}; its effect on admissibility
is discussed in Remark~\ref{rem:admissible-class}.
This places the stochastic-forward transport theory within the
dynamic-local-volatility framework of Carmona and
Nadtochiy~\cite{CarmonaNadtochiy2009}.
The present setting (smooth $w$ evaluated at a scalar log-forward
$u_t$) is simpler than the path-functional framework of Dupire and
Cont--Fourni\'e~\cite{ContFournie2013}, to which the present derivation
reduces when $v_w\equiv 0$.

\subsection{Connection to Martingale Optimal Transport}
Each sufficiently regular surface $w\in\mathcal{A}$ induces, via the
Kellerer--Strassen theorem, a peacock of risk-neutral marginals and hence
a non-empty family of martingale couplings. Transport flows
$u\mapsto w_u$ satisfying Theorem~\ref{thm:local-arb-preserving}
are therefore admissibility-preserving deformations within this family,
providing a dynamic counterpart to the static robust-pricing framework
of Beiglb\"ock, Henry-Labord\`ere and Penkner~\cite{BeiglbockHenryLaborderePenkner2013}. A rigorous development, establishing the precise bijection between $\mathcal{A}$ and the MOT
feasibility cone and characterizing which transport flows correspond to
feasibility-preserving deformations in the multi-period MOT sense, is left for future work.


\section*{Disclaimer}
This paper was prepared for informational purposes with contributions from the Quantitative Trading \& Research team of JPMorgan Chase \& Co. This paper is not a product of the Research Department of JPMorgan Chase \& Co. or its affiliates. Neither JPMorgan Chase \& Co. nor any of its affiliates makes any explicit or implied representation or warranty and none of them accept any liability in connection with this paper, including, without limitation, with respect to the completeness, accuracy, or reliability of the information contained herein and the potential legal, compliance, tax, or accounting effects thereof. This document is not intended as investment research or investment advice, or as a recommendation, offer, or solicitation for the purchase or sale of any security, financial instrument, financial product or service, or to be used in any way for evaluating the merits of participating in any transaction.

\end{document}